%% file: main.tex
\documentclass[a4paper,11pt]{article}
\usepackage{jheppub} 
\usepackage{lineno}
\usepackage{comment}
\usepackage{xcolor}
\usepackage{xfrac}
\usepackage{amsmath}
\usepackage{amsfonts}
\usepackage{amsthm}
\usepackage{thmtools}
\usepackage{thm-restate}
\usepackage{amssymb}
\usepackage{slashed}
\usepackage{accents}
\usepackage{proof-at-the-end}
\usepackage{tensor}
\usepackage{graphicx}

\usepackage{tikz}
\usepackage{tikz-feynman}
\tikzfeynmanset{warn luatex=false}
\usetikzlibrary{shapes.geometric, arrows, patterns, decorations.markings, arrows.meta}
\usetikzlibrary{calc}
\tikzstyle{top} = [rectangle, rounded corners, minimum width=3cm, minimum height=1cm, text centered, text width=3cm, draw=black, fill=red!30]
\allowdisplaybreaks

\input{pics}

\definecolor{dbrown}{HTML}{C70039}

\def\bnon{\begin{equation*}}
\def\enon{\end{equation*}}
\def\bnu{\begin{equation}}
\def\enu{\end{equation}}
\def\bqa{\begin{eqnarray*}}
\def\eqa{\end{eqnarray*}}
\def\blsd{\begin{enumerate}}
\def\elsd{\end{enumerate}}
\def\blst{\begin{itemize}}
\def\elst{\end{itemize}}
\def\eps{\varepsilon}
\def\ide{\mathbb{I}}

\def\lgen{\mathcal{L}^{\text{gen}}}
\def\lgop{\mathcal{L}^{\text{gen}}_{\text{op},N}}

\def\dij{\delta^{IJ}}
\def\zab{Z_{AB}^I}

\def\tr{\text{Tr}}

\declaretheorem[name=Theorem]{theo}
\declaretheorem[sibling=theo,name=Corollary]{cor}

\preprint{PUBDB-2026-00812}

\title{\boldmath Connecting Supersymmetry to Non-Supersymmetric theories: the Gross-Neveu-Yukawa example}

\author[a]{Mrigankamauli Chakraborty}
\emailAdd{mrigankamauli.chakraborty@desy.de}
\affiliation[a]{II. Institute for Theoretical Physics, Hamburg University\\
Luruper Chaussee 149, D-22761 Hamburg, Germany}

\author[a]{and Sven-Olaf Moch}
\emailAdd{sven-olaf.moch@desy.de}
\affiliation

\begin{abstract}{
We construct a generalized Lagrangian that unifies the Gross–Neveu–Yukawa, Nambu–Jona-Lasinio–Yukawa, and Wess–Zumino models, allowing for arbitrary scalar and fermion flavors in $D$-dimensional regularization. 
This framework clarifies how emergent supersymmetry arises at critical points and reveals structural connections between these theories. 
The unified formulation provides additional supersymmetry Ward identities that simplify loop calculations, even for non-supersymmetric models. 
As an application, we show how this technique can reduce the computational cost of determining anomalous dimensions of twist‑two operators.
}\end{abstract}

\begin{document}

\maketitle
\if{1==0}
\newpage

{\color{blue}
 Hi Sven. Thanks for the draft paper which I've read through. Apologies
for the delay but it arrived when I had a class test to mark and administer
at the start of the week. I've enjoyed reading through the analysis. One
thing is that it touchs on some of my thinking that I arrived at recently in
relation to the large N expansion of the chiral Gross-Neveu which is the
fermion sector of the Wess-Zumino model. Part of that will be briefly
included in my Bayreuth talk.

Concerning the general Lagrangian set up the component Wess-Zumino
Lagrangian has complex scalars $\phi$ and $\bar{\phi}$ which derived from the
two chiral superfields. My reference point on this is the Avdeev, Gorishny,
Kamenshchik and Larin paper Phys Lett B117 (1982), 321.
So in figures 2
and 3 the graphs with arrows would have arrows on all the propagators.

{\color{teal}$\implies$ As the figures 2 and 3 come from the generalised Lagrangian that has real scalars, we do not have the arrows on the scalar propagators, whereas the Avdeev et al paper uses complex scalar fields and hence has directed propagators. If they also chose two real scalars instead of one complex, they would also have no arrows.}

{\color{red}
Added the refererence and footnote~\ref{ft:scalars} to the text and captions of figure 2 and 3.
}

This doesn't invalidate your reasoning. If two lines at a vertex have arrows
directed to the interaction point that is sufficient for the reduction in the
graphs contributing to a Green's function. In other words the only graphs
that can arise are those with even edge subgraphs.

On the $\gamma^5$ discussion on page 15 the right graph of figure 3 is a
vertex graph. So strictly one would not need to evaluate it in the
Wess-Zumino model. Instead it would be the 2-point counterpart given
by joining the three open lines at a point and then cutting a boson line
to give a six loop graph which may be nonplanar. So $\gamma^5$ may not
be a problem before six loops unless there is another topology that is a
counterexample to this.

{\color{teal}$\implies$ This is a very excellent point, we have to look into this. Priority 2.}

Related to this when I was looking at a different
paper I came across an article by Kamenshchik which was produced
after their four loop Wess-Zumino paper where it is claimed that the
supersymmetric Ward identity could be violated starting from five loops.
The reference is Theoretical and Mathematical Physics 55 (1983) 431
and I think it does some of the analysis in the component Lagrangian. It's
not easy to decipher what has been derived there but it looks relevant for
that part of your paper.

{\color{red}
Added the refererence and footnote~5 to the text.
}

In section 4.4 where you determine the conditions for the two anomalous
dimensions to be equal there is an additional condition that has to be
satisfied for emergent supersymmetry. The other property is that both
beta-functions should coincide for the two sets of $n_s$ and $n_f$ values. While
this is a moot point it should follow from the first condition on page 20. What
I'm worried about is whether you have completely established the emergent
supersymmetry. For instance do the Wess-Zumino renormalisation group
equations appear under one of the two emergent conditions? This may not
be possible because of scheme differences beyond two loops. However I
couldn't see it for one and two loops from equations (4.6) and (4.7). On a
separate but related point it would be useful to have the four loop anomalous
dimensions recorded somewhere and available in electronic form for others
to use. The other item for ensuring the emergent supersymmetry is credible
is that the non-renormalisation of the coupling constant appears. In other
words the emergent beta-function is proportional to the field anomalous
dimensions. I couldn't see any comment about these points.

{\color{teal}$\implies$ Priority 1.}
{\color{black}
\begin{itemize}
    \item We did not calculate the 4 loop quartic beta for IR rearrangement problems, but we can show that the 4 loop yukawa beta is exactly equal to the ones calculated in the zerf papers. {\color{teal} Added.}
    \item non-renormalisation checked. {\color{teal} Added.}
    \item how far exactly to go into the RGE. {\color{teal} What about this?}
    {\color{red} It is OK as is.}
\end{itemize}}

At the end of section 4.4 another point you could make is that using the
generalized Lagrangian you have in fact shown there are no other emergent
supersymmetric theories.

The relation of (4.14) can be seen I think with an adaptation of the cut and
glue technique of 1004.1153
If the open ends of each 2-point graph are
closed with a propagator then the same topology occurs. The factor of 2
may come from a symmetry factor.

{\color{red}
Added the refererence and footnote~6 to the text.
}

I had a read through the operator formalism of section but haven't any
immediate comments. It looks as if things are going in the right direction
but of course the big test is that of dealing with supersymmetric gauge
theories. The Wess-Zumino emergent supersymmetry case may have
some hidden subtlety that allows the operator analysis to proceed. 

I hope these comments make sense but I may not have got the intricate
detail that I'm trying to convey phrased correctly or clearly. Apologies if
so and anything is unclear. We certainly need to have a long discussion
in Bayreuth, which I'm looking forward to, as there are a lot of things that
could be done and followed up on in this area.

Best, John.
}
\newpage
\fi

\section{Introduction}

Supersymmetry (SUSY) is a theoretical symmetry between bosons and fermions, linking particles with their superpartners. 
Originally developed in high-energy physics, SUSY aims to unify matter and forces, see, e.g.~\cite{Martin:1997ns}.
Despite lacking experimental confirmation, its mathematical elegance and potential to solve fundamental problems continue to inspire extensive research~\cite{ParticleDataGroup:2024cfk}. 
SUSY can also emerge as an effective symmetry in quantum field theory and condensed matter physics, particularly at low energies or critical points where systems exhibit enhanced symmetries. 
This emergent SUSY provides a valuable bridge between high-energy concepts and condensed matter phenomena, offering insights into strongly correlated systems and quantum phase transitions, and often simplifying otherwise challenging calculations.

The Gross–Neveu–Yukawa (GNY) models serve as a rich playground for exploring these ideas. Originally introduced to describe interacting fermionic systems with spontaneous symmetry breaking, GNY models couple fermionic fields to scalar order parameters through Yukawa interactions. They are central to the study of quantum criticality, especially in lower-dimensional systems where fluctuations are enhanced. 
Remarkably, under certain conditions, such as at critical points of continuous phase transitions, GNY models can exhibit emergent SUSY. 
In these regimes, the theory becomes effectively supersymmetric, dynamically relating its fermionic and bosonic degrees of freedom. 

In this work, we develop a more rigorous framework for studying emergent SUSY, extending beyond strictly four-dimensional spacetime to regularized $D$-dimensional settings. 
Our approach constructs a unified formulation in terms of a generalized Lagrangian that encompasses a broad family of theories, including the GNY, Nambu-Jona-Lasinio-Yukawa (NJLY), and Wess–Zumino (WZ) models. 
This Lagrangian features a general number $(n_s, n_f)$ of scalar and fermion flavors, with the appropriate flavor structures, as well as the most general ternary Yukawa and quartic scalar interactions. 
By embedding all these models into a single theoretical structure, we clarify their interrelations and illuminate the mechanisms underlying emergent SUSY across different dimensions.

This unified Lagrangian also has practical applications, particularly in optimising perturbative calculations. Previously, models such as GNY and NJLY appeared unrelated; in contrast, both their supersymmetric limits and their non‑supersymmetric forms now emerge as special cases of the generalized Lagrangian. 
This structure allows to use additional SUSY Ward identities, derived from the supersymmetric limit, even when the target theory is itself non‑supersymmetric, thereby simplifying otherwise difficult Feynman diagram computations.
 This optimisation is especially useful for calculating anomalous dimensions of twist‑two operators, which play a role similar to those in quantum chromodynamics (QCD) governing the scale dependence of parton distributions. 
At high loop orders, direct computations for general spin $N$ becomes intractable, requiring many individual fixed-$N$ values to be evaluated. 
Emergent SUSY can substantially reduce the number of integrals that must be computed, making high‑loop operator renormalisation in the GNY model more tractable.

In Sec.~\ref{sec:bg}, we provide a concise introduction to the GNY, NJLY, and WZ models. Section~\ref{sec:genLagrangian} presents the generalized Lagrangian, with particular emphasis on the flavour structure, Clifford algebra, trace identities, and the regularization in $D = 4 - 2\eps$ dimensions.
All diagrammatic computations relevant for the renormalisation of the models considered are discussed and summarised in Sec.~\ref{sec:computations}. 
Based on these results, we then analyse the conditions for emergent supersymmetry as functions of $n_s$ and $n_f$.
The application to optimizing operator renormalisation is presented in Sec.~\ref{sec:omes}.
We summarize and present an outlook in Sec.~\ref{sec:concl}, while some technicalities on Ward identities and light-cone SUSY are deferred to 
Apps.~\ref{app:wi} and 
\ref{app:lightconesusy}.

\if{1=0]

\begin{itemize}
    \item Seva's thesis \cite{Chestnov:2019bed}
\end{itemize}

\fi

\section{Background}
\label{sec:bg}

\subsection{Gross-Neveu-Yukawa models}\label{gny}
The original Gross–Neveu model \cite{Gross:1974jv} was introduced as a two-dimensional theory of massless fermions with quartic self-interactions, designed to provide a simple, renormalisable setting for studying dynamical symmetry breaking in asymptotically free field theories. Owing to its relative solvability, it served as a laboratory for exploring non-perturbative phenomena, for instance the anticipated spontaneous breaking of chiral symmetry by the QCD vacuum.

The Gross-Neveu-Yukawa (GNY) model is the bosonized formulation of the Gross‑Neveu model, expressed in four dimensions, with the bare Lagrangian
\begin{equation}
    \label{eq:GNYbare}
\mathcal{L}^{\text{bare}}_{\text{GNY}} = \frac{1}{2}\partial_\mu \phi_0\partial^\mu \phi_0+ i(\bar\Psi^A)_0\slashed{\partial}(\Psi_A)_0-g_0\,\phi_0(\bar\Psi^A)_0(\Psi_A)_0 - \frac{\lambda_0}{4!}\phi_0^4
\, ,
\end{equation}
where $\phi$ is a real scalar and $\Psi_A$ is a Dirac fermion with flavour index $A$, subject to a global $SU(n_f)$ flavour symmetry. The subscripts $(_0)$ denote bare quantities. 
The two coupling constants $a$ and $\lambda$ mix under renormalisation, and are renormalised in the $\overline{\text{MS}}$-scheme via the matrix equation
\begin{equation}\label{couplings}
    \begin{bmatrix}
        \lambda_0 \\
        a_0
    \end{bmatrix} =(4\pi)^2\mu^{2\eps}
    \begin{bmatrix}
        Z_{\lambda\lambda} & Z_{a\lambda} \\
        Z_{\lambda a} & Z_{aa}
    \end{bmatrix}
    \begin{bmatrix}
        \lambda \\
        a
    \end{bmatrix}
\, ,
\end{equation}
where $a_0=g_0^2$, and $\mu$ is the usual $\overline{\text{MS}}$ renormalisation scale with the implicit $4\pi e^{-\gamma_E}$ factor.  Furthermore, 
\begin{equation}
 Z_{\lambda\lambda}\, , Z_{aa} = 1 +\mathcal{O}(\lambda,a)\, ,\qquad Z_{a\lambda} = \mathcal{O}(a)\, ,\qquad Z_{\lambda a} = \mathcal{O}(\lambda)\;,
\end{equation}
signifying that the off-diagonal terms have no tree-level contribution and are thus purely from quantum corrections.
The renormalisation group functions are defined as:
\begin{equation}
\begin{aligned}
    \gamma_\phi &= \frac{d\ln Z_\phi}{d\ln\mu}\, ,\qquad
    \gamma_\psi= \frac{d\ln Z_\psi}{d\ln\mu}\;,
    \\
    \beta_a &= \frac{da}{d\ln\mu}\, , \quad\qquad\!\!
    \beta_\lambda = \frac{d\lambda}{d\ln\mu}\;.
\end{aligned}
\end{equation}
Similar to this we have the NJLY model with complex scalars:
\begin{equation}
\label{eq.NJLYbare}
\mathcal{L}^{\text{bare}}_{\text{NJLY}} = \partial_\mu \phi^\dagger\partial^\mu \phi- i\bar\Psi^A\slashed{\partial}\Psi_A+g\,(\phi\bar\Psi^AP_+\Psi_A+\phi^\dagger\bar\Psi^AP_-\Psi_A) + \frac{\lambda}{4}(\phi^\dagger\phi)^2
\, ,
\end{equation}
with the projections $P_{\pm} = \frac{1}{2}(1\pm\gamma_5)$, and the subscripts $(_0)$ are dropped for brevity.

These models lie in the same universality class as some chiral Ising models, and hence their critical point behaviour is of much interest in condensed matter physics. 
The pertubative renormalisation has been achieved up to four loops in both models \cite{Zerf:2017zqi} and also recently at five loops for the GNY model \cite{Gracey:2025aoj}.

Previous studies \cite{Lee:2006if, Grover:2013rc, Zerf:2016fti} have shown that supersymmetry can emerge in various models through both lattice and perturbative methods. 
Specifically, in the GNY model, SUSY appears at a fermion flavor number of  $n_f=1/4$, corresponding to an $\mathcal{N}=1$ three-dimensional WZ model. 
Similarly, in the Nambu–Jona-Lasinio-Yukawa (NJLY) model, emergent SUSY is observed at $n_f=1/2$, which relates to an $\mathcal{N}=1$ four-dimensional WZ model. 
By dimensional reduction, this also connects to the $\mathcal{N}=2$ WZ model in three dimensions.

\subsection{Wess-Zumino models}

The WZ model is based on the massless chiral multiplet. 
The four-dimensional $\mathcal{N}=1$ multiplet $\{\phi,\psi,F\}$ comprises of one complex scalar and one Weyl fermion on shell. 
The infinitesimal SUSY transformations act on the fields as:
\begin{equation}
\begin{aligned}
    \delta_{\epsilon,\bar\epsilon}\,\phi &= \sqrt{2}\epsilon^\alpha\psi_\alpha\, ,\\ 
    \delta_{\epsilon,\bar\epsilon}\,\psi_{\alpha} &= \sqrt{2}i(\sigma^\mu\bar\epsilon)_{\alpha}\partial_\mu\phi -\sqrt{2}\epsilon_{\alpha}F\, ,\\
    \delta_{\epsilon,\bar\epsilon}\, F& = \sqrt{2}i\partial_\mu\psi\sigma^\mu\bar\epsilon
\, , 
\end{aligned}   
\end{equation}
and taking the complex conjugate yields the corresponding transformations of the conjugate fields. 
The fields $F,F^\dagger$ are auxilliary fields without kinetic terms, and can be expressed in terms of the other fields upon imposing equations of motion. The bare Lagrangian for the four-dimensional WZ model is then given as:
\begin{equation}\label{eq:wz}
    \mathcal{L}^{\text{bare}}_{\text{WZ}} = \partial_\mu \phi^\dagger_0\partial^\mu \phi_0+ i\bar\psi_0\bar\sigma^\mu\partial_\mu\psi_0-g_0\,\phi_0\bar\psi_0\bar\psi_0 -g_0\,\phi^\dagger_0\psi_0\psi_0 -g^2_0(\phi^\dagger_0\phi_0)^2
\, ,
\end{equation}
which transforms as a total derivative under a SUSY transformation, and therefore leaves the action invariant.
Notice that, because of SUSY both the scalar and Yukawa vertices run identically with $g_0$, while the anomalous dimensions of the fields are equal,
\begin{equation}\label{susyanomf}
    \gamma_\phi= \gamma_\psi\, .
\end{equation}
The fields $\{\phi,\psi,F\}$ can be combined into one chiral superfield $\Phi$, which reads:
\begin{equation}
  \Phi(x,\theta, \bar\theta) = \phi -\sqrt{2}\theta\psi +i\theta\sigma^\mu\bar\theta\partial_\mu\phi - \theta\theta F -\frac{i}{\sqrt{2}}\theta\theta\partial_\mu\psi\sigma^\mu\bar\theta - \frac{1}{2}\theta\theta\bar\theta\bar\theta\Box\phi  
\, .
\end{equation}
Similarly, the anti-chiral field $\bar\Phi$  is obtained by taking the Hermitian conjugate. 
Then the Lagrangian in eq.~\eqref{eq:wz} is given by the integration over superspace of the following superfield Lagrangian:
\begin{equation}
    \int d^2\theta\, d^2\bar\theta\;\bar\Phi\Phi - \int d^2\theta \;g\Phi^3 - \int d^2\bar\theta \;g\bar\Phi^3 
\end{equation}
up to possible total derivatives. 
The superpotential, that gives us the interaction terms in $\mathcal{L}_{\text{WZ}}$, is defined as 
\begin{equation}
V(\Phi)=g\Phi^3\, ,    
\end{equation}
and, similarly, for the Hermitian conjugate. 
A non‑renormalisation theorem holds for the superpotential constrained by holomorphy \cite{Grisaru:1979wc}, which implies that 
\\
\begin{equation}\label{susyanomb}
    Z_gZ^{3/2}_\phi = 1\implies\beta_a = -3\gamma_\phi = -3\gamma_\psi
\, ,
\end{equation}
where $a=g^2$ like the previous section. 
We also encounter the three-dimensional WZ model as an emergent SUSY of the GNY Lagrangian, where the multiplet $\{\varphi,\chi,F\}$ consisting of a real scalar and a three-dimensional Majorana fermion on shell: 
\begin{equation}\label{eq:3dwz}
    \mathcal{L}^{\text{bare}}_{\text{3D-WZ}} = \frac{1}{2}\partial_\mu \varphi_0\partial^\mu \varphi_0+\frac{1}{2}i\bar\chi_0\gamma^\mu\partial_\mu\chi_0-\frac{1}{2}g_0\,\varphi_0\bar\chi_0\chi_0 -\frac{1}{8} g^2_0\varphi^4
\, .
\end{equation}
Here the fermions and $\gamma$ matrices all correspond to their three-dimensional representations. 
The corresponding anomalous dimensions of the fields are equal and $\beta$ functions from the Yukawa and quartic vertices run identically. However, due to the lack of holomorphy, the corresponding non-renormalisation theorem for the superpotential does not exist. Hence eq.~\eqref{susyanomf} holds but eq.~\eqref{susyanomb} does not hold in the three-dimensional WZ model.

\section{The Generalised Lagrangian}
\label{sec:genLagrangian}

Here, we aim to derive a generalised Lagrangian that unifies the GNY, NJLY, and WZ models. This Lagrangian should contain an arbitrary number $(n_s,n_f)$ of scalar and fermion flavours, together with the appropriate flavour structures. We begin by writing the most general form of a theory involving scalars and fermions, including a ternary Yukawa interaction and a quartic scalar coupling:
\begin{align}
\label{lgen}
 \mathcal{L}_{\text{gen}} &= \frac{1}{2}\partial_\mu \phi^I\partial^\mu \phi_I+ i\bar\psi^A\bar\sigma^\mu\partial_\mu\psi_A-g(\bar Z^I)^{AB}\phi_I\psi_A\psi_B - g( Z^I)_{AB}\phi_I\bar\psi^A\bar\psi^B 
 \nonumber \\& \qquad\qquad
 - \frac{\lambda}{4!}M^{IJKL}\phi_I\phi_J\phi_K\phi_L
\, ,
\end{align}
where $\phi_I$ are real scalars carrying a flavour index transforming under the fundamental representation of a global $SO(n_s)$ flavour symmetry. 
Since this is a real representation, the raised field $\phi^I$ transforms in the same representation. 
Indices are raised and lowered with the invariant tensor $\delta^{IJ}$ and $\delta_{IJ}$, and the identity operator is given by $\delta^I_{\ J}=\delta^{IK}\delta_{KJ}$. 

$\psi_A$ is a Weyl spinor transforming under a fundamental global $SU(n_f)$ flavour symmetry, 
while its conjugate field $\bar\psi^A$ transforms in the anti‑fundamental representation.
Since this is a complex representation, there is in general no invariant tensor that raises or lowers flavour indices. The only invariant tensor is the identity $\delta^A_{\ B}$, where the raised index $A$ is anti‑fundamental and the lowered index $B$ is fundamental.

$(\bar Z^I)^{AB}, (Z^I)_{AB}$,  and $M^{IJKL}$ are \emph{flavour elements}~\footnote{Note the distinction between flavour elements and flavour factors: the former are indexed entities while the latter are complex numbers obtained from traces of flavour elements.}. This generalised form of writing the Lagrangian is similar in appearance to previous works in the literature. For example, our flavour elements are analogous to the tensor structures used in \cite{Poole:2019kcm}. However, this is where our approach diverges from the literature: previous applications use an explicit matrix structure of the flavour elements for computation. In contrast, we keep the flavour elements abstract, and only impose minimal algebraic relations on them during computation. This is what allows us the flexibility to move across different models. 

At this stage, no algebraic constraints are yet imposed, except for the index structure of the flavour elements, indicating how they transform under the flavour symmetries. The Lagrangian is therefore a flavour singlet, since all flavour indices are fully contracted.

For Weyl spinors, we have $\psi_A\psi_B=\psi_B\psi_A$ and $\bar\psi^A\bar\psi^B=\bar\psi^B\bar\psi^A$, which implies 
\begin{equation}
(\bar Z^I)^{AB} = (\bar Z^I)^{BA}\quad,\quad ( Z^I)_{AB} = (Z^I)_{BA}\, .
\end{equation}
Similarly, the tensor $M^{IJKL}$ is fully symmetric under permutations of its indices.  
Hermiticity of the Lagrangian then imposes 
\begin{equation}
(\bar Z^I)^{AB} = \left((Z^I)_{BA}\right)^*\quad,\quad M_{IJKL}=M^\ast_{IJKL}\, .
\end{equation}
Thus, the flavour elements $Z^I, \bar Z^I$ are symmetric and Hermitian conjugates of each other, while $M^{IJKL}$ is real and symmetric in the scalar flavour indices.

\subsection{The flavour elements}

We now turn to the algebraic structure of the flavour elements. As a starting point, we consider the properties of the generalised Lagrangian.
We require the absence of \emph{evanescent flavour terms}~\footnote{Evanescent terms in a Lagrangian are those that do not exist in the classical case, but are added by quantum corrections in order to cancel divergences from loop diagrams.} 
in $\mathcal{L}_{\text{gen}}$. 
This requirement is motivated by the fact that none of our target theories (GNY, NJLY, or WZ) contain such terms. 
Consequently, the ultraviolet counterterms for propagators and vertices (with fermions represented by solid lines and scalars by dashed lines) must carry the same flavour structure as the corresponding tree‑level diagrams:
\begin{equation}\label{feynmanrules}
\begin{split}
    \glsclr \propto\delta^{IJ} \quad \glfer\propto&\delta_{A}^{\ B} \quad \glfour\propto M^{IJKL}\\
     \glyuk \propto(Z^I)^{AB} \quad\quad &\glyukconj \propto(\bar Z^I)_{AB}
\end{split}
\end{equation}
Note above that the fermion arrows converge at the Yukawa vertex rather than flowing along the same direction as is usually known. This is because we are using Weyl fermions instead of Dirac fermions.

It is sufficient to establish non‑evanescence only for the primitive diagrams of the theory, i.e. diagrams without sub‑divergences. This guarantees non‑evanescence at all loop orders, since every divergent correction to propagators or vertices contains primitive diagrams as subgraphs.

The requirement that no evanescent terms appear, therefore, imposes constraints on the flavour elements $(Z^I)_{AB}$ and $M^{IJKL}$. 
This gives rise to the following theorem:
\begin{theo}\label{thm:nonevanescence}
There are no evanescent flavour terms in the generalised Lagrangian $\lgen$ if and only if the $Z,\bar Z$ satisfy the following relation:
\begin{equation}
(Z^I)_{AC}(\bar Z^J)^{CB} + (Z^J)_{AC}(\bar Z^I)^{CB} = 2\dij\delta_A^{\ B}
\;.    
\end{equation}
The $Z^I,\bar Z^I$ are thus representations (possibly reducible) of the modified Clifford Algebra
\begin{equation}\label{eq:flavclifford}
    Z^I\bar Z^J+Z^J\bar Z^I = 2\delta^{IJ}\;,
\end{equation}
i.e. the same one satisfied by the Pauli matrices $\sigma^\mu,\bar\sigma^\mu$.
\end{theo}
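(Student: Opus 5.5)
Following the remark before the statement, I would reduce the claim to the primitive one‑loop divergent diagrams of $\lgen$. Every divergent subgraph at higher loops is built from these, so only they can generate new flavour structures. Using the Feynman rules \eqref{feynmanrules}, I would list the one‑loop corrections to each tree‑level Green's function and read off the flavour element each one produces:
\begin{itemize}
\item the scalar self‑energy, with a fermion loop, gives $\tr(Z^I\bar Z^J)$;
\item the fermion self‑energy, with a scalar and a fermion in the loop, gives $(Z^I)_{AC}(\bar Z_I)^{CB}$;
\item the Yukawa vertex correction, with one internal scalar and two Yukawa vertices on each side, gives $Z^J\bar Z^I Z_J$;
\item the quartic vertex, from a fermion box, gives $\tr(Z^I\bar Z^J Z^K\bar Z^L)$, symmetrised;
\item the quartic vertex, from scalar bubbles, gives $M^{IJMN}M_{MN}{}^{KL}$.
\end{itemize}
Non‑evanescence means that each of these must be proportional to the corresponding tree‑level structure $\delta^{IJ}$, $\delta_A^{\ B}$, $Z^I$ or $M^{IJKL}$, and this holds for arbitrary external flavour indices.

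For the ``if'' direction I would assume the modified Clifford relation \eqref{eq:flavclifford} and check each structure in turn, using the same manipulations as for Pauli matrices. Contracting the Clifford relation with $\delta_{IJ}$ gives $Z^I\bar Z_I=n_s\ide$, which settles the fermion self‑energy. Next, $Z^J\bar Z^I Z_J = Z^J(2\delta^I_J - \bar Z_J Z^I)\cdots$ reduces, by repeated anticommutation, to $(2-n_s)Z^I$ up to ordering conventions; this settles the Yukawa vertex. The traces $\tr(Z^I\bar Z^J)=n_f\dij$ and the four‑index trace reduce to combinations of $\delta$'s, exactly as $\tr(\sigma^\mu\bar\sigma^\nu\sigma^\rho\bar\sigma^\sigma)$ does. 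After symmetrisation the $\epsilon$‑like antisymmetric part drops and one gets $\propto\delta^{(IJ}\delta^{KL)}$. This is of the form of a quartic vertex, provided $M$ is taken to be $\delta^{(IJ}\delta^{KL)}$, which is then self‑consistent under the $M M$ contraction.

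For the ``only if'' direction I would use the fermion self‑energy, the Yukawa vertex and the scalar self‑energy together to force the anticommutator. Suppose $Z^I\bar Z^J+Z^J\bar Z^I = 2\dij\delta + X^{IJ}$ with $X^{IJ}$ symmetric and not identically zero. I would then argue that $X$ reappears in at least one primitive correction as a structure not proportional to tree level. The natural place to look is the Yukawa vertex $Z^J\bar Z^IZ_J$, and, more generally, two‑loop primitive vertex diagrams carrying $Z^J\bar Z^K Z^I \bar Z_J Z_K$‑type strings.

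The main obstacle is this converse, because the one‑loop constraints alone only fix contracted quantities. My plan would be as follows. First, use the scalar self‑energy and the $SO(n_s)$ covariance of the $\{Z^I\}$ (the symmetry of $\lgen$) to argue that $\sum_{I,J}$ structures of the form $Z^{(I}\bar Z^{J)}$ transform as a symmetric tensor of $SO(n_s)$. Its traceless part, if present, must be an independent flavour element. Such an element would then appear in the box contribution $\tr(Z^I\bar Z^J Z^K\bar Z^L)$ as an $SO(n_s)$ structure that cannot be absorbed by $\delta$'s or by a single symmetric $M$. Forbidding it forces the traceless symmetric part to vanish, and normalising the trace part yields exactly $2\dij\delta_A^{\ B}$. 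Finally, the identification with a (possibly reducible) representation of the Pauli‑type algebra is immediate from \eqref{eq:flavclifford}.
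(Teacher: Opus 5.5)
The ``if'' half of your plan is at the level of the paper, which simply appeals to the Pauli-matrix case. For instance, your manipulation correctly gives $Z^J\bar Z^IZ_J=(2-n_s)Z^I$.

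\textbf{Where the converse fails.} The step that breaks is the claim that the traceless part $X$ shows up in the box ``as a structure that cannot be absorbed by $\delta$'s or by a single symmetric $M$''. In $\lgen$ the quartic element $M^{IJKL}$ is an arbitrary real, totally symmetric tensor. The symmetrised box is itself real and totally symmetric. So whatever it contains, including $\tr\bigl(X^{(IJ}X^{KL)}\bigr)$, is absorbed simply by choosing $M$ equal to it.

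That is exactly how the paper uses the box. Writing the traceless remainder as $2S^{IJ}s_A^{\ B}$, the box \emph{determines} $M\propto n_f\,\delta^{(IJ}\delta^{KL)}+\tr[s^2]\,S^{(IJ}S^{KL)}$. The constraint has to come from the next step, which is on your list but never used in your converse. The scalar bubble $M^{IJMN}M_{MN}{}^{KL}$ (plus permutations), built from this $M$, must again be proportional to $M$. The paper reads off $S^2\propto S$ from this closure condition. It then concludes $S=0$ from $\tr_I S=0$ and the fact that $S^2$ is positive semidefinite for real symmetric $S$. This closure-plus-positivity step is the idea missing from your plan.

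Your appeal to $SO(n_s)$ covariance does not help. Suppose it is a genuine symmetry, i.e.\ the $Z^I$ are invariant up to a compensating transformation of $\psi$. Then every flavour factor is an invariant tensor, and every totally symmetric invariant 4-tensor of $SO(n_s)$ is $\propto\delta^{(IJ}\delta^{KL)}$. No quartic diagram can then detect $X$. If instead the covariance is only formal index bookkeeping, it constrains nothing.

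\textbf{The closure step is not enough either.} Even the bubble argument does not settle the converse in the generality stated. The conclusion $S^2\propto S$ overlooks that $S^2$ may contain a multiple of $\ide$. Such a piece only produces $\delta\delta$-type terms, which are already present in $M$.

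Concretely, take two decoupled identical copies of the $(n_s,n_f)=(1,1)$ model. Set $Z^1=\mathrm{diag}(1,0)$, $Z^2=\mathrm{diag}(0,1)$, and $M^{IJKL}=1$ for $I=J=K=L$, zero otherwise.
\begin{itemize}
\item Here $S=\mathrm{diag}(1,-1)$ and $S^2=\ide$.
\item Every counterterm is proportional to its tree-level structure, because the copies never interact and a $\mathbb{Z}_2$ swap keeps their couplings equal.
\item Yet $Z^1\bar Z^1=\mathrm{diag}(1,0)$ is not a multiple of $\ide$, so the Clifford relation fails for any normalisation.
\end{itemize}

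A second example has a genuine $SO(3)$ symmetry. Take $n_s=3$ and $n_f=4$ fermions in the spin-$\tfrac32$ representation of $SU(2)$. Let $\bar Z^I\propto\Omega J^I$, where $\Omega$ is the invariant antisymmetric form; this is symmetric, as required.
\begin{itemize}
\item Then $Z^I\bar Z^J\propto J^IJ^J$.
\item Schur's lemma makes every counterterm tree-like.
\item Still, $\{J^I,J^J\}\not\propto\dij\,\ide$.
\end{itemize}

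So a complete proof of the ``only if'' part needs an extra hypothesis that excludes such cases. Your plan should identify that hypothesis rather than expect a contradiction from the box alone.
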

\begin{proof}
The fact that eq.~\eqref{eq:flavclifford} is sufficient to ensure non‑evanescence is already well established in theories such as QCD and quantum electrodynamics (QED), formulated in the Weyl basis, where the Pauli matrices satisfy the modified Clifford algebra and do not generate any evanescent terms. 
One may also verify this explicitly by checking the primitive diagrams individually; however, we omit this here, as it is a straightforward exercise based on a well‑known result.

Here we address the more involved question of showing that eq.~\eqref{eq:flavclifford} is not only sufficient, but also necessary for non‑evanescence in~$\lgen$; that is, the modified Clifford algebra is the only structure that allows non‑evanescent flavour behaviour. 
Recall that any object with two indices can be decomposed as follows:
\begin{equation}\label{eq:tracedecomp}
    T^{IJ} = \tr[T]\delta^{IJ} + S^{IJ}+A^{IJ}
\, ,
\end{equation}
where $S^{IJ}$ is symmetric in $I,J$ and traceless, and $A^{IJ}$ is antisymmetric in $I,J$. Hence, we can write
\begin{equation}\label{eq:tracecliffdecomp}
    \begin{split}
    (Z^I)_{AC}(\bar Z^J)^{CB}&= t_A^{\ B}\dij + s_A^{\ B}S^{IJ} + a_A^{\ B}A^{IJ} \\
    &=t^{IJ}\delta_A^{\ B} + s^{IJ}S_A^{\ B} + a^{IJ}A_A^{\ B}
\, ,
\end{split}
\end{equation}
where in the first line eq.~\eqref{eq:tracedecomp} is applied to the scalar indices, and in the second line to the fermion indices. 
Note that the elements with the lower case letters $t,s,a$ are not necessarily symmetric or anti-symmetric. 
Also, owing to the symmetry of $Z^I,\bar Z^I$, we have:
\begin{equation}\label{eq:transpose}
    (Z^I)_{AC}(\bar Z^J)^{CB} = (Z^I)_{CA}(\bar Z^J)^{BC} = (\bar Z^J)^{BC}(Z^I)_{CA}\, .
\end{equation}
Thus, interchanging $I\leftrightarrow J$ in eqs.~\eqref{eq:tracecliffdecomp} and \eqref{eq:transpose}, we get
\begin{equation}\label{eq:tracecliffdecomp2}
    \begin{split}
    (\bar Z^I)^{BC}(Z^J)_{CA}&= t_A^{\ B}\dij + s_A^{\ B}S^{IJ} - a_A^{\ B}A^{IJ} \\
    &=t^{JI}\delta_A^{\ B} + s^{JI}S_A^{\ B} + a^{JI}A_A^{\ B}
\, .
\end{split}
\end{equation}
Note that the index structure appears as $\delta^A_{\ B}$, rather than $\delta_B^{\ A}$, and analogously for $S_A^{\ B}$ and $A_A^{\ B}$.  
To ensure the absence of evanescent terms at one loop in the fermion propagator (solid lines with arrows in direction of fermion flow), it is necessary that:
\bnu\label{lhsfer}
\begin{aligned}
 \olfer \propto \delta_A^{\ B}\; &\implies\; (Z^I)_{AC}(\bar Z^I)^{CB} \,\propto \delta_A^{\ B}
\, .
\end{aligned}
\enu
For the one-loop scalar propagator (dashed lines) with two diagrams, we have
\bnu\label{lhsscl}
\begin{aligned}
 \olsclro + \olsclrt \propto \dij\; &\implies\; (Z^I)_{AB}(\bar Z^J)^{BA} +(Z^J)_{AB}(\bar Z^I)^{BA}\,\propto \dij\, .\\
\end{aligned}
\enu
Contracting both expressions in eq.~\eqref{eq:tracecliffdecomp} with the corresponding invariant tensors yields:
\bnu \label{rhs}
\begin{aligned}
    (Z^I)_{AC}(\bar Z^J)^{CB}\dij &= (Z^I)_{AC}(\bar Z^I)^{CB} = n_st_A^{\ B} \, ,\\
    \left((Z^I)_{AC}(\bar Z^J)^{CB}+(Z^J)_{AC}(\bar Z^I)^{CB}\right)\delta^B_{\ A} &= (Z^I)_{AB}(\bar Z^J)^{BA} +(Z^J)_{AB}(\bar Z^I)^{BA} \\ 
    &= n_f(t^{IJ}+t^{JI}) 
    \, . 
\end{aligned}    
\enu
Comparing eqs.~\eqref{lhsfer}, \eqref{lhsscl} with \eqref{rhs} we get: 
\begin{equation}
t^{IJ}+t^{JI}=2\delta^{IJ}\;,
\qquad 
t_A^{\ B}=\delta_A^{\ B}\: .  
\end{equation}
The latter relation implies that  $s_A^{\ B}$ and $a_A^{\ B}$ are both traceless, since $t_A^{\ B}$ entirely captures the trace component. 
Now consider the following relations:
\begin{equation}\label{antic}
\begin{split}
    (Z^I)_{AC}(\bar Z^J)^{CB} + (Z^J)_{AC}(\bar Z^I)^{CB} = 2\dij\delta_A^{\ B}+2S^{IJ}s_A^{\ B}\;,\\[3pt]
    (\bar Z^I)^{BC}(Z^J)_{CA} + (\bar Z^J)^{BC}(Z^I)_{CA} = 2\dij\delta_A^{\ B}+2S^{IJ}s_A^{\ B}\;.
\end{split}
\end{equation}
We now demonstrate that the term $S^{IJ}s_{AB}$ must vanish in order to ensure the absence of evanescent terms.  
From the one‑loop vertex correction, we obtain:
\begin{equation}\label{rhsyuk}
    \olyuk \propto\; (Z^I)_{AB} \implies (Z^J)_{AC}(\bar Z^I)^{CD}(Z^J)_{DB}\,\propto\, (Z^I)_{AB}\;.
\end{equation}
Using eq.~\eqref{antic}, we obtain
\begin{align*}
    (Z^J)_{AC}(\bar Z^I)^{CD}(Z_J)_{DB} & = -(Z^I)_{AC}(\bar Z^J)^{CD}(Z_J)_{DB}+ 2\dij \delta_A^{\ D}(Z_J)_{DB}+2S^{IJ}s_A^{\ D}(Z_J)_{DB}\\
    &=-n_s(Z^I)_{AC}\delta^C_{\ B} + 2\dij \delta_A^{\ D}(Z_J)_{DB}+2S^{IJ}s_A^{\ D}(Z_J)_{DB}\\
    &= (2-n_s)(Z^I)_{AC} + 2S^{IJ}s_A^{\ D}(Z_J)_{DB}\;.
\end{align*}
For non-evanescence we need 
\begin{equation}\label{eq:yuknonevans}
 S^{IJ}s_A^{\ D}(Z_J)_{DB} \propto Z^I_{AB} =\delta^{IJ}\delta_A^{\ D}(Z_J)_{DB}\, ,   
\end{equation}
but this relation alone is not sufficient to guarantee the vanishing of the term, since the scalar and fermion indices mix in this expression, making it non‑trivial to impose linear independence.  
The other Yukawa primitive diagram (shown below) gives us nothing new, but the same condition \eqref{eq:yuknonevans} for non-evanescence.
\begin{equation*}
    \tlyuk
\end{equation*}
Hence, we have to shift our focus to the quartic vertex primitive graphs. 
The quartic flavour structure $M^{IJKL}$ can be related to the $Z^I,\bar Z^I$ through the following diagrams:
\begin{equation}
\olfouro + \olfourt + \;(J\leftrightarrow K)\; +\; (J\leftrightarrow L)\;\propto\; M^{IJKL}\;.
\end{equation}
Using eq.~\eqref{antic} and the fact that $\tr[s]=0$, we get:
\begin{equation}\label{mijklwiths}
    M^{IJKL} \propto n_f(\delta^{IJ}\delta^{KL}+\delta^{IK}\delta^{JL}+\delta^{IL}\delta^{JK})+\tr[s^2](S^{IJ}S^{KL}+S^{IK}S^{JL}+S^{IL}S^{JK})\;,
\end{equation}
where the proportionality constant will be determined later. Substituting the above expression for $M^{IJKL}$ in the following primitive diagrams gives us:
\begin{align}
    \olfourscl + & \;(J\leftrightarrow K)\; +\; (J\leftrightarrow L) \nonumber \\ &\propto n_f^2(n_s+2)\delta^{IJ}\delta^{KL}  + (\tr[s^2])^2\left(\tr_I[S^2]S^{IJ}S^{KL} + 2(S^2)^{IJ}(S^2)^{KL}\right)\nonumber \\
    &+(J\leftrightarrow K) +(J\leftrightarrow L) 
    \, ,
    \label{eq:oneloopscl}
\end{align}
where $\tr_I$ indicates a trace over the scalar flavour indices. 
For the above expression to be proportional to $M^{IJKL}$, we must have
\begin{equation}
    S^2 \propto S,\quad \text{but } \tr_I[S]=0 \implies\tr_I[S^2]=0
    \, .
\end{equation}
However, since $S$ is real and symmetric, the matrix $S^2$ is diagonalizable and its eigenvalues are non‑negative. 
The tracelessness of $S^2$ therefore implies that all of its eigenvalues, and hence all eigenvalues of $S$, must vanish. Consequently, we obtain:
\begin{equation}
S^{IJ} = 0\;.
\end{equation}
Hence, the flavour elements $Z^I$ and $\bar Z^I$ must necessarily satisfy the modified Clifford algebra in order to ensure non‑evanescence for all propagators and vertices:
\begin{equation}\label{cliffindices}
    \begin{split}
    (Z^I)_{AC}(\bar Z^J)^{CB} + (Z^J)_{AC}(\bar Z^I)^{CB} = 2\dij\delta_A^{\ B}\;,\\[3pt]
    (\bar Z^I)^{AC}(Z^J)_{CB} + (\bar Z^J)^{AC}(Z^I)_{CB} = 2\dij\delta_B^{\ A}\;.
\end{split}
\end{equation}
which completes our proof.
\end{proof}
\begin{cor}
    Under non-evanescence, the quartic flavour element is given by 
    \begin{equation}\label{mijkl}
    M^{IJKL} = (\delta^{IJ}\delta^{KL}+\delta^{IK}\delta^{JL}+\delta^{IL}\delta^{JK})
\end{equation}
where the proportionality constant is chosen so that the Feynman rule for the quartic vertex is $-i\lambda M^{IJKL}$.
\end{cor}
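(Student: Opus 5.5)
The corollary follows from the relation already obtained inside the proof of Theorem~\ref{thm:nonevanescence}. Eq.~\eqref{mijklwiths} came from the one-loop fermion-box diagrams with four external scalars, summed over the three channels. It gave $M^{IJKL}$, up to an overall constant, as
\begin{equation*}
M^{IJKL}\propto n_f(\delta^{IJ}\delta^{KL}+\delta^{IK}\delta^{JL}+\delta^{IL}\delta^{JK})+\tr[s^2](S^{IJ}S^{KL}+S^{IK}S^{JL}+S^{IL}S^{JK}).
\end{equation*}
The plan is to substitute $S^{IJ}=0$, which the theorem established as necessary for non-evanescence. The second term then drops out and $M^{IJKL}\propto \delta^{IJ}\delta^{KL}+\delta^{IK}\delta^{JL}+\delta^{IL}\delta^{JK}$.

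For a cross-check independent of the box diagrams, I would note that $M^{IJKL}$ is a real, fully symmetric invariant tensor of $SO(n_s)$ with four fundamental indices. For general $n_s$ the only such tensor is the symmetrised product of Kronecker deltas. The Levi-Civita tensor appears only for $n_s=4$ and is antisymmetric, so it cannot contribute. I would also re-check consistency by inserting the result into the scalar-loop primitive diagram of eq.~\eqref{eq:oneloopscl} with $S=0$. This gives a multiple of $(n_s+8)$ times the same symmetric structure, so no evanescent flavour term arises and the quartic sector closes.

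It remains to fix the normalisation. The interaction term is $-\frac{\lambda}{4!}M^{IJKL}\phi_I\phi_J\phi_K\phi_L$. Taking four functional derivatives produces $4!$ permutations, and since $M$ is fully symmetric each contributes the same $M^{IJKL}$. The Feynman rule is therefore $-i\lambda M^{IJKL}$ with $M$ exactly the symmetrised delta structure, and any other proportionality constant can be absorbed into $\lambda$.

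The only genuine subtlety is that eq.~\eqref{mijklwiths} is a proportionality rather than an equality. The claim should therefore be read as a choice of normalisation absorbed into $\lambda$, not a derived numerical coefficient. Checking that the $n_f$-dependent prefactor is nonzero, so that the fermion loop really induces this structure, is routine.
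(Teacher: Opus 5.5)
Your proposal is correct and follows the paper's own route: substitute $S^{IJ}=0$ into eq.~\eqref{mijklwiths}, check that the scalar-bubble primitive eq.~\eqref{eq:oneloopscl} then closes on the same symmetrised-delta structure (your coefficient $n_s+8$ is correct), and fix the overall constant through the $-i\lambda M^{IJKL}$ Feynman-rule normalisation, absorbing any remaining factor into $\lambda$. One caution: your invariant-tensor cross-check assumes that $M^{IJKL}$ is an $SO(n_s)$-invariant tensor, which the paper never imposes (the $S^{IJ}S^{KL}$ term in eq.~\eqref{mijklwiths} is exactly such a non-invariant candidate), so treat it as a heuristic consistency remark rather than an independent proof.
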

\begin{proof} 
Substituting $S=0$ into eq.~\eqref{mijklwiths} yields the above expression for $M^{IJKL}$. 
With this form of $M^{IJKL}$, eq.~\eqref{eq:oneloopscl} becomes non‑evanescent. 
It is a straightforward algebraic exercise to verify that all other primitive diagrams are likewise non‑evanescent with this definition of $M^{IJKL}$. 
Our four‑loop computations confirm this explicitly. We therefore omit the detailed derivation here.
\end{proof}
These conditions on the flavour elements $\zab$ and $M^{IJKL}$ completely determine the values of all traces occuring in the one‑particle‑irreducible (1PI) correlators for the propagators and vertices. 
Thus, all flavour factors of all diagrams are fixed by imposing the single requirement that no evanescent flavour elements appear. 
In the next section, we derive the relevant trace identities and discuss the subtleties associated with regularisation.

\subsection{Trace identities}

In this section we derive the relevant identities for traces of the flavour elements $Z^I$ and $\bar Z^I$. 
The derivations presented here apply to positive integer dimensions; the extensions required for dimensional regularisation will be discussed in the following section.

\subsubsection*{Even trace identities} 

First, let us compute the trace of an even number of  $Z$ and  $\bar Z$. 
Note that the traces encountered in the evaluation of Feynman diagrams involve alternating $Z$ and  $\bar Z$ elements. 
Consequently, we are interested in traces of the form
\begin{equation}
    \tr[Z^{I_1}\bar Z^{I_2}\ldots \bar Z^{I_{2n}}]\quad,\quad\tr[\bar Z^{I_1} Z^{I_2}\ldots Z^{I_{2n}}]\;.
\end{equation}
We can combine the $Z^I$ and $\bar Z^I$ into one entity:
\begin{equation}\label{eq:gamma}
    \Gamma^I = \begin{pmatrix}
    0 & Z^I\\\bar Z^I&0
\end{pmatrix}
\, .
\end{equation}
The $\Gamma^I$ then form representations of the Clifford algebra:
\begin{equation}\label{eq:GammaClifford}
    \Gamma^I\Gamma^J+\Gamma^J\Gamma^I = 2\delta^{IJ}\;.
\end{equation}
Then, the trace of an even number of $\Gamma^I$ is given by:
\begin{equation}\label{eq:achiral}
    \tr[\Gamma^{I_1}\Gamma^{I_2}\ldots\Gamma^{I_{2n}}] = \tr[Z^{I_1}\bar Z^{I_2}\ldots \bar Z^{I_{2n}}]+\tr[\bar Z^{I_1} Z^{I_2}\ldots Z^{I_{2n}}]
\, .
\end{equation}
As is well known, the trace above can be derived solely from the anti‑commutation relation, and is therefore independent of the specific representation:
\begin{equation}\label{eq:tracegamma}
    \tr[\Gamma^{I_1}\Gamma^{I_2}\ldots\Gamma^{I_{2n}}] = \sum_{j=2}^{2n}(-1)^{j}\delta^{I_1I_{j}}\tr[\Gamma^{I_2}\ldots \Gamma^{I_{j-1}}\Gamma^{I_{j+1}} \Gamma^{I_{2n}}]
\, ,
\end{equation}
where the solution of the recursion yields a sum of products of Kronecker deltas. 
But this is only a sum of the traces of $Z^I$ and $\bar Z^I$. 
In order to isolate the individual traces, we require the following ingredient:
\begin{equation}\label{eq:gamma5proj}
    \Gamma_{(n_s+1)} = \begin{pmatrix}
    \ide_Z & 0\\0&-\ide_Z
\end{pmatrix}
\, ,
\end{equation}
where $\ide_Z$ denotes the identity element of the same dimension as  $Z^I$ and  $\bar Z^I$. 
This yields:
\begin{equation}\label{eq:tracesig}
\begin{split}
    \tr[Z^{I_1}\ldots \bar Z^{I_{2n}}]= \frac{1}{2}\tr[\Gamma^{I_1}\ldots\Gamma^{I_{2n}}(1+\Gamma_{(n_s+1)})]\, ,\\
     \tr[\bar Z^{I_1}\ldots Z^{I_{2n}}]= \frac{1}{2}\tr[\Gamma^{I_1}\ldots\Gamma^{I_{2n}}(1-\Gamma_{(n_s+1)})]
     \, .
\end{split}
\end{equation}
For the case when $n_s$ is a positive even integer, we can express $\Gamma_{(n_s+1)}$ in terms of $\Gamma^I$ as:
\begin{equation}\label{eq:gamma5}
\Gamma_{(n_s+1)} =
     \dfrac{i^{\frac{n_s}{2}}}{n_s!}\epsilon_{I_1\ldots I_{n_s}}\Gamma^{I_1}\ldots\Gamma^{I_{n_s}} 
\, ,
\end{equation}
where the overall normalisation accounts for Euclidean metric of the $\delta^{IJ}$. 
This is analogous to $\gamma_5$, hence the subscript $(n_s+1)$. 
With the above definition, we can calculate the right hand side of eq.~\eqref{eq:tracesig} using eq.~\eqref{eq:tracegamma}. 
For example,
\begin{equation}\label{eq:neq2}
    \tr[Z^I\bar Z^J] = 
    \begin{cases}
        \tr[\ide_Z](\delta^{IJ}+i\epsilon^{IJ})&\text{for $n_s = 2$} \\
        \tr[\ide_Z]\delta^{IJ}&\text{for $n_s > 2$}
    \end{cases}
    \, ,
\end{equation}
while
\begin{equation}\label{eq:neq4}
    \tr[Z^I\bar Z^JZ^K\bar Z^L] = 
    \begin{cases}
        \tr[\ide_Z]\left((\delta^{IJ}\delta^{KL}-\ldots)+i(\epsilon^{IJ}\delta^{KL}-\epsilon^{IK}\delta^{JL}+\ldots)\right)&\text{for $n_s = 2$} \\
        \tr[\ide_Z]\left((\delta^{IJ}\delta^{KL}-\ldots)-\epsilon^{IJKL}\right)&\text{for $n_s = 4$}\\
        \tr[\ide_Z](\delta^{IJ}\delta^{KL}-\ldots)&\text{for $n_s > 4$}
    \end{cases}
    \, .
\end{equation}
Notice the sequential ``unlocking" of Levi‑Civita contributions as the number of $Z^I$ factors in the trace  increases. 
For $n_s=4$, the Levi‑Civita term contributes only when the trace contains four or more  $Z^I$ matrices; for $n_s=6$, the corresponding threshold is six. 
This occurs because a sufficient number of $\Gamma^I$ matrices are required on the right‑hand side of eq.~\eqref{eq:tracesig} in order to produce a non‑vanishing Levi‑Civita term.

We refer to the component of $\tr[Z^{I_1}\ldots \bar Z^{I_{2n}}]$ that is independent of the Levi‑Civita tensor as the \emph{achiral} part, while the component proportional to the Levi‑Civita tensor is termed the \emph{chiral} part.

\subsubsection*{Odd trace identities}

Multiplying the $\Gamma$ matrices an odd number of times always produces a product of $Z$ matrices in the off‑diagonal blocks, which is not useful for deriving the traces with an odd number of $Z^I$ and $\bar Z^I$:
\begin{equation*}
    \tr[Z^{I_1}\bar Z^{I_2}\ldots Z^{I_{2n-1}}]\quad,\quad\tr[\bar Z^{I_1} Z^{I_2}\ldots \bar Z^{I_{2n-1}}]\;.
\end{equation*}
There is no known general relation for traces involving an odd number of $Z$ matrices. 
Recall also that, since we work with Weyl fermions, the propagators contain $\sigma^\mu$ and $\bar \sigma^\mu$, which constitute the four‑dimensional representation of $Z^I$ and $\bar Z^I$, 
and therefore exhibit the same limitation.

Fortunately, the combinatorics of Feynman graphs in an even‑dimensional quantum field theory render the evaluation of odd traces unnecessary. A trace involving an odd number of
$\sigma^\mu,\bar \sigma^\mu$ or $Z^I,\bar Z^I$ could only arise from a Weyl‑fermion loop containing an odd number of Yukawa vertices. However, such a diagram cannot occur: the structure of the Yukawa vertex, as depicted in eq.~\eqref{feynmanrules}, requires the fermion arrow to flip at each vertex. A closed fermion loop requires an even number of arrow flips, and hence an even number of Yukawa vertices.

\subsubsection*{Reducible representations}

In certain theories, the representation of the $Z$ matrices may be reducible—for example, in $\mathcal{N}=2$ super Yang–Mills (SYM) theory. 
It is therefore necessary to address whether the trace identities derived above differ between reducible and irreducible representations.

Since odd traces never appear, only the even traces are relevant. 
The traces of an even number of $Z^I$ and $\bar Z^I$ are expressed entirely in terms of the even traces of the $\Gamma^I$ matrices, eq.~\eqref{eq:tracegamma}, which follow solely from the Clifford algebra and therefore do not depend on the choice of representation. Consequently, all trace identities derived here hold equally in irreducible and reducible representations. 

Linear representations of $Z^I$ and $\bar Z^I$ exist only for positive even integer values of $n_s$, and the dimension of the corresponding irreducible representations is $2^{(n_s/2)-1}$. 
Therefore, the cases with odd $n_s$  must be treated separately.

\subsubsection*{Odd dimensions}

We will restrict ourselves to the $n_s=1$ case as it is the only odd case relevant to this work. In this case we can incidentally find a representation of $Z^I,\bar Z^I$. To see this, consider the irreducible representation of the Clifford algebra $n_s=1$, which is one-dimensional and real, hence:
\begin{equation}
\left((Z^I)_{AB}\right)^* = (Z^I)_{AB} \implies (\bar Z^I)^{AB} = (Z^I)_{AB}
\, .
\end{equation}
Then the eq.~\eqref{eq:flavclifford} becomes a Clifford algebra, and therefore
\begin{equation}
Z^I=\Gamma^I
\, .
\end{equation}
The $n_s=1$ case is trivial as the $\Gamma^I=\ide$, but in general, the trace of an odd number of $\Gamma^I$ in odd dimensions is not zero.
This is relevant to the emergent SUSY at $n_s=1$ which corresponds to a chiral superfield theory in three space-time dimensions~\cite{Grover:2013rc}.

The following inductive identity can be proven with just the Clifford algebra anti-commutation: (the indices are Lorentz indices, so the dimension parameter is $D$ and not $n_s$) 
\begin{equation}\label{eq:odddimtrace}
\tr[\Gamma^{\mu_1}\Gamma^{\mu_2}\ldots\Gamma^{\mu_{2n-1}}]=0 \implies (D-2n+1)\tr[\Gamma^{\mu_1}\Gamma^{\mu_2}\ldots\Gamma^{\mu_{2n+1}}]=0\, .
\end{equation}
If the parameter $D$ is not an odd number, the induction never fails and traces of any arbitrary odd number of $\Gamma$ matrices are zero. However, if $D$ is a positive odd integer, the induction fails beyond $n=(D+1)/2$. A trace of $D$ or more odd $\Gamma$ matrices is then no longer zero. For instance, in $D=3$ with a Euclidean metric, 
\begin{equation}\label{eq:oddtrace}
    \tr[\gamma^\mu\gamma^\nu\gamma^\rho]=-i\epsilon^{\mu\nu\rho}\tr[\ide_{2\times 2}]
\, ,
\end{equation}
where $\gamma^\mu$ are representations of the three-dimensioal Clifford algebra.

The graph combinatorics created by Weyl spinors is not applicable in odd dimensions. 
As discussed above, Weyl spinors cannot produce graphs with an odd number of traces. 
However, since Weyl spinors simply do not exist in odd-dimensional spacetimes, this type of graph‑combinatorial reasoning is not available there. 
As a consequence, obtaining the emergent three-dimensional SUSY with $n_s=1$ from the $\lgen$ theory at four loops becomes problematic, because our calculations rely on Weyl spinors in four dimensions.

This issue was already observed in \cite{Zerf:2017zqi}, where it was traced back to the fact that the graphs of the type shown in Figure \ref{fig:oddtrace} do not exist in four dimensions.
\begin{figure}
\centering
\includegraphics[width=0.33\linewidth]{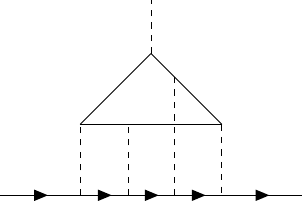}
\caption{\label{fig:oddtrace} 
Sample topologies of diagrams that yield a non‑vanishing odd trace contribution in three dimensions, with three-dimensional Dirac fermions (solid lines, arrows indicate fermion flow, no arrows indicate fermion flow in both directions) and scalars (dashed lines).
\vspace*{-1mm}}
\end{figure}
To restore emergent SUSY, one simply needs to include the missing contributions by making use of eq.~\eqref{eq:oddtrace}, as demonstrated in \cite{Zerf:2017zqi}. 
Note that, what we have in Figure \ref{fig:oddtrace} are no longer Weyl spinors, but three dimensional Dirac spinors, that actually represent three-dimensioal Majorana spinors when we substitute $n_f=1/2$ to obtain the result for the SUSY model.
\\

\subsection{Regularising to complex $D=4-2\eps$}\label{dimregsec}

In the previous section we derived the trace identities in positive integer dimensions; we now discuss how these quantities extend under regularisation. 
Dimensional regularisation~\cite{tHooft:1972tcz,Bollini:1972ui} in $D=4-2\eps$ dimensions analytically continues tensor and spinor structures to non‑integer dimensions, which can subtly alter algebraic identities that hold only in fixed integer dimensions.
In supersymmetric theories such as SYM, the scalar flavour number must be regularised as $n_s\to n_s+2\eps$ in order to preserve SUSY in $D$ dimensions. 
In the WZ model, the case relevant for this paper, $n_s$ does not require such regularisation. 
However, the Lorentz indices carried by  $\sigma^\mu$ and $\bar\sigma^\mu$ are still $D$-dimensional even in the WZ model, and therefore we must determine how the trace identities are modified under dimensional regularisation.

\paragraph{Odd traces} 
Notice that the induction in eq.~\eqref{eq:odddimtrace} does not terminate unless $D$ is a positive odd integer. 
Hence, odd traces must vanish for regularised (non‑integer) $n_s$ and $D$. 
The fact that these traces become non‑zero exactly in odd integer dimensions can be understood as a discontinuity in the analytic continuation: to regularise an odd‑dimensional theory, one must employ a different analytic continuation than the one used in even dimensions. 
For example, to obtain the three-dimensional emergent SUSY at $n_s=1$, 
we consider an analytic continuation in which odd traces in $D=3-2\eps$ dimensions do not vanish as $\eps\to 0$.
Thus,
\begin{equation}\label{eq:oddtracethree}
    \tr[\gamma^\mu\gamma^\nu\gamma^\rho]=-i\epsilon^{\mu\nu\rho}\tr[\ide_{2\times 2}] + \mathcal{O}(\eps),
\end{equation}
where the indices of the Levi–Civita tensor are defined strictly in three dimensions~\footnote{ Note the difference in symbols used for the Levi-Civita tensor ($\epsilon$) and the dimensional‑regularisation parameter ($\eps$).}. 
By contrast, when regularising an even‑dimensional theory, one may simply use the fact that the trace of an odd number of $\Gamma$ matrices vanishes.

\paragraph{Even traces} 
As the trace identity in eq.~\eqref{eq:tracegamma} follows solely from the Clifford‑algebra anti‑commutation relations, it remains valid in $D$ dimensions -- a fact well established in theories such as QCD at high loop orders. Consequently, the achiral part (the contribution independent of $\Gamma_{(n_s+1)}$) of the traces in eq.~\eqref{eq:tracesig} extends without obstruction to complex values of $D$ and $n_s$.

For the chiral part, which contains the Levi–Civita contributions, we encounter the well‑known  $\gamma_5$ problem. Because we treat $\Gamma_{(n_s+1)}$ as the matrix in eq.~\eqref{eq:gamma5proj} in order to project onto the desired trace $\tr[Z\ldots\bar Z]$ or $\tr[\bar Z\ldots Z]$, we implicitly impose that $\Gamma_{(n_s+1)}$ anticommutes with all $\Gamma^I$. Thus we are not working in the HVBM scheme \cite{tHooft:1972tcz,Breitenlohner:1977hr}, but rather in a scheme in which cyclicity of the trace does not hold, such as the reading point method proposed in \cite{Korner:1991sx}.

\begin{figure}
\centering
\includegraphics[width=0.8\linewidth]{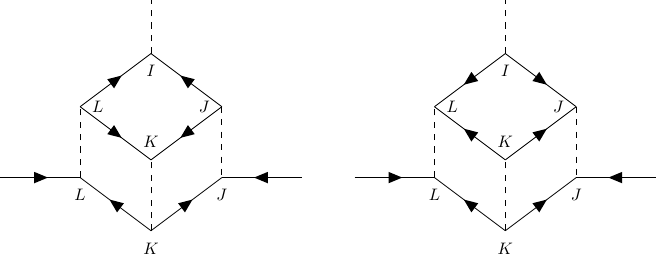}
\caption{\label{fig:anom} 
Example of the reading point issue with Dirac fermions (solid lines, arrows indicate fermion flow) and scalars (dashed lines), see footnote~\ref{ft:scalars}. 
The letters indicate the flavour indices from scalars. In the regularised case, the flavour traces can change depending on the choice of the initial point in the fermion loop.
\vspace*{-1mm}}
\end{figure}
To see that, consider the diagrams in Figure \ref{fig:anom} for $n_s=2$.
While computing the 1PI vertex correlator, we apply the Feynman rules and contract the vertex corrections with $\bar Z^I_{AB}$. 
Let us now examine the flavour factor arising from the sum of these two graphs~\footnote{\label{ft:scalars}
The diagrams derive from the generalised Lagrangian which has real scalars, so we do not have arrows on the scalar propagators. 
The component WZ Lagrangian derived from the two chiral superfields, for example in~\cite{Avdeev:1982jx}, 
has complex scalars $\phi$ and $\bar \phi$, and hence needs directed propagators.
}, 
based on the trace identity \eqref{eq:neq4} :
\begin{equation}
    \left(\tr[Z^{I}\bar Z^{J}Z^{K}\bar Z^{L}] +\tr[ \bar Z^{I}Z^{J}\bar Z^K Z^{L}]\right)\times \tr[\bar Z^{I} Z^{J}\bar Z^{K} Z^{L}] = 0\Big |_{\text{Levi-Civita}}
    \, ,
\end{equation}
where we are displaying only the contributions coming from the Levi–Civita tensors. We could equally well apply the Feynman rules starting from the vertex $J$ in the fermion loop on the right instead of $I$. 
In that case the result using the same trace identity \eqref{eq:neq4} is:
\begin{equation*}
    \left(\tr[Z^{I}\bar Z^{J}Z^{K}\bar Z^{L}] +\tr[ Z^{J}\bar Z^K Z^{L}\bar Z^{I}]\right)\times \tr[\bar Z^{I} Z^{J}\bar Z^{K} Z^{L}] \propto n_s(n_s-1)(n_s-2)\Big |_{\text{Levi-Civita}}
    \, .
\end{equation*}
This is still zero as we have $n_s=2$, but in theories such as $\mathcal{N}=2$ SYM, where $n_s=2+2\eps$, the same diagrams yield different results if one simply reorders the application of the Feynman rules. This signals an inconsistency. 
The standard workaround is the reading‑point method, in which the initial point of the trace is fixed \cite{Korner:1991sx}. In the case of SYM, we know that the correct Levi–Civita contribution must vanish in order to obtain SUSY‑consistent results. Consequently, the appropriate reading point is at $I$, which was used in~\cite{n2symat3loops} to obtain manifest SUSY in $\mathcal{N}=2$ SYM at three loops.

\subsubsection*{Absence of $\gamma_5$ up to four loops (inclusive)}

Models with two real scalars, such as NJLY and the four‑dimensional WZ model Lagrangian, when written in terms of Dirac spinors, contain a $\gamma_5$ at the Yukawa vertex and therefore necessarily encounter the $\gamma_5$ problem. 
However, as noted in eq.~\eqref{eq:neq4}, the Levi–Civita tensor (and hence $\gamma_5$) contributes only in traces containing at least four $\Gamma^I$ matrices. 
Moreover, for the Levi–Civita terms to yield a non‑vanishing contribution to the kind of single momentum massless integrals we have, they must be contracted with one another. 
This requires a diagram with at least two fermion chains, each containing a trace with at least four matrices. 
Such a configuration appears for the first time in four loop propagators, in diagrams of type shown in Figure \ref{fig:urdhpund}, but the integrals evaluate to zero. The reason for this can be understood by considering the Passarino-Veltman reduction \cite{Passarino:1978jh} of the fermion loop subgraphs. For the Levi-Civita tensor $\epsilon^{\mu_1\mu_2\mu_3\mu_4}$ to survive, we need a tensor structure that has four distinct momenta, but as the subgraphs are four point functions and thus have only three independent momenta, that is not possible.

The first non-zero contribution can occur for the first time at five loops, with the rightmost diagram in Figure \ref{fig:urdhpund}. As now the fermion loop subgraph is a six-point function, with one mometum nullified due to infrared (IR) rearrangement, it has four independent momenta, and thus the Levi-Civita term survives. 
For computing the renormalisation group coefficients of the WZ model at five loops, we can still avoid the $\gamma_5$, as we only require to calculate the two-point five-loop functions which do not have a $\gamma_5$ and can then use eq.~\eqref{susyanomb} to compute the $\beta$-function~\footnote{
Reference~\cite{Kamenshchik:1983ufs} studies the WZ model in dimensional regularization using a formulation with chiral superfields. It reports a breakdown of SUSY Ward identities starting at five loops. We do not see any indication of such a problem in our approach, but a detailed analysis is left for future work.}.
However, as we are discussing this for the NJLY model in general, an explicit prescription for $\gamma_5$ from five loops onwards is required.
\begin{figure}
    \centering
     \flescapeo  \flescapet  \raisebox{-1cm}{\includegraphics[width=0.15\linewidth]{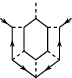}}
    \caption{Representative diagrams where non-zero $\gamma_5$ integrands appear for the first time in the NJLY and WZ models (all possible permutations of connecting the internal scalar lines (dashed lines, see footnote~\ref{ft:scalars}) to Yukawa vertices are considered in every diagram above). 
    The four‑loop diagrams evaluate to zero after performing the integration.
    Only the five-loop diagram on the right can have a non-zero $\gamma_5$ contribution. Thus, we can safely ignore $\gamma_5$ through four loops.}
    \label{fig:urdhpund}
\end{figure}
Hence, up to four loops we can disregard the $\gamma_5$ matrices entirely, and the resulting calculations remain fully accurate.

As a side remark, we note that this key mechanism is not confined to WZ models alone, but also appears in SYM theories.
A long-standing problem in SUSY regularisation was the observation that, SUSY was preserved in the results of $\mathcal{N}=1,4$ SYM, but not for $\mathcal{N}=2$ SYM results at three loops ~\cite{Avdeev:1982xy,Velizhanin:2008rw}. 
This was resolved in ~\cite{n2symat3loops}, which revealed that the apparent breakdown of SUSY was caused by this same $\gamma_5$-related flavour Clifford‑algebra subtlety, which unlocked at three loops for $\mathcal{N}=2$ SYM, but not for $\mathcal{N}=1,4$ SYM. Correcting that issue restored the expected equality of the Yukawa and gauge $\beta$-functions, showing that SUSY is in fact not violated.

The same mechanism implies that $\mathcal{N}=4$ SYM would first be sensitive to such effects only at five loops, consistent with existing four‑loop results, confirming that dimensional reduction preserves SUSY through all known loop orders.

\section{Diagrammatic Computations}
\label{sec:computations}

\subsection{Renormalising $\lgen$}\label{sec:renlgen}
In this section, we follow the standard renormalisation procedure for quantum field theories, extending it in a straightforward way to the case of two couplings. 
First, by locality of counterterms, the renormalisation scale $\mu$ can enter the counterterms only through the renormalised couplings $a(\mu)$ and $\lambda(\mu)$. 
Thus, we have
\begin{equation}\label{gammas}
    \gamma_i = \frac{d\ln Z_i}{d\ln\mu} = \beta_a\frac{\partial\ln Z_i}{\partial a}+\beta_\lambda\frac{\partial\ln Z_i}{\partial \lambda}\;,
\end{equation}
where the subscript $i=\phi,\psi$ depends on the field anomalous dimension under consideration. 
Next, for the $\beta$ functions, we take the derivative of both sides of eq.~\eqref{couplings} with respect to $\mu$, and obtain,
\begin{equation*}
    \begin{aligned}
        -2\eps(\lambda Z_{\lambda\lambda}+aZ_{a\lambda}) &= \left(Z_{\lambda\lambda}+\lambda\frac{\partial Z_{\lambda\lambda}}{\partial \lambda}\right)\beta_\lambda+\left(Z_{a\lambda}+a\frac{\partial Z_{a\lambda}}{\partial a}+\lambda\frac{\partial Z_{\lambda\lambda}}{\partial a}\right)\beta_a\, ,\\
        -2\eps(\lambda Z_{\lambda a}+aZ_{aa}) &= \left(Z_{\lambda a}+\lambda\frac{\partial Z_{\lambda a}}{\partial \lambda}+a\frac{\partial Z_{a a}}{\partial \lambda}\right)\beta_\lambda+\left(Z_{aa}+a\frac{\partial Z_{aa}}{\partial a}\right)\beta_a\, .
    \end{aligned}
\end{equation*}
We solve these equations for the $\beta$ functions in terms of the $Z$-factors. 
For notational simplicity, consider the following shorthand for the above set of equations:
\begin{equation*}
    \begin{bmatrix}
        c_\lambda \\
        c_a
    \end{bmatrix} =
    \begin{bmatrix}
        f_{\lambda\lambda} & f_{a\lambda} \\
        f_{\lambda a} & f_{aa}
    \end{bmatrix}
    \begin{bmatrix}
        \beta_\lambda \\
        \beta_a
    \end{bmatrix}\, .
\end{equation*}
Then the $\beta$ functions can be expressed as
\begin{equation}\label{betas}
    \beta_\lambda = \frac{1}{\det F}(f_{aa}c_\lambda - f_{a\lambda}c_a)\quad , \quad \beta_a = \frac{1}{\det F}(f_{\lambda\lambda}c_a - f_{\lambda a}c_\lambda)
    \, ,
\end{equation}
where $\det F = f_{aa}f_{\lambda\lambda}-f_{a\lambda}f_{\lambda a}$. 
In this form, eqs.~\eqref{gammas} and \eqref{betas} can be used to obtain the $Z$-factors in terms of the anomalous dimensions and $\beta$ functions order by order in perturbation theory. 
With the formal perturbative expansion, 
\begin{align*}
    \gamma_i &= \sum_{L=1}^\infty\sum_{k=0}^L\gamma_i^{(L,k)}\lambda^ka^{L-k}\, ,\\
    \beta_\lambda & = -2\eps\lambda  + \sum_{L=1}^\infty\sum_{k=0}^{L+1} \beta_\lambda^{(L,k)}\lambda^ka^{L+1-k}\, ,\\
    \beta_a & = -2\eps a  + \sum_{L=1}^\infty\sum_{k=0}^{L+1} \beta_a^{(L,k)}\lambda^ka^{L+1-k}\, ,
\end{align*}
the counterterms can be organised systematically as a series in the renormalised couplings. 
In the $\overline{\text{MS}}$-scheme, the $Z$-factors for any process admit a solution in the form of a Laurent series in $\eps$, given by 
\begin{align*}
    Z_i = & \;1 - \eps^{-1}(\lambda\gamma_i^{(1,1)}+ a\gamma_i^{(1,0)})\\
    &+\eps^{-2}\lambda^2\left(\frac{1}{2}(\gamma_i^{(1,1)})^2-\frac{1}{2}\beta_\lambda^{(1,2)}\gamma_i^{(1,1)}-\frac{1}{2}\beta_a^{(1,2)}\gamma_i^{(1,0)}\right)\\
    &+\eps^{-2}\lambda a\left(\frac{1}{2}\gamma_i^{(1,1)}\gamma_i^{(1,0)}-\frac{1}{2}\beta_\lambda^{(1,1)}\gamma_i^{(1,1)}-\frac{1}{2}\beta_a^{(1,1)}\gamma_i^{(1,0)}\right)\\
    &+\eps^{-2}a^2\left(\frac{1}{2}(\gamma_i^{(1,0)})^2-\frac{1}{2}\beta_\lambda^{(1,0)}\gamma_i^{(1,1)}-\frac{1}{2}\beta_a^{(1,0)}\gamma_i^{(1,0)}\right)\\
    &-  \eps^{-1}\left(\frac{1}{2}\lambda^2\gamma_i^{(2,2)}+ \frac{1}{2}a\lambda\gamma_i^{(2,1)}+\frac{1}{2}a^2\gamma_i^{(2,0)}\right) + \ldots
\end{align*}
Here, the term ``process" denotes any 1PI correlator, for instance, a propagator-related two-point function or a vertex-related three-point function. 
If its bare 1PI form is  
\begin{equation}
   \Gamma_i^{\text{bare,1PI}} = 1+\mathcal{O}(\lambda_0,a_0)\, ,
\end{equation}
with the tree‑level contribution normalised to one by suitable projections, then, using the $Z$-factor, the renormalised correlator is obtained via  
\begin{equation}
    \Gamma_i^{\text{ren,1PI}} = Z_i\Gamma_i^{\text{bare,1PI}} \;.
\end{equation}
To obtain the counterterms, we require that the pole part on the right‑hand side be set to zero and solve the resulting equations order by order in the couplings. The cancellation of higher-order poles by lower‑loop information serves as a non‑trivial consistency check of the integral computations. 

This procedure, which determines the $Z$-factors from the anomalous dimensions and $\beta$ functions, is useful in two main ways:
\begin{enumerate}
    \item The expression we obtain for the counterterms is independent of the details of the theory; it applies to any two‑coupling model renormalised in the $\overline{\text{MS}}$ scheme. It therefore provides a robust check on the calculations through the requirement of consistent divergence cancellation.
    
    \item By renormalising the entire Green's function in a single step, the method avoids the diagram‑by‑diagram combinatorics of the $R$‑operation and is therefore highly efficient. The full four‑loop renormalisation can be completed in a matter of seconds.
\end{enumerate}

\subsection{Methodology for calculations} \label{sec:methodology}

For the full computation of the theories defined by $\lgen$, we generate all diagrams up to four loops for propagators and vertices using \texttt{Qgraf}~\cite{Nogueira:1991ex}. 
The output is then processed with the computer‑algebra system \texttt{FORM}~\cite{Vermaseren:2000nd,Kuipers:2012rf,Ruijl:2017dtg,Davies:2026cci}, which we use to implement the Feynman rules, identify topologies, and perform the colour, flavour, and spinor traces, including the necessary Levi–Civita tensors. 
\texttt{FORM} also converts the \texttt{Qgraf} output into a format suitable for the database system \texttt{Minos}~\cite{minos}, which organises and evaluates the diagrams systematically.

\texttt{Minos} handles the bookkeeping of topologies and flavour factors and provides access to the individual diagram contributions. 
For the loop integrations, we interface \texttt{Minos} with \texttt{FORCER}~\cite{Ruijl:2017cxj}, which evaluates the required two‑point integrals up to four loops. 
The assignment of \texttt{FORCER} integrals to their corresponding flavour factors and the assembly of the full expressions for propagators and vertices are managed automatically by \texttt{Minos}. 
This entire chain—from diagram generation to integral evaluation—runs efficiently, with even the highest‑loop contributions processed within seconds. As a result, the computational overhead arising from the flavour structure of the theory remains minimal. 
For renormalisation, we employ the ansatz described in Sec.~\ref{sec:renlgen} and implemented in an automated \texttt{FORM} code. 
The optimised procedure performs the full four‑loop renormalisation of the theory within seconds.

The most computationally expensive part of our calculations is the use of \texttt{FORCER}, which evaluates integrals through integration-by-parts identities. 
\texttt{FORCER} is specifically designed for two‑point massless diagrams up to four loops, making it significantly more efficient than general integral reduction algorithms. This is ideally suited for renormalisation, since the counterterms of all relevant propagators and vertices can typically be expressed entirely in terms of two‑point massless diagrams. For vertex functions, where the UV divergence is only logarithmic, we apply IR rearrangement: one can always choose one external momentum for the three‑point vertex, or two external momenta for the four‑point vertex, to be set to zero without affecting the counterterms.

In the present work, we compute the four‑loop counterterms for the fields and for the Yukawa vertex. Owing to technical difficulties in extending our automated framework to implement IR rearrangement for the quartic vertex in an IR‑safe manner, the four‑loop renormalisation of the quartic vertex is not included.

\subsection{Gross-Neveu-Yukawa model from $\lgen$}

We first demonstrate that the results for the GNY model can be recovered from those of $\lgen$ by simply setting $n_s=1$. 
For comparison, we use the three‑loop results of \cite{Mihaila:2017ble} and the four‑loop results of \cite{Zerf:2017zqi}. The conventions employed in these references differ from those adopted in our calculation, specifically:
\begin{equation}\label{eq:maptozerf}
\lambda=4\lambda\Big|_{\text{\cite{Zerf:2017zqi,Mihaila:2017ble}}}\quad,\quad a= \frac{1}{2}a\Big|_{\text{\cite{Zerf:2017zqi,Mihaila:2017ble}}}\quad,\quad \eps=\frac{1}{2}\eps\Big|_{\text{\cite{Zerf:2017zqi,Mihaila:2017ble}}}\quad,\quad n_f=2n_f\Big|_{\text{\cite{Zerf:2017zqi,Mihaila:2017ble}}}\;,
\end{equation}
where $|_{\text{\cite{Zerf:2017zqi,Mihaila:2017ble}}}$ denotes that the corresponding quantity is based on the conventions used in \cite{Zerf:2017zqi,Mihaila:2017ble}. We present the example of the $\beta$-function of the Yukawa vertex for $\lgen$ here. 
After matching the conventions, the three-loop result is given by:
\begin{equation}
\begin{aligned}
\beta^{\text{gen}}_a &= -\eps a+ a^{2}\!\left(4 - n_{s} + 2 n_{f}\right)
+ a\,\lambda^{2}\!\left(16 + 8 n_{s}\right)
+ a^{2}\lambda\!\left(-16 - 8 n_{s}\right) \\
&\quad
+ a^{3}\!\left(-8 + 8 n_{s} - \frac{9}{8} n_{s}^{2} - 6 n_{f}\right) \\
&\quad
+ a\,\lambda^{3}\!\left(-128 - 80 n_{s} - 8 n_{s}^{2}\right)
+ a^{2}\lambda^{2}\!\left(184 + 90 n_{s} - n_{s}^{2} - 60 n_{f} - 30 n_{f} n_{s}\right) \\
&\quad
+ a^{3}\lambda\!\left(96 + 36 n_{s} - 6 n_{s}^{2} + 60 n_{f} + 30 n_{f} n_{s}\right)\\
&\quad
+ a^{4}\!\left(20 - 41 n_{s} + \frac{189}{16} n_{s}^{2} - \frac{109}{64} n_{s}^{3} + 5 n_{f} - \frac{17}{4} n_{f} n_{s} + \frac{43}{32} n_{f} n_{s}^{2} + \frac{1}{2} n_{f}^{2} + 3 n_{f}^{2} n_{s}\right) \\
&\quad
+ a^{4}\zeta_{3}\!\left(36 - 27 n_{s} + \frac{9}{2} n_{s}^{2} + \frac{3}{4} n_{s}^{3} + 18 n_{f} - 3 n_{f} n_{s} - \frac{3}{2} n_{f} n_{s}^{2}\right)
\, .
\end{aligned}
\end{equation}
By setting $n_s = 1$ in the $\lgen$ result, we get the result for the GNY model:
\begin{align}
\beta_{a}^{\text{GNY}} =&\ -\eps a + (3 + 2 n_{f})\, a^{2} + 24\, a \lambda^2 - 24\, a^2 \lambda -  a^{3}\big(\frac{9}{8} + 6 n_{f}\big) - 216 a\lambda^{3} \nonumber\\
&\quad +a^2 \lambda^{2} (273 - 90 n_{f}) + a^{3} \lambda(126+90 n_{f}) \nonumber \\
&\quad + a^{4}\left(-\frac{697}{64} + \frac{67}{32} n_{f} + \frac{7}{2} n_{f}^{2}\right)+ \frac{a^{4}}{4} \zeta_{3}\!\left(57 + 54 n_{f}\right) \, ,
\end{align}
while the four-loop contribution is,
\begin{equation}
\begin{aligned}
\beta_a^{\text{GNY},(4)}&= 14040\, a \lambda^{4}
+ a^{2}\lambda^{3}\!\left(-16380 + 288\,n_{f}\right)
+ 5184\, a^{2}\lambda^{3}\zeta_{3} \\
&\quad
+ a^{3}\lambda^{2}\!\left(-\frac{4455}{2} - 1270\,n_{f} - 12\,n_{f}^{2}\right) 
+ a^{3}\lambda^{2}\zeta_{3}\!\left(-2700 + 648\,n_{f}\right)\\
&\quad
+ a^{4}\lambda\!\left(-\frac{2829}{8} - 683\,n_{f} + 12\,n_{f}^{2}\right)
+ a^{4}\lambda\zeta_{3}\!\left(-378 - 648\,n_{f}\right) \\
&\quad
+ a^{5}\zeta_{4}\!\left(\frac{171}{8} + \frac{69}{2}\,n_{f} + \frac{27}{2}\,n_{f}^{2}\right)+ a^{5}\zeta_{5}\!\left(-\frac{215}{2} - 105\,n_{f}\right) \\
&\quad
+ a^{5}\zeta_{3}\!\left(\frac{5}{8} - \frac{331}{2}\,n_{f} - \frac{125}{2}\,n_{f}^{2}\right) + a^{5}\Delta_{3}n_f(1+107\zeta_3-125\zeta_5)
\\
&\quad +  a^{5}\!\left(\frac{30529}{512} + \frac{9907}{64}\,n_{f} - \frac{899}{24}\,n_{f}^{2} + \frac{11}{6}\,n_{f}^{3}\right)
\, ,
\end{aligned}
\end{equation}
matching the results of \cite{Zerf:2017zqi,Mihaila:2017ble}. We note that the term $\Delta_3$ represents the additional three‑dimensional contribution arising from the odd traces of the diagrams of the type shown in figure \ref{fig:oddtrace}. 
In three dimensions, one has $\Delta_3=1$, while in four dimensions $\Delta_3=0$.  
This illustrates how the generalised Lagrangian is capable of reproducing the correct results for the GNY model. Likewise, the results for the NJLY model are obtained directly by substituting $n_s=2$ into the expressions derived from $\lgen$, without the need to perform a separate calculation for that model.

\subsection{Emergent SUSY in \texorpdfstring{$\lgen$}{lgen}}

We now investigate whether there exists a \emph{supersymmetric point} $(n_s,n_f)$ within this Lagrangian. 
We refer to a point as supersymmetric if the SUSY Ward identities are satisfied for those specific values of $n_s$ and $n_f$. To demonstrate the emergence of SUSY in $\lgen$, we employ the SUSY Ward identity relating the anomalous dimensions of the fermion and scalar fields. 
Accordingly, we search for all-order solutions in perturbation theory of the equation
\begin{equation}
\label{eq:gf=gs}
  \gamma_f = \gamma_s\, ,  
\end{equation}
where $\gamma_f$ and $\gamma_s$ denote the anomalous dimensions of the fermionic and scalar fields, respectively. 
Since each order in the coupling constant(s), as well as each transcendental structure (such as factors of 
$\zeta(n)$, is linearly independent, the equality must yield a distinct set of equations at every loop order and for every transcendental coefficient. 
The condition in eq.~\eqref{eq:gf=gs} is therefore satisfied if and only if all these equations possess a non‑empty set of simultaneous solutions.

In the following, we present the curves corresponding to each loop order up to $L=4$, for which the $L$-loop anomalous dimensions of the fermion and scalar fields coincide. 
A \emph{supersymmetric point}, if it exists, is identified as a point in parameter space at which all of these curves intersect simultaneously.
\begin{itemize}
    \item tree level: $\lambda = g^2$ 

    \item one-loop: $\displaystyle n_s - 2 n_f = 0$
    \hfill (equality of bosonic and fermionic degrees of freedom)

    \item two-loop: 
    \[
        -\frac{1}{8} n_s^2 - \frac{1}{4} n_f n_s - \frac{1}{4} n_s
        + 2 n_f - \frac{1}{2} = 0
    \]

    \item three-loop:
    \[
        -\frac{17}{8} n_f^2 n_s - \frac{9}{2} n_f^2
        - \frac{13}{8} n_f n_s^2
        + \frac{283}{16} n_f n_s
        - \frac{153}{8} n_f
        + \frac{41}{32} n_s^3
        - \frac{225}{32} n_s^2
        + \frac{127}{16} n_s
        + 1 = 0
    \]

    \item three-loop $\zeta(3)$:
    \[
        (n_s - 2 n_f)\bigl(n_s^2 - 6 n_s + 4\bigr) = 0
    \]

    \item four-loop $\zeta(4)$:
    \[
        (n_s - 2 n_f)\Bigl(
            n_f n_s^2 - 10 n_f n_s + 12 n_f
            - n_s^3 + 18 n_s^2 - 60 n_s + 48
        \Bigr) = 0
    \]

    \item four-loop $\zeta(5)$:
    \[
        (n_s - 2 n_f)\bigl(
            n_s^3 - 8 n_s^2 + 24 n_s - 16
        \bigr) = 0
    \]

    \item four-loop $\zeta(3)$:
    \begin{align*}
        0 ={}&
        - n_f^3 n_s + 8 n_f^3
        - n_f^2 n_s^2 + 10 n_f^2 n_s + 44 n_f^2
        - \frac{1}{4} n_f n_s^3 - \frac{49}{2} n_f n_s^2
        + 63 n_f n_s - 68 n_f \\
        &\quad
        + 2 n_s^4 - \frac{7}{4} n_s^3
        - \frac{43}{2} n_s^2 + 22 n_s
    \end{align*}

    \item four-loop:
    \begin{align*}
        0 ={}&
        \frac{181}{48} n_f^3 n_s - \frac{49}{6} n_f^3
        + n_f^2 n_s^2 + \frac{217}{8} n_f^2 n_s
        - \frac{389}{12} n_f^2
        - \frac{1157}{96} n_f n_s^3 + \frac{67}{2} n_f n_s^2 \\
        &\quad
        - 129 n_f n_s + 195 n_f
        + \frac{545}{128} n_s^4
        - \frac{3773}{192} n_s^3
        + \frac{1525}{24} n_s^2
        - \frac{597}{8} n_s
        - \frac{125}{8}
    \end{align*}
\end{itemize}
All of these curves intersect at two points: $(2,1)$ and $\left(1, \tfrac{1}{2}\right)$. 
To illustrate this, in Fig.~\ref{fig:emgsusytheo} we display the curves corresponding to the lowest transcendental weights, those yielding the most intricate structures, at each loop order from one to four.
\begin{figure}
    \centering
    \includegraphics[width=0.5\linewidth]{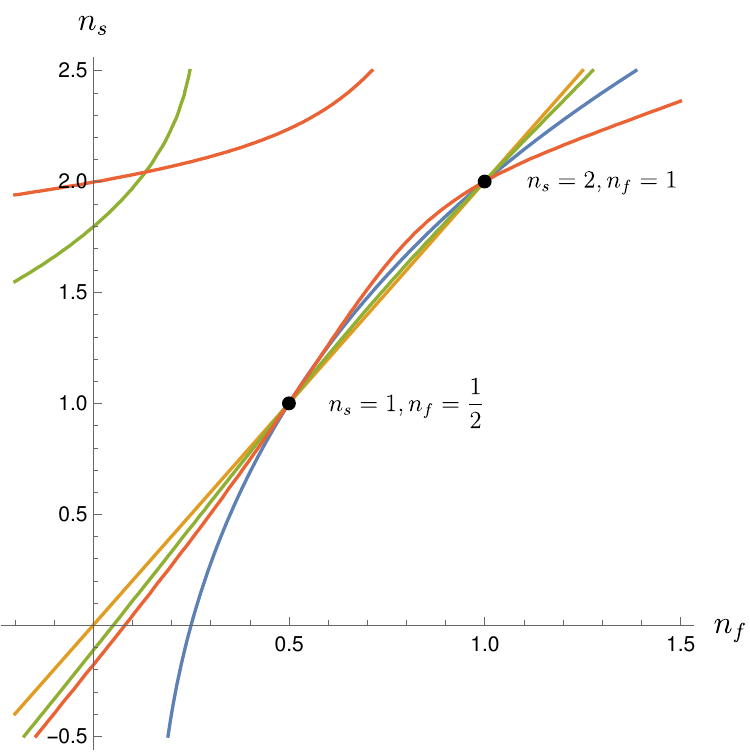}
    \caption{Plot of the SUSY Ward Identity $\gamma_f=\gamma_s$ at every loop order up to 4 loops, y-axis is $n_s$, x-axis is $n_f$. Curves for every loop order intersect at the same 2 points, signifying that $\lgen$ as a whole becomes supersymmetric at those two points.}
    \label{fig:emgsusytheo}
\end{figure}

Since we employ Weyl fermions, the value $n_f = 1$ corresponds to two fermionic degrees of freedom. Consequently, the first intersection point represents a theory with two scalar and two fermionic degrees of freedom, precisely matching the field content of the $\mathcal{N}=1$ chiral multiplet. 
In \cite{Zerf:2017zqi}, Dirac spinors were used instead; in that convention, SUSY emerges at $n_f = 1/2$, which corresponds to the same number of fermionic degrees of freedom. 

The point $\left(1, \tfrac{1}{2}\right)$ corresponds to a theory with a single fermionic degree of freedom. In four dimensions, such a field content could only arise from a Weyl–Majorana fermion; however, no irreducible Weyl–Majorana representation exists in four dimensions. Interestingly, this particle content coincides with that of the three‑dimensional WZ model~\cite{Grover:2013rc}, which contains one real scalar and one two‑component Majorana fermion, the latter possessing one on‑shell fermionic degree of freedom.

To give further evidence that those are indeed the emergent SUSY points for the entire generalised Lagrangian, we can plot the curves for the non-renormalisation of the superpotential (see Figure \ref{fig:non-renorm}):
\begin{equation}
    \beta_a = -3\gamma_{\phi}\,.
\end{equation}
The above relation only emerges at the point $(2,1)$, but not at the point $\left(1, \tfrac{1}{2}\right)$. 
This is precisely consistent with the fact the former point corresponds to the four-dimensional WZ model where holomorphy exists and non-renormalisation theorems hold, whereas the latter emergent SUSY point has no holomorphy as it corresponds to an $\mathcal{N}=1$ three-dimensional SUSY.
\begin{figure}
    \centering
    \includegraphics[width=0.5\linewidth]{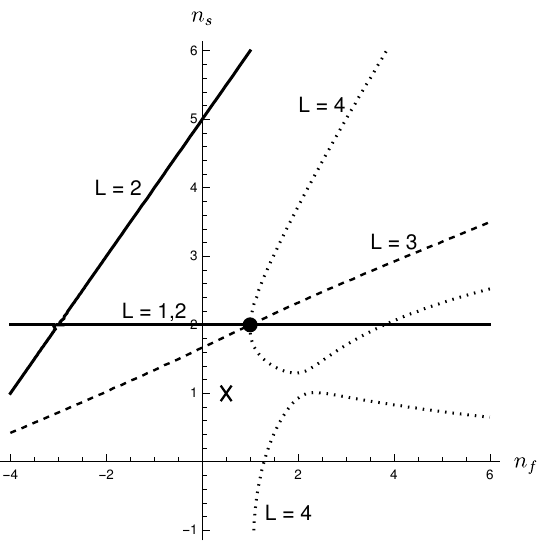}
    \caption{Curves for the non-renormalisation of the super-potential, $\beta_a = -3\gamma_{\phi}$ at every loop order up to four loops. All the curves intersect at the four-dimensional WZ point $(2,1)$, denoted by the black dot, but not at the three-dimensional WZ point $\left(1, \tfrac{1}{2}\right)$, denoted by the black cross. This is consistent because the four-dimensional superpotential is holomorphic but the three-dimensional one is not.}
    \label{fig:non-renorm}
\end{figure}

Quantitatively, the renormalisation constants at the two emergent SUSY points agree with existing four-loop calculations made directly in the four-dimensional WZ model,~\cite{Avdeev:1982jx} after making the substitutions in eq.~\eqref{eq:maptozerf}:
\begin{equation}
    \beta_{a}^{\,\text{4D-WZ}} = 3a^2 - 3a^3 + a^4\left(\frac{15}{4} + 9\zeta_3\right) + a^5\left(-\frac{27}{4} + \frac{27}{2}\zeta_4 - 60\zeta_5 - 45\zeta_3\right)
\, ,
\end{equation}
confirming that this intersection indeed corresponds to the four-dimensional WZ model. 

As mentioned in Sec.~\ref{sec:methodology}, the direct computation of the quartic vertex $\beta$ function for $\lgen$ has not been carried out in this work. However, our four-loop computation of the Yukawa vertex $\beta_a$ function is sufficient to establish that the following Ward identity also holds through four loops at the same two emergent SUSY points:
\begin{equation}
    \beta_a = \beta_{\lambda},
\end{equation}
By mapping our result for the $\beta_a$ function in $\lgen$ to the results in~\cite{Zerf:2017zqi} using eq.~\eqref{eq:maptozerf}, we can see that at the two emergent SUSY points  $\left(1, \tfrac{1}{2}\right)$ and  $(2,1)$ from $\lgen$ become equal to $\beta_\lambda$ for the GNY and NJLY models at their respective SUSY points. 
The check for this is included in the ancillary file. 
This quantitative analysis shows that these two cases are the only ones in which emergent SUSY can be realized. Starting from the most general form of a renormalisable theory with scalars and fermions, and imposing only flavour non-evanescence, we find that these two possibilities exhaust all renormalisable supersymmetric theories.

We note a key distinction between the emergent SUSY observed in $\lgen$ and the previously known cases discussed in~\cite{Mihaila:2017ble,Zerf:2017zqi,Grover:2013rc}. 
In the existing literature, the two supersymmetric points arise from two distinct theories, the GNY and NJLY models, without any underlying unifying framework. 
In contrast, within $\lgen$ both supersymmetric points, as well as the GNY and NJLY models themselves, emerge naturally. 
In this sense, $\lgen$ provides a unified description of all these theories, and gives twice the amount of Ward Identities from the emergent SUSY points.

In the next section, we demonstrate how emergent SUSY can be exploited to optimise perturbative calculations. As we will see, $\lgen$ supplies additional SUSY Ward identities that remain applicable even when studying a non‑supersymmetric theory, thereby enabling the simplification of computations by reducing the number of integrals to be computed.

\subsection{Optimisation with Emergent SUSY}\label{sec:application}

The method of employing emergent SUSY to optimise the calculations proceeds as follows:
\begin{enumerate}
\item  Express the 1PI correlator in its most general form, decomposed into flavour-dependent tensor structures multiplied by flavour‑independent loop integrals.

\item  Insert the specific flavour assignments and coupling values for which emergent SUSY is conjectured to arise from $\lgen$. This substitution yields a set of relations among the flavour‑independent integrals.

\item Employ these relations to eliminate the most technically demanding integrals from the general expressions of the 1PI correlators, thereby minimising the number of loop integrations that must be evaluated explicitly.

\item Finally, substitute the flavour values corresponding to the physical theory for which the correlators are to be computed.
\end{enumerate}

\noindent 
Take, for example, the two‑loop propagator 1PI correlators for the fermion (solid lines) and the scalar fields (dashed lines).  
In step 1, we obtain the following general relation:
\begin{equation}\label{eq:emgsusytwoloop}
    \begin{aligned}
        \Gamma_\psi^{(2)} &= a_0^2n_s(n_s-2)\,\intpsio +a_0^2n_fn_s\,\intpsitw \, ,\\
        \Gamma_\phi^{(2)} &= a_0^2n_f(n_s-2)\,\intphio+a_0^2n_fn_s\,\intphitw + \lambda_0^2(n_s+2)\,\intphithr\;.
    \end{aligned}
\end{equation}
Note that the diagrams displayed above are understood to be evaluated with the appropriate Born projections, such that the corresponding tree‑level contributions are normalised to unity.  

Proceeding to step 2, we impose the condition $a_0=\lambda_0$ under which emergent SUSY is conjectured to arise. 
At the SUSY points, the supersymmetric Ward identity  
$\Gamma_\psi^{(2)}=\Gamma_\phi^{(2)}$ must hold.
Substituting for $(n_s, n_f)$ the pair of values $\left(1, \tfrac{1}{2}\right)$ and 
 $\left(2, 1\right)$ yields linear relations among the corresponding integrals,
\begin{equation}
\label{eq:optimizing-susy-1}
    \begin{aligned}
         -1\,\intpsio + \frac{1}{2}\intpsitw&= -\frac{1}{2}\intphio + \frac{1}{2}\intphitw+3\intphithr\, ,\\
        2\intpsitw &=2\intphitw+4\intphithr \; .
    \end{aligned}
\end{equation}
Now proceed to step 3. The two relations 
in eq.~\eqref{eq:optimizing-susy-1} allow us to express two of the integrals in terms of the remaining ones:
\begin{equation}
    \begin{aligned}
    \intphithr &=\frac{1}{2}\intpsitw - \frac{1}{2}\intphitw \;,\\
         \intpsio &= \frac{1}{2}\intphio + \frac{1}{2}\intphitw -\frac{1}{2}\intpsitw\, .
    \end{aligned}
\end{equation}
Substituting the above expressions into eq.~\eqref{eq:emgsusytwoloop} completes step 3. Consequently, the number of independent integrals is reduced from five to three. To apply the optimised expressions to the GNY model, one then sets $n_s=1$, which constitutes step 4.  

It should be emphasised that the integrals appearing in eq.~\eqref{eq:emgsusytwoloop} are not scalar integrals; they still carry their full spinor structures, with only the flavour decomposition having been performed. 
The reduction in the number of independent integrals therefore directly decreases the number of objects that must be subjected to the standard integration-by-parts reduction, accelerating the computationally most demanding part of the analysis.

At first sight, it appears that at any given loop order we can eliminate only two integrals, since the two SUSY‑point Ward identities provide only two relations. This would suggest that the optimisation procedure becomes ineffective at higher loop orders, where the number of independent integrals grows faster than factorially.  

However, at higher loops the situation improves: the most intricate topologies often satisfy the SUSY Ward identities independently of the remaining ones, thereby yielding additional, topology‑specific relations. These extra equations allow further eliminations beyond the two obtained from the SUSY points alone.  

For example, consider the three‑loop non‑planar topologies:
\begin{equation}
    \nonpphi\, ,\qquad \nonppsi\, .
\end{equation}
Again, the diagrams displayed above are understood to be evaluated with the appropriate Born projections. In this case, the two non‑planar three‑loop diagrams satisfy a SUSY Ward identity independently of all other topologies. Explicitly, they obey the relation~\footnote{This relation can also be seen with an adaptation of the cut-and-glue technique of~\cite{Baikov:2010hf}. Closing the open ends of each two-point function with a propagator leads to the same topology up to a symmmetry factor.}:
\begin{equation}
    \nonpphi=\quad2\; \nonppsi
\end{equation}
The origin of this behaviour may possibly be related to the independence of the corresponding master integrals, a point that could be investigated in a dedicated future study. 

An explicit application of the optimisation procedure to the case of anomalous dimensions of operator matrix elements (OMEs) will be presented in the next section.

\section{Application: Optimising operator renormalisation}
\label{sec:omes}

Anomalous dimensions of twist‑two operators determine the scale dependence of parton distribution functions in QCD and depend on the spin $N$. At low loop order one can compute closed expressions in $N$ for the relevant OMEs, but at higher loops the growing complexity of the topologies typically forces one to compute many fixed‑$N$ values and reconstruct the general $N$ dependence afterward, see, e.g.~\cite{Falcioni:2025hfz}.
As the loop order increases, both the complexity of the general $N$ result and the cost of computing each fixed‑$N$ value grow sharply, often requiring weeks or months of computing time. 
The use of SUSY can reduce the number of required integrals, see for example~\cite{Moch:2018wjh}.

In the following sections, we show how the generalised Lagrangian and the optimisation technique introduced in Sec.~\ref{sec:application} can systematically use SUSY to streamline operator anomalous‑dimension calculations. We develop the tool here using the GNY model as a proof-of-concept, to be applyied to theories like QCD in the future.

Due to the $SU(n_f)$ flavour symmetry of the fermions in the GNY model, there is one non‑singlet fermion operator
\begin{equation}
\label{eq:opns-def}
 \mathcal{O}^{\mu_1\mu_2\ldots\mu_N;\;i}_\psi(N) = \frac{(-i)^{N-1}}{N!}\bar\psi_a\gamma^{[\mu_1}\partial^{\mu_2\ldots\mu_N]}t^i_{ab}\psi_b\:-\:\text{traces}
 \, .
\end{equation}
The following notation and conventions are used in eq.~\eqref{eq:opns-def}:
\begin{itemize}
    \item The Lorentz indices of the operator $\mathcal{O}^{i}_\psi$ have been supressed for brevity, but it carries an $N$-index tensor structure.

    \item As $\mathcal{O}^{i}_\psi$ transforms in an irreducible spin-$N$ representation of Lorentz algebra, it is traceless in the Lorentz indices $\mu_i$.

    \item $t^i$ is a generator of the $\mathfrak{su}(n_f)$ Lie algebra. 

    \item $\partial^{\mu_i\ldots\mu_N}=\partial^{\mu_i}\partial^{\mu_{i+1}}\ldots\partial^{\mu_N}$, and $[\mu_1\ldots\mu_N]$ denotes all possible symmetric combinations of the Lorentz indices $\mu_i$.
\end{itemize}
In addition to this, we also have two flavour-singlet operators:
\begin{align}   
\mathcal{O}^{\mu_1\mu_2\ldots\mu_N}_\psi(N) &= \frac{(-i)^{N-1}}{N!}\bar\psi_a\gamma^{[\mu_1}\partial^{\mu_2\ldots\mu_N]}\psi_a\:-\:\text{traces}\\
    \mathcal{O}^{\mu_1\mu_2\ldots\mu_N}_\phi(N) &= \frac{(-i)^{N}}{N!}\phi\,\partial^{[\mu_1\ldots\mu_N]}\,\phi\:-\:\text{traces}
\end{align}
Contracting with $n_{\mu_1}n_{\mu_2}\ldots n_{\mu_N}$ where $n_\mu$ is a light-like vector, the trace parts vanish and we have the following light cone projected versions of the operators:
\begin{align}  
\mathcal{O}^{i}_\psi(N) &=\bar\psi_a\gamma_+(-i\partial_+)^{N-1}t^i_{ab}\psi_b\\
\mathcal{O}_\psi(N) &= \bar\psi_a\gamma_+(-i\partial_+)^{N-1}\psi_a\\
    \mathcal{O}_\phi(N) &= \phi\,(-i\partial_+)^{N}\,\phi
\end{align}
where the subscript $+$ comes from the notation of light-cone coordinates, wherein $p.n=p_+$. Note that the operator $\mathcal{O}_\phi(N)$ only exists for even $N$ as it is antisymmetric for odd $N$ modulo total derivatives, which vanish for the case of zero momentum flow through the operator considered here.

\subsection{Towards emergent SUSY} 

Our objective is to use $\lgen$ to map the operator Green's functions in the GNY model to those of the WZ model, from where we can use SUSY Ward identities to find relations among the integrals and hence eliminate some of them from the calculation.
Establishing how this optimisation operates in the GNY model provides a concrete template for the phenomenologically relevant case of QCD OMEs.

As the fermion flavour symmetry in the WZ model is only a $U(1)$ symmetry, it contains no non-trivial non-singlet operator, but only the following flavour singlet operators:
\begin{align}  
\mathcal{O}^{WZ}_\psi(N) &= \bar\psi\,\overline\sigma_+(-i\partial_+)^{N-1}\psi\, ,\\
    \mathcal{O}^{WZ}_\phi(N) &= \phi^\dagger\,(-i\partial_+)^{N}\,\phi
\, ,
\end{align}
where the $\psi$ is now a Weyl fermion and hence we have $\overline{\sigma}_+$ instead of $\gamma_+$. 

Thus, we can only expect to obtain a relation in the case of singlet operators in the GNY model, which is also the more involved case due to operator mixing. Note that both odd and even moments are defined for the WZ operators. In our generalised model, we will consider the following singlet operators:
\begin{align}  
\mathcal{O}^{\text{gen}}_\psi(N) &= \bar\psi_a\,\gamma_+(-i\partial_+)^{N-1}\psi_a\, ,\\
    \mathcal{O}^{\text{gen}}_\phi(N) &= \phi^I\,(-i\partial_+)^{N}\,\phi^I
\, .
\end{align}
We expect that the point $n_s=2,\,n_f=1/2$ yields the WZ Green’s functions, whereas $n_s=1$ reproduces those of the GNY model. Note that these generalised singlet operators are defined only for even moments. 
As a consequence, any emergent SUSY can occur only at even moments, since for odd moments the scalar operators of $\lgen$ and those of the WZ model do not coincide.

For the purpose of deriving the Ward identities, we promote the singlet operators to vertices of the generalised Lagrangian:
\begin{equation*}
 \mathcal{L}^{\text{gen}}_{\text{op},N} = \lgen+ p_0\mathcal{O}^{\text{gen}}_\phi(N)+q_0\mathcal{O}^{\text{gen}}_\psi(N)
 \, ,
\end{equation*}
where we have suppressed the bare subscripts for the fields. 
The renormalised operator couplings $p$ and $q$ are implicitly multiplied by the same Grassmann number $c$ so that 
\begin{equation}
    p^2=q^2=pq = 0\,.   
\end{equation}
This setup guarantees that each diagram in the theory contains at most one operator vertex, ensuring correspondence with the operator‑matrix‑element topology. 
Within this formalism, operator mixing is encoded in the mixing of the two coupling constants:
\begin{equation}\label{oper_ren}
    \begin{bmatrix}
        p_0 \\
        q_0
    \end{bmatrix} =\mu^{2-N}
    \begin{bmatrix}
        Z_{pp} & Z_{qp} \\
        Z_{pq} & Z_{qq}
    \end{bmatrix}
    \begin{bmatrix}
        p \\
        q
    \end{bmatrix}
\end{equation}
The $Z$-factors above are independent of $p$ and $q$. 
The $\beta$ functions of these couplings act as the anomalous dimensions of the operators.

\subsection{Operator Ward Identities in the WZ model}\label{sec:wardi}

To employ emergent SUSY as an optimisation tool, we must first derive a SUSY Ward identity for the operators. In this section, we summarise the key steps and results that lead to this Ward identity; the full derivation is provided in App.~\ref{app:wi}. 

The modified Lagrangian for the WZ model, including the operator vertices, is given by
\begin{equation}
 \mathcal{L}^{\text{WZ}}_{\text{op},N} = \mathcal{L}^{\text{WZ}}+\underbrace{ p_0\mathcal{O}^{\text{WZ}}_\phi(N)+q_0\mathcal{O}^{\text{WZ}}_\psi(N)}_{\mathcal{L}^{\text{WZ}}_{\mathcal{O}}}\, .   
\end{equation}
Then, the path integral with sources for this model is given by 
\begin{equation}
Z(J) = \int d\Phi\; e^{i(S+S_{\mathcal{O}}+\int d^Dx\, J^T\Phi)}\;,   
\end{equation}
where the actions are $S=\int d^dx\,\mathcal{L}^{\text{WZ}}$ and $S_{\mathcal{O}}=\int d^Dx\,\mathcal{L}^{\text{WZ}}_{\mathcal{O}}$.
The fields and source terms are abbreviated as 
$\Phi=(\phi,\phi^\dagger,\bar\psi_{\dot\alpha},\psi_\alpha)$ and $J =(J_\phi,J_{\phi^\dagger},-J_{\bar\psi}^{\dot{\alpha}},J_{\psi}^\alpha)$. 
From there, we can derive the following Ward identity (see App.~\ref{app:wi}):
\begin{equation}
    \langle\;\phi(y_1)\bar\psi_{\dot\alpha}(y_2)\delta_{\epsilon,\bar\epsilon} S_{\mathcal{O}}\;\rangle^{\small\text{WZ}} + \langle\;\phi(y_1)\delta_{\epsilon,\bar\epsilon}\bar\psi_{\dot\alpha}(y_2) S_{\mathcal{O}}\;\rangle^{\small\text{WZ}}+\langle\;\delta_{\epsilon,\bar\epsilon}\bar\phi(y_1)\psi_{\dot\alpha}(y_2) S_{\mathcal{O}}\;\rangle^{\small\text{WZ}} = 0\;,
\end{equation}
where the superscript WZ indicates that the correlator is based on the pure WZ action $S$ without the operator vertices. 
The infinitesimal SUSY transformations for the fields are given by:
\begin{align}
    \delta_{\epsilon,\bar\epsilon}\,\phi &= \epsilon^\alpha\psi_\alpha\, ,\\ 
    \delta_{\epsilon,\bar\epsilon}\,\bar\psi_{\dot\alpha} &= -i(\epsilon\sigma^\mu)_{\dot\alpha}\partial_\mu\phi^\dagger -\bar\epsilon_{\dot\alpha}F^\dagger\, .
    \label{fermionsusy}
\end{align}
Above, we have omitted overall factors of $\sqrt2$ on the right-hand sides, as they do not affect the Ward identity. However, the presence of the auxiliary field in the SUSY transformation laws introduces cumbersome terms in the variation $\delta_{\epsilon,\bar\epsilon} S_{\mathcal{O}}$ 
of the operator action, thereby rendering the resulting Ward identity ineffective for optimisation:
\begin{equation}\label{ward_explicit}
\begin{aligned}
    0  =&\:(-i\slashed\partial_{y_2})_{\alpha\dot\alpha}\langle\phi(y_1)\,S_\mathcal{O}\,\phi^\dagger(y_2)\rangle + \langle\psi_\alpha(y_1)\,S_\mathcal{O}\,\bar\psi_{\dot\alpha}(y_2)\rangle \\
    &+\langle\phi(y_1)\int d^Dx\left[p_0\mathcal{O}^1_{\alpha}(x)+q_0\mathcal{O}^2_{\alpha}(x)-q_0\mathcal{O}^F_{\alpha}(x)\right]\,\bar\psi_{\dot\alpha}(y_2)\rangle\;.
\end{aligned}
\end{equation}
Here we have new operators with open fermionic indices:

\begin{equation}
\begin{aligned}
    \mathcal{O}^1_{\alpha}& = \phi^\dagger(-i\partial_+)^N\psi_\alpha \\
    \mathcal{O}^2_{\alpha}& = \phi^\dagger(-i\partial_+)^{N-1}(-i\partial_\mu)(\sigma^\mu\bar{\sigma}_+\psi)_\alpha\\
    \mathcal{O}^F_{\alpha}& =(\sigma_+\bar\psi)_\alpha(-i\partial_+)^{N-1}F
\end{aligned}
\end{equation}
The two correlators in the first line of eq.~\eqref{ward_explicit} are precisely the quantities we aim to relate. The correlator in the second line, however, involves exotic operators, does not vanish, and introduces even more complicated structures. As a result, this equation fails to provide a useful emergent‑SUSY relation between the OMEs.

Fortunately, upon projecting to the light‑cone basis (appendix \ref{app:lightconesusy},) the auxiliary field is removed from the SUSY transformation \eqref{fermionsusy}. Such a method was used in \cite{Onishchenko:2003hs} for super-conformal operators. It enables us to obtain a meaningful relation between the OMEs: 
\begin{equation}\label{ward_ult}
\begin{aligned}
     0=&-i\sigma_-\partial_+\langle\phi(y_1)\,S_\mathcal{O}\,\phi^\dagger(y_2)\rangle + \langle\psi_-(y_1)\,S_\mathcal{O}\,\bar\psi_-(y_2)\rangle\\
    &+\langle\phi(y_1)\int d^Dx(p_0-2q_0)\mathcal{O}^1_-(x)\,\bar\psi_-(y_2)\rangle\;.
\end{aligned}
\end{equation}
The subscripts $+$ and $-$ denote light‑cone coordinates. Explicit definitions of the light‑cone elements are provided in App.~\ref{app:lightconesusy}. 
The Ward identity above is useful because, as shown in the following theorems, the correlator on the second line vanishes (see corollary~\ref{cor:peqq} discussed below), since $p_0=2q_0$. 
This yields a direct relation between the OMEs we seek to compute, which we will subsequently use for optimisation.

Using the Ward identity \eqref{ward_ult} -- whose formulation holds for all loop orders and all Mellin moments -- we obtain a relation among the operator anomalous dimensions that is analogous to those encountered in QCD. This phenomenon fits into a broader structural pattern: the SUSY combinations of anomalous dimensions, once transformed to the SUSY‑preserving dimensional‑reduction scheme, are known to vanish for both the unpolarised and polarised two‑loop splitting functions~\cite{Vogelsang:1996im}.

The all‑order SUSY‑motivated relations among the counterterms and anomalous dimensions of the singlet operators can be summarized as follows:
\begin{restatable}{theo}{mainthm}\label{the_theorem}
    The counterterms at any value of the moment $N$ are related to each other via the following equation in the $\overline{\text{MS}}$-scheme: 
    \begin{equation}\label{OME_ctm}
        2Z_{qq}+4Z_{pq} = 2Z_{pp}+Z_{qp}
    \end{equation}
    which implies the following relation for the respective anomalous dimensions:
    \begin{equation}\label{OME_gm}
        2\gamma_{qq}+4\gamma_{pq} = 2\gamma_{pp}+\gamma_{qp} 
    \end{equation}
\end{restatable}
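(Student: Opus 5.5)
Our approach is to combine the light-cone Ward identity \eqref{ward_ult} with the renormalisation of the operator couplings \eqref{oper_ren}. We first rewrite the bare couplings through the counterterm matrix, $p_0=\mu^{2-N}(Z_{pp}p+Z_{qp}q)$ and $q_0=\mu^{2-N}(Z_{pq}p+Z_{qq}q)$. The ordering of the indices is read off from \eqref{oper_ren}.

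The central step is to show that the exotic correlator in the second line of \eqref{ward_ult}, which involves $\mathcal{O}^1_-$, cannot be made finite by the counterterms of the two physical operators. It is not of the form of any OME of $\mathcal{O}_\phi$ or $\mathcal{O}_\psi$, and it is an operator with an open fermionic index. Its bare coefficient $(p_0-2q_0)$ must therefore vanish as an identity between bare quantities, for the tree-level operator insertion defining the renormalised theory. This is the content of corollary~\ref{cor:peqq}, which the text announces.

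To set this up, we pick renormalised couplings in the proportion $p=2q$, i.e.\ the SUSY-covariant combination corresponding to the supercurrent multiplet. We then demand that the Ward identity relate only finite renormalised OMEs. This requires $p_0=2q_0$ order by order in $a$.

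Substituting $p=2q$ into the bare expressions gives
\begin{equation*}
2Z_{pp}+Z_{qp} = 2\left(2Z_{pq}+Z_{qq}\right),
\end{equation*}
which is \eqref{OME_ctm}. The argument we would give for the vanishing is as follows. The pole parts of the $\mathcal{O}^1_-$ correlator at each loop order must be cancelled by the counterterms available in $\overline{\text{MS}}$. The only counterterms available are those of $p$ and $q$. Any mismatch $(p_0-2q_0)\neq 0$ would leave an uncancelled divergence multiplying an operator outside the renormalisation basis.

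The anomalous dimension relation \eqref{OME_gm} then follows by differentiation. We define $\gamma$ through $\frac{d}{d\ln\mu}\ln Z$, matrix-wise, as $\gamma = Z^{-1}\,dZ/d\ln\mu$ (with the $\mu^{2-N}$ factor contributing a universal diagonal term). Note that the relation has the form $v^T Z\, u=0$ with fixed vectors $u=(2,1)$, $v=(1,-2)$ (up to arrangement). From this we conclude that $u$ is an eigenvector-type null direction preserved along the flow: $Z u \propto u$.

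Differentiating this statement shows that $\gamma u\propto u$ with the same structure, because the $\overline{\text{MS}}$ counterterms depend on $\mu$ only via $a(\mu),\lambda(\mu)$. Expanding the resulting linear relation yields \eqref{OME_gm}.

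The main obstacle is the rigorous justification that $p_0-2q_0$ vanishes, rather than merely that the exotic correlator's divergences cancel. That is, we must exclude the possibility that the $\mathcal{O}^1_-$ correlator is itself finite or mixes back into the physical OMEs. We would attempt this by computing its tree-level and one-loop structure explicitly in the light-cone gauge formulation of App.~\ref{app:lightconesusy}, and then arguing by induction on loop order using the linear independence of its spinor structure from $\sigma_-\partial_+$ times the scalar OME.
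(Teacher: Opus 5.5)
There is a genuine gap at the step you flag yourself. Your bookkeeping is right: at $p=2q$, relation \eqref{OME_ctm} is exactly the vanishing of the loop part of $p_0-2q_0=\mu^{2-N}\bigl[p(Z_{pp}-2Z_{pq})+q(Z_{qp}-2Z_{qq})\bigr]$. Your passage to \eqref{OME_gm} via $Zu=\kappa u$, $u=(2,1)^T$, is also fine, and more explicit than the paper.

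\textbf{The gap.} You assert $p_0=2q_0$ at $p=2q$ but do not derive it.
\begin{itemize}
\item You cannot borrow Corollary~\ref{cor:peqq}, since the paper derives it from this very theorem.
\item The reason you give is not the mechanism. That $\mathcal{O}^1_-$ lies \emph{outside the renormalisation basis} forbids nothing. For $p\neq2q$ the coefficient $p_0-2q_0$ is nonzero already at tree level. The identity then just says that this combination of physical counterterms makes $(p_0-2q_0)\langle\phi\int\mathcal{O}^1_-\bar\psi_-\rangle$ finite.
\item Your proposed repair would fail. After the light-cone projection, all three terms of \eqref{ward_ult} are, at tree level, multiples of the same structure $\sigma_-Q_+^{N+1}/(Q^2)^2$. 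They must be, because the identity is linear in $p_0,q_0$ and holds for independent values of both. So no linear-independence argument can isolate the exotic term.
\item Whether the exotic correlator is itself finite is irrelevant. All that is needed is that its tree-level term is nonzero.
\end{itemize}

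\textbf{The missing idea.} What is missing is a finiteness argument applied to the identity itself, which is how the paper proceeds.
\begin{itemize}
\item Take the pole part $R$ of \eqref{ward_ult}. The renormalised OMEs in the first line are finite and drop out, leaving $R\bigl[(p_0-2q_0)E\bigr]=0$.
\item Here $E=E^{(0)}\bigl(1+\alpha I^{(1)}+\dots\bigr)$, and the tree-level term $E^{(0)}$ is finite and nonzero.
\item The $Z_{ij}$ do not depend on $p,q$, and beyond tree level they are pure poles in $\overline{\text{MS}}$. Hence at $p=2q$ one has $p_0-2q_0=\mu^{2-N}q\sum_{k\ge1}\alpha^k c_k$. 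Each coefficient $c_k=2Z^{(k)}_{pp}-4Z^{(k)}_{pq}+Z^{(k)}_{qp}-2Z^{(k)}_{qq}$ is a pure pole.
\item Suppose $k_0$ were the lowest order with $c_{k_0}\neq0$. Then the $\alpha^{k_0}$ term of $(p_0-2q_0)E$ would be $\mu^{2-N}q\,c_{k_0}E^{(0)}$, a pole that nothing else in the identity can cancel.
\item So every $c_k$ vanishes, which is \eqref{OME_ctm}.
\end{itemize}
The paper runs the same induction with $p,q$ kept general and sets $p=2q$ only at the end. The equality $p_0=2q_0$ is then a consequence, namely the corollary, not an input.
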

\begin{restatable}{cor}{corone}\label{cor:peqq}
    If in the classical theory $p=2q$, then  $p_0=2q_0$, and at all scales $p(\mu)=2q(\mu)$ for any moment $N$.
\end{restatable}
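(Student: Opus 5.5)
The plan is to derive Corollary~\ref{cor:peqq} directly from Theorem~\ref{the_theorem}, reading eq.~\eqref{OME_ctm} as an eigenvector statement for the renormalisation matrix in eq.~\eqref{oper_ren}. First I will write out the two components of eq.~\eqref{oper_ren}:
\begin{equation*}
p_0=\mu^{2-N}\left(Z_{pp}\,p+Z_{qp}\,q\right),\qquad q_0=\mu^{2-N}\left(Z_{pq}\,p+Z_{qq}\,q\right).
\end{equation*}
Imposing $p=2q$ on the renormalised couplings gives $p_0=\mu^{2-N}q\,(2Z_{pp}+Z_{qp})$ and $2q_0=\mu^{2-N}q\,(4Z_{pq}+2Z_{qq})$. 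These two expressions are equal by eq.~\eqref{OME_ctm}, so $p_0=2q_0$. Equivalently, the counterterm relation says that the column vector $(2,1)^T$ is mapped by the $Z$-matrix of eq.~\eqref{oper_ren} to a multiple of itself, with eigenvalue $2Z_{pq}+Z_{qq}$. This holds identically in $\eps$, $a$, $\lambda$, and for every moment $N$.

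The second step propagates the relation to all scales. The $Z$-matrix is $\ide+\mathcal{O}(a,\lambda)$, so it is invertible as a formal power series. An eigenvector of $Z$ is then also an eigenvector of $Z^{-1}$, with the inverse eigenvalue. The bare couplings are $\mu$-independent, and the renormalised couplings are obtained as $(p,q)^T=\mu^{N-2}Z^{-1}(p_0,q_0)^T$. If the bare couplings lie along $(2,1)^T$, then so do the renormalised ones at every $\mu$, which gives $p(\mu)=2q(\mu)$.

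I will phrase this statement in terms of the $\beta$-functions of $p$ and $q$ as well. These act as the anomalous-dimension matrix $\gamma=-\frac{dZ}{d\ln\mu}Z^{-1}$ (up to sign and transposition conventions), and eq.~\eqref{OME_gm} shows that $(2,1)^T$ is also an eigenvector of $\gamma$. Hence $\frac{d}{d\ln\mu}(p-2q)=0$ on the surface $p=2q$, so the surface is preserved by the RG flow. The classical (tree-level) input $p=2q$ fixes the initial condition, and the flow then keeps it at all scales.

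The only delicate point I expect is bookkeeping of index conventions. The off-diagonal labels in eq.~\eqref{oper_ren} ($Z_{qp}$ in the upper-right slot) must be matched correctly with the combination appearing in eq.~\eqref{OME_ctm}; otherwise one would wrongly conclude that $(2,1)$ is a left rather than a right eigenvector. I will check this first by comparing with the lowest-order structure of the counterterms. Beyond that, the argument is purely algebraic given Theorem~\ref{the_theorem}. The Grassmann nilpotency $p^2=q^2=pq=0$ plays no role, since the $Z$-factors are independent of $p$ and $q$.
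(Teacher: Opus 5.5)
Your proposal is correct and follows essentially the paper's route. You use eq.~\eqref{OME_ctm} to get $p_0=2q_0$ from $p(\mu_1)=2q(\mu_1)$, then invert the $Z$-matrix of eq.~\eqref{oper_ren} (invertible as $\ide+\mathcal{O}(\alpha)$) to conclude $p(\mu)=2q(\mu)$ at every scale; the paper writes the same steps as $p_0-2q_0=\mu^{2-N}(p-2q)(Z_{pp}-2Z_{pq})$ with an explicit adjugate inverse instead of your right-eigenvector language, and your extra RG-flow paragraph is harmless but not needed.
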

\begin{restatable}{cor}{cortwo}\label{green_ward}
    For $p=2q$, the inverted bare 1PI momentum-space Green's functions for the operator matrix elements for any moment $N$ are related as:
    \begin{equation}\label{bare_ward}
        1-\Gamma^{\text{bare,1PI}}_{\phi|p=1}(Q^2) = 1-2\Gamma^{\text{bare,1PI}}_{\psi|p=1}(Q^2)
    \end{equation}
    where the subscript $|p=1$ means that the coupling $p$ has been normalised to $1$.
\end{restatable}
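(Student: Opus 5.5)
The plan is to obtain \eqref{bare_ward} from the light-cone Ward identity \eqref{ward_ult} by imposing $p_0=2q_0$ and then amputating external legs. By Corollary~\ref{cor:peqq}, if we start from the classical choice $p=2q$, then $p_0=2q_0$ holds exactly. The coefficient $(p_0-2q_0)$ multiplying the exotic operator $\mathcal{O}^1_-$ in the second line of \eqref{ward_ult} therefore vanishes identically. What remains is a relation between two connected correlators with one operator insertion:
\begin{equation*}
 i\sigma_-\partial_+\langle\phi(y_1)\,S_\mathcal{O}\,\phi^\dagger(y_2)\rangle = \langle\psi_-(y_1)\,S_\mathcal{O}\,\bar\psi_-(y_2)\rangle ,
\end{equation*}
valid to all orders and for every moment $N$. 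I will work entirely with bare quantities, which is legitimate because \eqref{ward_ult} is derived from the regularised path integral. For the WZ point of $\lgen$ up to four loops, the earlier discussion of the absence of $\gamma_5$ guarantees that the regularisation does not spoil the identity.

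Next I Fourier transform to momentum space. Because the operator carries zero momentum, both sides depend on a single external momentum $Q$. I decompose each connected correlator as full propagator $\times$ 1PI operator vertex $\times$ full propagator. The field Ward identity $\gamma_\phi=\gamma_\psi$ at the SUSY point, or more precisely its bare counterpart, relates the dressed fermion propagator to the dressed scalar propagator through a factor $\slashed{Q}$. After light-cone projection, this factor supplies exactly the $\sigma_-Q_+$ structure produced by $\partial_+$ on the left-hand side. Since the propagator dressings coincide, they cancel between the two sides, leaving a relation purely between the 1PI functions.

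The step I expect to be most delicate is fixing the relative normalisation. The tree-level Born projections of $\mathcal{O}_\phi$ and $\mathcal{O}_\psi$ differ: the scalar operator has $N$ derivatives, the fermion operator has $N-1$ together with $\overline\sigma_+$, and with $p=2q$ the operator couplings carry an explicit factor $2$. These factors must combine so that the tree terms on both sides equal $1$, while the loop corrections appear as $\Gamma_\phi-1$ and $2(\Gamma_\psi-1)$ after normalising to $p=1$. I would fix this by evaluating both sides at tree level and at one loop explicitly, so that the overall constants are checked rather than guessed.

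Finally, I transfer the identity to $\lgen$. In the flavour decomposition of step 1 of Sec.~\ref{sec:application}, the WZ Green's functions are those of $\lgen$ at the emergent SUSY point. The resulting equation is therefore a linear relation among flavour-independent integrals, and it can be used for optimisation at any $N$. For the singlet scalar operator this holds only for even $N$, as already noted.
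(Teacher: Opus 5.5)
Your skeleton matches the paper's proof. Corollary~\ref{cor:peqq} gives $p_0=2q_0$, which removes the $\mathcal{O}^1_-$ term in \eqref{ward_ult}. In momentum space each correlator is written as propagator $\times$ 1PI vertex $\times$ propagator, and the dressings cancel by SUSY. The paper just invokes $Z_\phi=Z_\psi$; your requirement that the full bare propagators be related is, if anything, the more accurate input. The gap is the step you postpone. Once the exotic term is dropped, the relative normalisation \emph{is} the content of \eqref{bare_ward}. You neither derive it nor fix a convention in which your tentative ``$\Gamma_\phi-1$ and $2(\Gamma_\psi-1)$'' has a definite meaning. Of the two checks you propose, only the tree-level one is needed, and it must be done symbolically. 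A one-loop evaluation can confirm an all-order constant but cannot establish it.

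The missing computation is one line of the light-cone Clifford algebra of App.~\ref{app:lightconesusy}.
\begin{itemize}
\item On the fermion side, both propagator numerators project to $\sigma_-Q_+$ and sandwich the Born vertex $q_0\bar\sigma_+Q_+^{N-1}$. This gives $q_0\,\sigma_-\bar\sigma_+\sigma_-\,Q_+^{N+1}=2q_0\,\sigma_-Q_+^{N+1}$, because $\sigma_-\bar\sigma_-=0$ and $\bar\sigma_+\sigma_-+\bar\sigma_-\sigma_+=2$.
\item The scalar side gives $p_0\,\sigma_-Q_+^{N+1}$.
\end{itemize}
So the second fermion propagator does more than supply the $\sigma_-Q_+$ matching $\partial_+$: together with $\bar\sigma_+$ it produces a factor $2$. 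With $p_0=2q_0$ the two tree terms are then identical.

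Since the loop corrections are measured relative to these Born structures, the identity reduces to $1-\Gamma_\phi/p_0=1-\Gamma_\psi/q_0$. Here $\Gamma_{\phi,\psi}$ are the loop parts, each containing both $p_0$- and $q_0$-insertions. The factor $2$ of \eqref{bare_ward} appears only at the last step, from $q_0=p_0/2$ and $p_0=1$, i.e.\ $\Gamma_\psi/q_0=2\Gamma_{\psi|p=1}$. If each $\Gamma$ were instead normalised to its own Born term, the relation would read $\Gamma_\phi-1=\Gamma_\psi-1$ with no factor $2$. So where the $2$ sits is fixed by exactly the bookkeeping your proposal leaves open.
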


\subsection{Operator anomalous dimensions from $\lgen$}

Here we list the anomalous‑dimension expressions for the singlet operators from the generalised Lagrangian at general moment $N$ up to two loops, before demonstrating emergent SUSY in these operators. 
At this loop order the general expressions are straightforward to obtain, since the relevant functional form is already known from the results of \cite{Manashov:2025kgf} for the GNY model.

For the generalised Lagrangian $\lgop$, the one-loop OME anomalous dimensions as functions of the spin $N$ are given as:
\begin{align}
    \gamma_{qq}^{(1)}(N) = a\left(n_s-\frac{2n_s}{N(N+1)}\right)\, ,\quad & \quad \gamma_{pq}^{(1)}(N) = a\left(-\frac{n_s}{(N+1)}\right)\, , \nonumber\\[7pt]
    \gamma_{qp}^{(1)}(N) = a\left(\frac{-8n_f}{N}\right)\, ,\qquad\qquad\quad & \quad \gamma_{pp}^{(1)}(N) = 2n_fa\, ,
\end{align}
while the two‑loop results are expressed in terms of harmonic sums $S_{\vec{m}}(N)$, see~\cite{Vermaseren:1998uu},
with their argument omitted for brevity below:
\begin{align}
    \gamma_{qq}^{(2)}(N) &=  - \dfrac{2n_sa^2}{N(N + 1)}\Bigg((4-n_s)S_1 + \dfrac{n_s(1 + 2N)}{N^2(N + 1)^2} 
    -4(3-n_s)  - 5n_f
    \notag \\
    &\phantom{=  - \dfrac{2n_su^2}{N(N + 1)}\Bigg(}+ \dfrac{(4-3n_s) + (8-3n_s)N + n_f(3 + 2N)}{N(1 + N)}\Bigg)+\gamma_f^{(2)},\notag \\
    \gamma_{pq}^{(2)}(N) &= -
    \frac{2n_sa^2}{(N + 1)}\Bigg((2-n_s + n_f)S_1 + \dfrac{n_s(1 + 2N)}{2N^2(N + 1)^2}- 2n_f - \frac{8-3n_s}{2}\notag\\
    & \phantom{=-
    \frac{4n_sa^2}{N(N + 1)}\Bigg(}  + \dfrac{8 - n_s + 2(8-3n_s + 2n_f)N}{4N(N+1)} \Bigg),
    \notag \\
    \gamma_{qp}^{(2)}(N) &= -  \frac{8n_f a^2}{N} \Bigg((4-n_s)S_1 + \dfrac{n_s(1 + 2N)}{N^2(N + 1)^2} + \dfrac{(8-5n_s)}{2N(N + 1)} - (8-n_s)\Bigg),
    \notag \\[2pt]
    \gamma_{pp}^{(2)}(N) &= 
   -\frac{4n_f a^2} {N(N+ 1)}\Biggl(\dfrac{n_s(1 + 2N)}{N(N + 1)} - 2\Biggr) - \dfrac{3\lambda^2(n_s+2)}{N(N + 1)}+\gamma_s^{(2)}.
\end{align}
We can see that the results for the GNY model as calculated in \cite{Manashov:2025kgf} can be obtained by substituting $n_s=1$, provided the following change of conventions are carried out:
\begin{align}
        n_f=2n_f\Big |_{\text{\cite{Manashov:2025kgf}}} \, ,\quad \lambda =\frac{1}{3} \lambda\Big |_{\text{\cite{Manashov:2025kgf}}}\, ,\quad \gamma_{pq}^{(i)}(N) = \frac{N}{2}\gamma_{pq}^{(i)}(N)\Big |_{\text{\cite{Manashov:2025kgf}}} \, ,\quad \gamma_{qp}^{(i)}(N)=\frac{2}{N}\gamma_{qp}^{(i)}(N)\Big |_{\text{\cite{Manashov:2025kgf}}} 
\, .
\end{align}

\subsection{Emergent SUSY in OMEs}

In Sec.~\ref{sec:wardi} we obtained the relevant SUSY Ward identities that we expect to emerge from our generalised Lagrangian $\lgop$ at all even moments $N$. 
Note that, while eq.~\eqref{OME_gm} holds for any general value of $p$ and $q$, the other eq.~\eqref{bare_ward} only holds with $p=2q$. 
In this section, we verify these identities up to three loops for several even moments, thereby demonstrating that SUSY among the operators emerges from $\lgop$.

For the second Mellin moment $N=2$, the SUSY Ward identity \eqref{OME_gm} holds for all loop orders independent of the flavour content $(n_s,n_f)$. 
This is because for $N=2$ the sum of the operators reduces to the stress–energy tensor, which is a conserved Noether current and therefore does not undergo renormalisation, implying $p_0=p, q_0=q$.

If we substitute the two emergent‑SUSY points of $\lgen$ in the two‑loop results of the previous section, namely $n_s=1,n_f=\frac{1}{2}$ and $n_s=2,n_f=1$, 
the SUSY Ward identity \eqref{OME_gm} is satisfied.
The explicit check is provided in the anciliary file. 
Moreover, as in the case of the field anomalous dimensions, the highest transcendental weight at a given loop order requires only the equality of bosonic and fermionic degrees of freedom for emergent SUSY to appear:
\begin{itemize}
    \item one-loop case: $$(n_s-2n_f)\frac{N-2}{N} = 0$$
    \item two-loop term $\dfrac{S_1}{N(N+1)}$: $$(n_s-2n_f)(N-(4-n_s)(N+1))\dfrac{S_1}{N(N+1)}=0$$
    \item two-loop term $\dfrac{(2N+1)}{N^3(N+1)^3}$: $$(n_s-2n_f)\dfrac{(2N+1)}{N^3(N+1)^3}= 0$$
\end{itemize}

For the three‑loop case, the general expressions for the operator anomalous dimensions at arbitrary moment $N$ are not yet available. Indeed, in future work we will employ the emergent‑SUSY approach precisely to optimise the computation of these general results. For the present analysis, we have evaluated the OME anomalous dimensions at the even moments $N \leq 10$.

Recall from the discussion in Sec.~\ref{dimregsec} that, in odd dimensions, the trace of an odd number of $\gamma$-matrices does not generally vanish. 
Since the point $n_s=1,n_f=\frac{1}{2}$ corresponds to a three‑dimensional SUSY, using the four‑dimensional Clifford algebra does not reproduce the operator SUSY Ward identities at this point at three loops. 
At lower loop orders this issue did not arise, because no diagram produced a surviving Levi‑Civita tensor from the odd trace in eq.~\eqref{eq:oddtracethree}; as a result, even the incorrect Clifford algebra yielded consistent identities. 
At three loops, we must additionally account for the diagrams that contain a non‑vanishing trace of an odd number of $\gamma$‑matrices in order to recover emergent SUSY at the three‑dimensional WZ point:
\begin{equation}
\nonvqua    
\end{equation}
Recall that solid lines represent fermions (with arrows indicating the direction of flow) and dashed lines represent scalars. The letter markings indicate the specific lines on which the operator vertex is inserted. 
Placing the operator vertex on the two unmarked lower fermion lines yields a vanishing integral. Other diagrams with a non‑vanishing odd trace also exist, but their integrals likewise vanish. Only diagrams of the type shown above give a non‑zero contribution. Including these contributions restores the Ward identities \eqref{OME_gm} and \eqref{bare_ward} at three loops and for all even values of $N$ at both emergent‑SUSY points $n_s=1,n_f=\frac{1}{2}$ and $n_s=2,n_f=1$.

Beyond demonstrating emergent SUSY, the above calculations also verify the theorems from the previous section for the moments computed up to three loops. First, the SUSY relation among the anomalous dimensions is satisfied without imposing $p=2q$, in agreement with theorem~\ref{the_theorem}. 
Conversely, we have confirmed that the Ward identity for the Green’s functions, eq.~\eqref{bare_ward}, holds only after imposing $p=2q$, as required by corollary~\ref{green_ward}. This provides strong support for the theorems and their proofs.

\subsection{Operator renormalisation with emergent SUSY}

For the purposes of the present work, where our goal is to optimise the three‑loop singlet operators, it is sufficient to restrict the analysis to the three‑loop level. We therefore focus on formulating the application of emergent SUSY to the calculation of the operator anomalous dimensions, following the steps outlined in Sec.~\ref{sec:application}. In this context, the relevant Ward identity for optimisation is eq.~\eqref{bare_ward} for the bare 1PI correlators, since it allows us to perform the optimisation directly, without first renormalising.

The computation of flavour factors for the Green's functions in $\lgop$ is extremely fast using \texttt{FORM}, where we employ the $\gamma$‑matrix trace algorithms to evaluate the  $Z^I$ flavour factors. This calculation needs to be done only once, as the relevant graphs are the same for all moments. Consequently, the computational overhead introduced by using $\lgen$ is essentially negligible, making optimisation via emergent SUSY fully practical.

Take, for example, the two‑loop case, and apply step 1 from Sec.~\ref{sec:application}:
\begin{equation}
    \begin{aligned}
        \Gamma_\psi^{(2)} &= a_0^2\left(n_s(n_s-2)(p_0I_\psi^{1,p}+q_0I_\psi^{1,q})+n_fn_s(p_0I_\psi^{2,p}+q_0I_\psi^{2,q}\,)\right)\, ,\\
        \Gamma_\phi^{(2)} &= a_0^2\left(n_f(n_s-2)(p_0I_\phi^{1,p}+q_0I_\phi^{1,q})+n_fn_sq_0I_\phi^2\right) + \lambda_0^2(n_s+2)p_0I^3_\phi\;.
    \end{aligned}
\end{equation}
The coefficients of the flavour factors, the $I_\psi$'s and $I_\phi$'s, correspond to sums of integrals with the same flavour factor.  
Applying step 2, and using $a_0=\lambda_0$ together with $p_0 = 2q_0$, we obtain the following two equations for the two SUSY points from eq.~\eqref{bare_ward}:
\begin{equation}
    \begin{aligned}
         I_\phi^2 + 8I^3_\phi&= 2I_\psi^{2,p}+I_\psi^{2,q} \, ,\\
        \left((2I_\phi^{1,p}+I_\phi^{1,q})-I_\phi^2\right) - 24I^3_\phi  &=4(2I_\psi^{1,p}+I_\psi^{1,q})-(2I_\psi^{2,p}+I_\psi^{2,q}\,) \;.
    \end{aligned}
\end{equation}
Analogously to the two‑loop example in Sec.~\ref{sec:application}, we now apply steps 3 and 4 to eliminate two of the $I$'s, thereby reducing the number of integrals that must be evaluated.

At the three‑loop level the optimisation provided by emergent SUSY becomes especially significant. For higher Mellin moments, direct computation requires several days. 
To reconstruct the full $N$ dependence from a number of fixed moments, by solving Diophantine equations~\cite{Velizhanin:2010cm}, one must compute all moments up to at least $N \leq 30$. 
A full calculation of the moment $N=30$ takes already several days on a standard server, whereas emergent SUSY reduces this time by about 25\%.

This speed-up is based on the observation mentioned in Sec.~\ref{sec:application}, that the three-loop non-planar topology satisfies the SUSY Ward identity independently, without mixing with other topologies. And at higher moments it is the evaluation of the non-planar topology that takes almost the entire time of the full evaluation. 
The bare non-planar contributions to the  bare 1PI correlators are given as:
\begin{equation}
    \begin{aligned}
        \Gamma_\psi^{(3, \text{nonp})} &= a_0^2\,n_s(n_s^2-6n_s+4)(p_0I_\psi^{p}+q_0I_\psi^{q})\, ,\\
        \Gamma_\phi^{(3, \text{nonp})} &= a_0^2\,n_f(n_s^2-6n_s+4)(p_0I_\phi^{p}+q_0I_\phi^{q})\;.
    \end{aligned}
\end{equation}
Both the emergent SUSY points give us the identical equation between the integrals:
\begin{equation}
    4I_\psi^{p}+2I_\psi^{q} = 2I_\phi^{p}+I_\phi^{q}
\, ,
\end{equation}
which can be used to eliminate one of the four integral families. Thus, only 75\% of the heavy integrals need to be reduced via integration by parts, yielding an approximately 25\% speed‑up, as confirmed by explicit calculation.

This provides a proof‑of‑concept for using emergent SUSY, obtained through the method of generalised Lagrangians, to optimise computations in non‑supersymmetric theories. In the phenomenologically relevant case of QCD operator anomalous dimensions, where high‑moment calculations may take months, a 25\% reduction in computing time would correspond to saving several months of work. In this sense, SUSY, despite not being realised phenomenologically, serves as a powerful mathematical tool for improving phenomenological calculations.


\section{Conclusion and future outlooks}
\label{sec:concl}

In this work we established a framework for using emergent SUSY as a tool to optimise calculations in non‑supersymmetric theories. To develop this framework, we used the GNY model as a laboratory, since it is structurally simpler than QCD, which is our ultimate target. The central step is the construction of a generalised Lagrangian that unifies the non‑SUSY theory with as many SUSY theories as possible. 
This allows us to import their respective Ward identities, providing additional relations that reduce the number of integrals appearing in perturbative computations. 
The novelty of the generalised Lagrangian $\lgen$ lies not in its overall form, but in the flavour‑algebra relations given in eqs.~\eqref{eq:flavclifford} and \eqref{mijkl}. 
These define the minimal algebraic structure required for the Lagrangian to exhibit the desired interpolating behaviour between SUSY and non‑SUSY theories.

The specific application considered in this paper was the computation of fixed Mellin moments of the three‑loop anomalous dimensions of flavour‑singlet operators in the GNY model. Within this framework, we derived the SUSY Ward identity relevant for the renormalisation of these operators and found that certain non‑planar topologies satisfy this identity independently of the rest. Exploiting this property resulted in substantial gains in computational speed for the three‑loop calculation, with the improvement becoming even more pronounced at higher Mellin moments.

A central long‑term aim is to extend this strategy to QCD. The singlet‑sector Ward identity we analysed is the analogue of the relation obtained when QCD is embedded into an $\mathcal{N}=1$ SYM theory, and the emergent‑SUSY framework developed here now provides a systematic foundation for deriving and generalising such relations at higher loops.
The same methods also address a long‑standing issue in supersymmetric gauge theories. By applying the flavour‑algebra framework developed here, we have demonstrated that $\mathcal{N}=2$ SYM remains supersymmetric at three loops in dimensional reduction, in contrast to earlier claims that arose from subtleties in the treatment of flavour Levi‑Civita structures.

In contrast to the WZ model, SYM theories are nontrivially affected by dimensional reduction. In this scheme, the ghost fields couple only to the $4-2\eps$ degrees of freedom of the gluon, while the remaining $2\eps$ components behave as scalars. Owing to this mismatch, it had long been conjectured that dimensional reduction violates SUSY at sufficiently high loop order~\cite{Siegel:1979wq,Siegel:1980qs}. 
Existing calculations \cite{Avdeev:1982xy,Velizhanin:2008rw} appeared to support this view by indicating a SUSY violation at three loops in $\mathcal{N}=2$ SYM, despite the absence of such effects in $\mathcal{N}=1$ and $\mathcal{N}=4$ SYM. 
Using the techniques developed in this work, we showed in \cite{n2symat3loops} that SUSY is in fact preserved in $\mathcal{N}=2$ SYM at three loops, relying on consistent handling of flavour Levi‑Civita tensors.

These results open several promising avenues for further investigation. The observation that certain diagram topologies satisfy their own independent Ward identities suggests a deeper structural connection to the underlying master integrals. More broadly, the generalised‑Lagrangian approach raises the possibility that all‑loop SUSY results, such as the NSVZ $\beta$ function~\cite{Novikov:1983uc}, may become powerful tools for organising and constraining computations in non‑SUSY theories as well. Taken together, this work lays the foundation for a new conceptual and computational framework with wide‑ranging potential applications in perturbative quantum field theory.

\vspace{2mm}
\section*{Acknowledgments}
We would like to thank John Gracey for useful discussions.
The work has been supported by the European Research Council 
through  ERC Advanced Grant 101095857, {\it Conformal-EIC}.

\appendix
\section{Ward Identities}
\label{app:wi}

The modified Lagrangian for the WZ model with the operator vertices is given by 
\begin{equation}
    \mathcal{L}^{\text{WZ}}_{\text{op},N} = \mathcal{L}^{\text{WZ}}+\underbrace{ p_0\mathcal{O}^{\text{WZ}}_\phi(N)+q_0\mathcal{O}^{\text{WZ}}_\psi(N)}_{\mathcal{L}^{\text{WZ}}_{\mathcal{O}}}\, .
\end{equation} 
Then, the path integral with sources for this model is given by 
\begin{equation}
Z(J) = \int d\Phi\; e^{i(S+S_{\mathcal{O}}+\int d^Dx\, J^T\Phi)}\;,
\end{equation} 
where the actions, fields and source terms are:
\begin{itemize}
    \item $S=\int d^dx\,\mathcal{L}^{\text{WZ}}$ and $S_{\mathcal{O}}=\int d^Dx\,\mathcal{L}^{\text{WZ}}_{\mathcal{O}}$,
    \item $\Phi=(\phi,\phi^\dagger,\bar\psi_{\dot\alpha},\psi_\alpha)$ and $J =(J_\phi,J_{\phi^\dagger},-J_{\bar\psi}^{\dot{\alpha}},J_{\psi}^\alpha)$.
\end{itemize}
The Ward identity for any continuous transformation of the fields is obtained from the fact that the path integral $Z(J)$ does not change from a re-parametrization of the fields. 
Thus,
\begin{equation}\label{ward:app}
    \delta Z(J)=\langle\;\delta\mathcal{J} + i\delta S + i\delta S_{\mathcal{O}} + i\int d^Dx\;J^T(x)\delta\Phi(x)\;\rangle_J = 0
\, ,
\end{equation}
where $\mathcal{J}$ is the Jacobian of the field transformations. 

For the case of SUSY transformations, we know that $\delta J=0$. From the SUSY invariance of the WZ model, $\delta S = 0$. Next, as $p_0$ and $q_0$ are multiplied by the same Grassmann number, $S_{\mathcal{O}}^n=0 \;\forall n\geq 2$. Thus, we have:
\begin{align}
    \langle \;\int d^Dx\,J^T(x)\delta\Phi(x)\;\rangle_J &= \int d\Phi\;\int d^Dx\,J^T(x)\delta\Phi(x)\left(1+iS_\mathcal{O}\right)e^{i(S+\int d^Dx\, J^T\Phi)}\nonumber\\[4pt]
   \label{source} &=\langle\;\int d^Dx\,J^T(x)\delta\Phi(x)\;\rangle_{\small J}^{\small\text{WZ}} + \langle\;\int d^Dx\,J^T(x)\delta\Phi(x).iS_\mathcal{O}\;\rangle_{\small J}^{\small\text{WZ}}\, ,
\end{align}
where the superscript WZ indicates that the correlator is based on the pure WZ action without the operator vertices. Note that the Ward identity for the pure WZ case is given by:
\begin{equation}
\langle\;\delta\mathcal{J} + i\delta S + i\int d^Dx\;J^T(x)\delta\Phi(x)\;\rangle_{\small J}^{\small\text{WZ}} = \langle\;i\int d^Dx\;J^T(x)\delta\Phi(x)\;\rangle_{\small J}^{\small\text{WZ}} = 0\;.
\end{equation}
Hence, the first term in eq.~\eqref{source} does not contribute to the Ward identity \eqref{ward:app}, which now reduces to:
\begin{equation}
    \langle\;i\delta S_{\mathcal{O}} + i\int d^Dx\;J^T(x)\delta\Phi(x).iS_\mathcal{O}\;\rangle_{\small J}^{\small\text{WZ}} = 0
\end{equation}
Differentiate with respect to $J_{\phi^\dagger}(y_1)$ and $J_{\bar{\psi}}^{\dot{\alpha}}(y_2)$, and set $J=0$ to get
\begin{equation}
    \langle\;\phi(y_1)\bar\psi_{\dot\alpha}(y_2)\delta_{\epsilon,\bar\epsilon} S_{\mathcal{O}}\;\rangle^{\small\text{WZ}} + \langle\;\phi(y_1)\delta_{\epsilon,\bar\epsilon}\bar\psi_{\dot\alpha}(y_2) S_{\mathcal{O}}\;\rangle^{\small\text{WZ}}+\langle\;\delta_{\epsilon,\bar\epsilon}\bar\phi(y_1)\psi_{\dot\alpha}(y_2) S_{\mathcal{O}}\;\rangle^{\small\text{WZ}} = 0\;,
\end{equation}
where the infinitesimal SUSY transformations for the fields are given by:
\begin{align}
    \delta_{\epsilon,\bar\epsilon}\,\phi &= \epsilon^\alpha\psi_\alpha\; ,\\ 
    \delta_{\epsilon,\bar\epsilon}\,\bar\psi_{\dot\alpha} &= -i(\epsilon\sigma^\mu)_{\dot\alpha}\partial_\mu\phi^\dagger -\bar\epsilon_{\dot\alpha}F^\dagger
\, .
\end{align}
Above, we have suppressed overall factors of $\sqrt2$ in the right hand sides as it does not matter for the Ward identity, which we also do for the variation of $S_\mathcal{O}$:
\begin{align}\label{var_action}
{\lefteqn{
    \delta_{\epsilon,\bar\epsilon}\,S_{\mathcal{O}} =}} \nonumber\\ &=\epsilon^\alpha\left[p_0\phi^\dagger(-i\partial_+)^N\psi_\alpha-q_0\phi^\dagger\sigma^\mu_{\alpha\dot\alpha}\bar{\sigma}_+^{\dot\alpha\beta}(-i\partial_+)^{N-1}(-i\partial_\mu)\psi_\beta+q_0(\sigma_+)_{\alpha\dot\alpha}\bar\psi^{\dot\alpha}(-i\partial_+)^{N-1}F\right]\nonumber \\
    &-\bar\epsilon^{\dot\alpha}\left[p_0\bar\psi_{\dot\alpha}(-i\partial_+)^N\phi -q_0 \bar\psi_{\dot\beta}\bar{\sigma}_+^{\dot\beta\alpha}\sigma^\mu_{\alpha\dot\alpha}(-i\partial_+)^{N-1}(-i\partial_\mu)\phi+q_0F^\dagger(-i\partial_+)^{N-1}\psi^\alpha(\sigma_+)_{\alpha\dot\alpha}\right]
\, .
\end{align}
Notice that in the variation of the operator action, new operators with open fermionic indices emerge:
\begin{align}
    \mathcal{O}^1_{\alpha}& = \phi^\dagger(-i\partial_+)^N\psi_\alpha \, ,\\
    \mathcal{O}^2_{\alpha}& = \phi^\dagger(-i\partial_+)^{N-1}(-i\partial_\mu)(\sigma^\mu\bar{\sigma}_+\psi)_\alpha \, ,\\
    \mathcal{O}^F_{\alpha}& =(\sigma_+\bar\psi)_\alpha(-i\partial_+)^{N-1}F \, .
\end{align}
The operators appearing in the coefficient of $\bar\epsilon^{\dot\alpha}$ are just the Hermitian conjugates of the above operators. Now consider the Ward identity for the coefficient of $\epsilon^\alpha$:
\begin{equation}\label{ward_explicit:app}
\begin{aligned}
    0  =&\:(-i\slashed\partial_{y_2})_{\alpha\dot\alpha}\langle\phi(y_1)\,S_\mathcal{O}\,\phi^\dagger(y_2)\rangle + \langle\psi_\alpha(y_1)\,S_\mathcal{O}\,\bar\psi_{\dot\alpha}(y_2)\rangle \\
    &+\langle\phi(y_1)\int d^Dx\left[p_0\mathcal{O}^1_{\alpha}(x)+q_0\mathcal{O}^2_{\alpha}(x)-q_0\mathcal{O}^F_{\alpha}(x)\right]\,\bar\psi_{\dot\alpha}(y_2)\rangle\;.
\end{aligned}
\end{equation}
The two correlators in the first line of eq.~\eqref{ward_explicit} are the ones for which we seek to establish a relation. However, the correlator in the second line, involving the exotic operators, does not vanish and in fact contains even more complicated operator structures. Thus, this equation does not yield a useful emergent‑SUSY relation between the OMEs.

\section{Light-cone SUSY}
\label{app:lightconesusy}

In the light-cone basis, the light-cone $\sigma$ matrices are given by 
\begin{equation}
    \sigma_\pm = \frac{1}{\sqrt2}(\sigma^0 \pm i\sigma^3)\, , \qquad \bar\sigma_\pm = \frac{1}{\sqrt2}(\bar\sigma^0 \pm i\bar\sigma^3)\, ,
\end{equation}
while the other two matrices are left unchanged. The Clifford algebra is given by the usual $\sigma^\mu\bar\sigma^\nu+\sigma^\nu\bar\sigma^\mu = 2g^{\mu\nu}$ where now the Lorentz indices are $(+,-,1,2)$, and the metric tensor given by 
\begin{equation}
 g^{\mu\nu}=
\begin{pmatrix}
0 & 1 & 0 & 0\\
1 & 0 & 0 & 0\\
0 & 0 & -1 & 0\\
0 & 0 & 0 & -1
\end{pmatrix}   
\, .
\end{equation}
Two pairs of light-cone projection matrices can thus be defined:
\begin{equation}
    P_{\pm} =\frac{1}{2}\sigma_\pm\bar\sigma_\mp\, ,\qquad \bar P_{\pm} =\frac{1}{2}\bar\sigma_\mp\sigma_\pm\, ,
\end{equation} 
based on which any spinor can be written as $\chi=\chi_++\chi_-$ (same for $\bar\chi$) where $\chi_\pm=P_{\pm}$ and $\bar\chi_\pm=\bar P_{\pm}\bar\chi$. Based on these definitions, we have the following properties:
\begin{align}
    \bar{\sigma}_+\chi_+ & = \frac{1}{2}\bar\sigma_+\sigma_+\bar\sigma_-\chi = 0\, , \\
    \bar\chi_+\bar{\sigma}_+ & =  \frac{1}{2}\sigma_-\bar\sigma_+\bar\chi\,\bar\sigma_+ = \frac{1}{2}\sigma_-\bar\sigma_+\sigma_+\bar\chi = 0 \, ,
\end{align}
where $\bar\sigma_+\sigma_+ = 0$ from the Clifford algebra relation, and for the second property we used spinor identities. Likewise, 
\begin{equation}
\chi_+{\sigma}_+ = 0\, ,
\qquad{\sigma}_+\bar\chi_+=0 \, .
\end{equation}
This means that, 
\begin{equation}
\mathcal{O}^{\text{WZ}}_\psi(N) = \bar\psi\,\overline\sigma_+(-i\partial_+)^{N-1}\psi = \bar\psi_-\,\overline\sigma_+(-i\partial_+)^{N-1}\psi_-\,,   
\end{equation}
signifying that the $\psi_+$ component does not contribute to the operator.

As the $+$ and $-$ spinors are independent and form a basis, we can choose $\epsilon_+,\epsilon_-$ (same for $\bar\epsilon$) as the independent SUSY transformation parameters. Then, considering the $\epsilon_+$ component of eq.~\eqref{var_action}, we have:
\begin{align}
    &p_0\phi^\dagger(-i\partial_+)^N\epsilon_+\psi_--q_0\phi^\dagger\epsilon_+\sigma^\mu\bar{\sigma}_+(-i\partial_+)^{N-1}(-i\partial_\mu)\psi_-+q_0\epsilon_+\sigma_+\bar\psi_-(-i\partial_+)^{N-1}F \nonumber \\
    &=p_0\phi^\dagger(-i\partial_+)^N\epsilon_+\psi_-- q_0\phi^\dagger\epsilon_+n_\nu(\sigma^\mu\bar{\sigma}^\nu)(-i\partial_+)^{N-1}(-i\partial_\mu)\psi_- \nonumber \\
    &=p_0\phi^\dagger(-i\partial_+)^N\epsilon_+\psi_-- q_0\phi^\dagger\epsilon_+n_\nu(2g^{\mu\nu}-\sigma^\nu\bar{\sigma}^\mu)(-i\partial_+)^{N-1}(-i\partial_\mu)\psi_- \nonumber \\
    &=(p_0-2q_0)\phi^\dagger(-i\partial_+)^N\epsilon_+\psi_-+ q_0\phi^\dagger\epsilon_+\sigma_+\bar{\sigma}^\mu(-i\partial_+)^{N-1}(-i\partial_\mu)\psi_- \nonumber \\
    &=\epsilon_+(p_0-2q_0)\mathcal{O}^1_-
\, ,
\end{align}
where we used $\epsilon_+\sigma_+=0$ to eliminate the terms. 
Then, the Ward identity \eqref{ward_explicit:app} becomes
\begin{equation}\label{ward_ult:app}
\begin{aligned}
     0=&-i\sigma_-\partial_+\langle\phi(y_1)\,S_\mathcal{O}\,\phi^\dagger(y_2)\rangle + \langle\psi_-(y_1)\,S_\mathcal{O}\,\bar\psi_-(y_2)\rangle\\
    &+\langle\phi(y_1)\int d^Dx(p_0-2q_0)\mathcal{O}^1_-(x)\,\bar\psi_-(y_2)\rangle\;.
\end{aligned}
\end{equation}
Thus we have gotten rid of both the operators $\mathcal{O}^2_\alpha$ and $\mathcal{O}^F_\alpha$. This form of the Ward identity gives us useful relations among the Green's functions and anomalous dimensions of the OMEs. Recall that 
\begin{align}
    p_0 &= \mu^{2-N}(pZ_{pp} + q Z_{qp}) \, ,\\
    q_0 &= \mu^{2-N}(pZ_{pq} + q Z_{qq}) \, ,
\end{align}
where the $Z_{ij}$'s only depend on the square of the coupling constant $\alpha = g^2$, but not on $p$ and $q$. 
Using the Ward identity as given in eq.~\eqref{ward_ult:app}, we propose the following statement:
\mainthm*
\begin{proof}
    As the correlators in Ward identities are always the renormalised versions, when we can act both sides of eq.~\eqref{ward_ult:app} by the Rota-Baxter projector $R$ corresponding to the chosen renormalisation scheme, the correlators corresponding to the OMEs vanish, and we have:
\begin{equation}\label{rota}
    R\left[(p_0-2q_0)\langle\phi(y_1)\int d^Dx\,\mathcal{O}^1_-(x)\,\bar\psi_-(y_2)\rangle\right] = 0\,.
\end{equation}    
    Thus, $(p_0-2q_0)$ must give the correct counterterms to renormalise the singularities created by $\mathcal{O}^1_-(x)$. Before we proceed, note the following properties of the $Z_{ij}$'s:
\begin{equation}
 Z_{pp}, Z_{qq} = 1+\mathcal{O}(\alpha)\, ,\qquad 
 Z_{pq}, Z_{qp} = \mathcal{O}(\alpha)\,.
\end{equation}    
Thus,
\begin{equation*}
        p_0-2q_0 = \mu^{2-N}(p-2q +p\alpha(Z^{(1)}_{pp}-2Z^{(1)}_{pq}) +q\alpha(Z^{(1)}_{qp}-2Z^{(1)}_{qq}) + \mathcal{O}(\alpha^2))\;,
    \end{equation*}
where $Z^{(1)}$ is the one-loop contribution to the counterterm. We suppressed the $\mu$ dependence of the renormalised couplings for brevity. 
On the other hand, 
\begin{equation}
     \langle\phi(y_1)\int d^Dx\,\mathcal{O}^1_-(x)\,\bar\psi_-(y_2)\rangle = Z_\phi^{-1}Z_\psi^{-1}\frac{p_+\slashed{p}}{(p^2)^2}\Big(1+\alpha I^{(1)} + \mathcal{O}(\alpha^2)\Big)\, ,
    \end{equation}
    where $Z_\phi$ and $Z_\psi$ are the field renormalisations, and $I^{(1)}$ is the one-loop contribution.  
    Using this, the coefficient of the first order in $\alpha$ of eq.~\eqref{rota} is given by:
    \begin{align}\label{ind_start}
        (p-2q)R[I^{(1)}] +pR[Z^{(1)}_{pp}-2Z^{(1)}_{pq}] + qR[Z^{(1)}_{qp}-2Z^{(1)}_{qq}] &= 0 \nonumber \\
        \implies \quad (p-2q)R[I^{(1)}] +p(Z^{(1)}_{pp}-2Z^{(1)}_{pq}) + q(Z^{(1)}_{qp}-2Z^{(1)}_{qq}) &=0
        \, ,
    \end{align}
    where the second line holds because $R[Z]=Z$ for any counterterm in the $\overline{\text{MS}}$-scheme.

    Note that eq.~\eqref{ind_start} holds for any value for $p(\mu)$ and $q(\mu)$. Thus the equation will hold even if we choose at the classical level that $p=2q$. Then there exists a renormalisation scale $\mu_1$ where  
    \begin{equation}\label{substi}
        p(\mu_1)=2q(\mu_1) \;.
    \end{equation}
    As the $Z_{ij}$ factors do not depend on $p$ and $q$, their values are independent of the imposition of the above relation, and thus eq.~\eqref{ind_start} becomes:
    \begin{eqnarray}\label{peetoq}
        2Z^{(1)}_{pp}-4Z^{(1)}_{pq} = Z^{(1)}_{qp}-2Z^{(1)}_{qq}\phantom{1}\\
        \implies \quad 
        2Z^{(1)}_{qq}+4Z^{(1)}_{pq} = 2Z^{(1)}_{pp}+Z^{(1)}_{qp}\, .
    \end{eqnarray}
    This is the one-loop contribution to eq.~\eqref{OME_ctm} that we seek to prove. Hence, we can build up an inductive proof. Let the inductive hypothesis be that the equation 
\begin{equation}
     2Z^{(j)}_{qq}+4Z^{(j)}_{pq} = 2Z^{(j)}_{pp}+Z^{(j)}_{qp}   
\end{equation}
    holds for all $j\leq n$. Now for $j=n+1$, consider the coefficient of $\alpha^{n+1}$ in eq.~\eqref{rota}:
    \begin{align}
        0 =&\sum_{j=0}^{n+1}I^{n+1-j}\left(p(Z^{(j)}_{pp}-2Z^{(j)}_{pq}) + q(Z^{(j)}_{qp}-2Z^{(j)}_{qq})\right) \nonumber \\
        =& \sum_{j=0}^{n}I^{n+1-j}(p-2q)(Z^{(j)}_{pp}-2Z^{(j)}_{pq}) \nonumber \\
        &+ p(Z^{(n+1)}_{pp}-2Z^{(n+1)}_{pq}) + q(Z^{(n+1)}_{qp}-2Z^{(n+1)}_{qq}) \nonumber \\[5pt]
        =&-2q(Z^{(n+1)}_{pp}-2Z^{(n+1)}_{pq}) + q(Z^{(n+1)}_{qp}-2Z^{(n+1)}_{qq})\;,
    \end{align}
    where in the last line we substituted eq.~\eqref{substi}. 
    Thus we have for $j=n+1$: 

\begin{equation}
    2Z^{(n+1)}_{qq}+4Z^{(n+1)}_{pq} = 2Z^{(n+1)}_{pp}+Z^{(n+1)}_{qp}\;.
\end{equation}
    Thus by induction, this holds for all $j$ and the full counterterms satisfy eq.~\eqref{OME_ctm}, from which eq.~\eqref{OME_gm} follows.
\end{proof}
\corone*
\begin{proof}
    If $p=2q$ in the classical theory, for some scale $\mu_1$ we have $p(\mu_1)=2q(\mu_1)$. Then, consider 
    \begin{align}
        p_0-2q_0 & = \mu_1^{2-N}\left(p(\mu_1)(Z_{pp}-2Z_{pq}) + q(\mu_1)(Z_{qp}-2Z_{qq})\right) \nonumber \\
        &=\mu_1^{2-N}(p(\mu_1)-2q(\mu_1))(Z_{pp}-2Z_{pq}) \nonumber \\[3pt]
        \implies \quad p_0&= 2q_0\;. 
    \end{align}
    Here we used theorem \ref{the_theorem} in the second line. As bare couplings do not depend on the renormalisation scale, this relation holds for any $\mu$. Now we invert relation \eqref{oper_ren} to get 
    \begin{equation}
    \begin{bmatrix}
        p \\
        q
    \end{bmatrix} =\frac{\mu^{N-2}}{\det Z}
    \begin{bmatrix}
        Z_{qq} & -Z_{qp} \\
        -Z_{pq} & Z_{pp}
    \end{bmatrix}
    \begin{bmatrix}
        p_0 \\
        q_0
    \end{bmatrix}
    \end{equation}
    where $\det Z$ is the determinant of the matrix. This implies that, for any renormalisation scale $\mu$, 
    \begin{align}
        p(\mu)-2q(\mu) &= \frac{\mu^{N-2}}{\det Z}\left(p_0(Z_{qq}+2Z_{pq})-q_0(2Z_{pp}+Z_{qp})\right) \nonumber \\
        &= \frac{\mu^{N-2}}{\det Z}\left(p_0-2q_0\right)(Z_{qq}+2Z_{pq}) = 0
    \end{align}
    which is what we sought to prove.
\end{proof}
\cortwo*
\begin{proof}
    As due to the previous corollary, $p_0=2q_0$, the Ward identity \eqref{ward_ult:app} reduces to 
    \begin{align}
    0=&-i\sigma_-\partial_+\langle\phi(y_1)\,S_\mathcal{O}\,\phi^\dagger(y_2)\rangle + \langle\psi_-(y_1)\,S_\mathcal{O}\,\bar\psi_-(y_2)\rangle
\, .
    \end{align} 
    In the momentum space, we then have
    \begin{align}
        Z_\phi^{-2}\,p_0\frac{\sigma_-Q_+^{N+1}}{(Q^2)^2\left(1-\frac{1}{p_0}\Gamma^{\text{bare,1PI}}_\phi(Q^2)\right)} - Z_\psi^{-2}\,q_0\frac{\sigma_-\bar\sigma_+\sigma_-Q_+^{N+1}}{(Q^2)^2\left(1-\frac{1}{q_0}\Gamma^{\text{bare,1PI}}_\psi(Q^2)\right)}  &= 0 \nonumber \\[3pt]
        \implies \quad p_0\frac{\sigma_-Q_+^{N+1}}{(Q^2)^2\left(1-\frac{1}{p_0}\Gamma^{\text{bare,1PI}}_\phi(Q^2)\right)} - 2q_0\frac{\sigma_-Q_+^{N+1}}{(Q^2)^2\left(1-\frac{1}{q_0}\Gamma^{\text{bare,1PI}}_\psi(Q^2)\right)}  &= 0
        \, ,
    \end{align}
    where $Z_\phi$ and $Z_\psi$ are the field renormalisations and are equal because of SUSY. 
    We then used the anti-commutation relations for light cone sigma matrices in the second line. 
Because of $p_0=2q_0$ we get 
\begin{equation}
        1-\frac{1}{p_0}\Gamma^{\text{bare,1PI}}_\phi(Q^2)= 1-\frac{1}{q_0}\Gamma^{\text{bare,1PI}}_\psi(Q^2)\,.
\end{equation}
As both $\Gamma$ terms are proportional to $p_0$ (for $q_0$'s we substitute $p_0/2$), we can factor out $p_0$ and thus obtain eq.~\eqref{bare_ward}.

\end{proof}

\end{document}

%% file: pics.tex
\newcommand{\olsclro}{\begin{minipage}{2.5cm}
\begin{tikzpicture}
\draw (-1, 0.2) node{${\scriptstyle I}$};
\draw (1, 0.2) node{${\scriptstyle J}$};
        \begin{feynman}
             \vertex (b) at (0.5,0);
             \vertex (d) at (-0.5,0);
             \vertex (e) at (-1, 0);
             \vertex (f) at (1, 0);
             \diagram* {
             (b) -- [fermion, half left] (d);
             (b) -- [fermion, half right] (d);
             (e) -- [scalar] (d);
             (f) -- [scalar] (b); 
             };
          \end{feynman}
\end{tikzpicture}
\end{minipage}}

\newcommand{\olsclrt}{\begin{minipage}{2.5cm}
\begin{tikzpicture}
\draw (-1, 0.2) node{${\scriptstyle I}$};
\draw (1, 0.2) node{${\scriptstyle J}$};
        \begin{feynman}
             \vertex (b) at (0.5,0);
             \vertex (d) at (-0.5,0);
             \vertex (e) at (-1, 0);
             \vertex (f) at (1, 0);
             \diagram* {
             (d) -- [fermion, half left] (b);
             (d) -- [fermion, half right] (b);
             (e) -- [scalar] (d);
             (f) -- [scalar] (b); 
             };
          \end{feynman}
\end{tikzpicture}
\end{minipage}}

\newcommand{\olfer}{\begin{minipage}{2.5cm}
\begin{tikzpicture}
\draw (-1, -0.2) node{${\scriptstyle A}$};
\draw (1, 0.2) node{${\scriptstyle B}$};
        \begin{feynman}
             \vertex (b) at (0.5,0);
             \vertex (d) at (-0.5,0);
             \vertex (e) at (-1, 0);
             \vertex (f) at (1, 0);
             \diagram* {
             (b) -- [scalar, half left] (d);
             (d) -- [fermion, half left] (b);
             (d) -- [fermion] (e);
             (f) -- [fermion] (b); 
             };
          \end{feynman}
\end{tikzpicture}
\end{minipage}}

\newcommand{\olyuk}{\begin{minipage}{2cm}
\begin{tikzpicture}
\draw (0, 1.45) node{${\scriptstyle I}$};
\draw (-0.9, -0.2) node{${\scriptstyle A}$};
\draw (0.9, -0.2) node{${\scriptstyle B}$};
        \begin{feynman}
             \vertex (b) at (0.5,0);
             \vertex (c) at (-0.5,0);
             \vertex (a) at (0,0.85);
             \vertex (d) at (0, 1.25);
             \vertex (e) at (-0.75, -0.42);
             \vertex (f) at (0.75, -0.42);
             \diagram* {
             (b) -- [scalar] (c);
             (a) -- [fermion] (b);
             (a) -- [fermion] (c);
             (e) -- [fermion] (c);
             (f) -- [fermion] (b);
             (a) -- [scalar] (d);
             };
          \end{feynman}
\end{tikzpicture}
\end{minipage}}

\newcommand{\tlyuk}{\begin{minipage}{2cm}
\begin{tikzpicture}
\draw (0, 1.45) node{${\scriptstyle I}$};
\draw (-1.65, -0.62) node{${\scriptstyle A}$};
\draw (1.65, -0.62) node{${\scriptstyle B}$};
        \begin{feynman}
             \vertex (b) at (0.5,0.21);
             \vertex (c) at (-0.5,0.21);
             \vertex (g) at (1,-0.42);
             \vertex (h) at (-1,-0.42);
             \vertex (a) at (0,0.84);
             \vertex (d) at (0, 1.25);
             \vertex (e) at (-1.5, -0.84);
             \vertex (f) at (1.5, -0.84);
             \diagram* {
             (b) -- [scalar] (h);
             (c) -- [scalar] (g);
             (b) -- [fermion] (g);
             (c) -- [fermion] (h);
             (b) -- [fermion] (a);
             (c) -- [fermion] (a);
             (e) -- [fermion] (h);
             (f) -- [fermion] (g);
             (a) -- [scalar] (d);
             };
          \end{feynman}
\end{tikzpicture}
\end{minipage}}

\newcommand{\olfouro}{\begin{minipage}{2cm}
\begin{tikzpicture}
\draw (-0.8, 0.5) node{${\scriptstyle I}$};
\draw (0.8, 0.5) node{${\scriptstyle J}$};
\draw (0.8, -1.1) node{${\scriptstyle K}$};
\draw (-0.8, -1.1) node{${\scriptstyle L}$};
        \begin{feynman}
             \vertex (a) at (0.5,0.17);
             \vertex (d) at (-0.5,0.17);
             \vertex (b) at (0.5,-0.83);
             \vertex (c) at (-0.5,-0.83);
             \vertex (g) at (0.75, 0.42);
             \vertex (h) at (-0.75, 0.42);
             \vertex (e) at (-0.75, -1.08);
             \vertex (f) at (0.75, -1.08);
             \diagram* {
             (g) -- [scalar] (a);
             (h) -- [scalar] (d);
             (e) -- [scalar] (c);
             (f) -- [scalar] (b);
             (a) -- [fermion] (d);
             (a) -- [fermion] (b);
             (c) -- [fermion] (b);
             (c) -- [fermion] (d);
             };
          \end{feynman}
\end{tikzpicture}
\end{minipage}}

\newcommand{\olfourt}{\begin{minipage}{2cm}
\begin{tikzpicture}
\draw (-0.8, 0.5) node{${\scriptstyle I}$};
\draw (0.8, 0.5) node{${\scriptstyle J}$};
\draw (0.8, -1.1) node{${\scriptstyle K}$};
\draw (-0.8, -1.1) node{${\scriptstyle L}$};
        \begin{feynman}
             \vertex (a) at (0.5,0.17);
             \vertex (d) at (-0.5,0.17);
             \vertex (b) at (0.5,-0.83);
             \vertex (c) at (-0.5,-0.83);
             \vertex (g) at (0.75, 0.42);
             \vertex (h) at (-0.75, 0.42);
             \vertex (e) at (-0.75, -1.08);
             \vertex (f) at (0.75, -1.08);
             \diagram* {
             (g) -- [scalar] (a);
             (h) -- [scalar] (d);
             (e) -- [scalar] (c);
             (f) -- [scalar] (b);
             (d) -- [fermion] (a);
             (b) -- [fermion] (a);
             (b) -- [fermion] (c);
             (d) -- [fermion] (c);
             };
          \end{feynman}
\end{tikzpicture}
\end{minipage}}

\newcommand{\olfourscl}{\begin{minipage}{2cm}
\begin{tikzpicture}
\draw (-0.8, 0.5) node{${\scriptstyle I}$};
\draw (0.8, 0.5) node{${\scriptstyle J}$};
\draw (0.8, -0.5) node{${\scriptstyle K}$};
\draw (-0.8, -0.5) node{${\scriptstyle L}$};
        \begin{feynman}
             \vertex (b) at (0.5,0);
             \vertex (c) at (-0.5,0);
             \vertex (g) at (0.75, 0.42);
             \vertex (h) at (-0.75, 0.42);
             \vertex (e) at (-0.75, -0.42);
             \vertex (f) at (0.75, -0.42);
             \diagram* {
             (g) -- [scalar] (b);
             (h) -- [scalar] (c);
             (e) -- [scalar] (c);
             (f) -- [scalar] (b);
             (b) -- [scalar, half left] (c);
             (b) -- [scalar, half right] (c);
             };
          \end{feynman}
\end{tikzpicture}
\end{minipage}}

\newcommand{\glsclr}{\begin{minipage}{2.5cm}
\begin{tikzpicture}
\draw (-1, 0.2) node{${\scriptstyle I}$};
\draw (1, 0.2) node{${\scriptstyle J}$};
        \begin{feynman}
             \vertex (b) at (0.5,0);
             \vertex (d) at (-0.5,0);
             \vertex (e) at (-1, 0);
             \vertex (f) at (1, 0);

             \diagram* {
             (e) -- [scalar] (d);
             (f) -- [scalar] (b); 
             };
             \end{feynman}
             \filldraw[fill=gray!50, draw=black] (0,0) circle (0.5cm);
\end{tikzpicture}
\end{minipage}}

\newcommand{\glfer}{\begin{minipage}{2.5cm}
\begin{tikzpicture}
\draw (-1, -0.2) node{${\scriptstyle A}$};
\draw (1, 0.2) node{${\scriptstyle B}$};
        \begin{feynman}
             \vertex (b) at (0.5,0);
             \vertex (d) at (-0.5,0);
             \vertex (e) at (-1, 0);
             \vertex (f) at (1, 0);
             \diagram* {
             (d) -- [fermion] (e);
             (f) -- [fermion] (b); 
             };
          \end{feynman}
          \filldraw[fill=gray!50, draw=black] (0,0) circle (0.5cm);
\end{tikzpicture}
\end{minipage}}

\newcommand{\glyuk}{\begin{minipage}{2cm}
\begin{tikzpicture}
\draw (0, 1.45) node{${\scriptstyle I}$};
\draw (-0.9, -0.2) node{${\scriptstyle A}$};
\draw (0.9, -0.2) node{${\scriptstyle B}$};
        \begin{feynman}
             \vertex (b) at (0.5,0);
             \vertex (c) at (-0.5,0);
             \vertex (a) at (0,0.85);
             \vertex (d) at (0, 1.25);
             \vertex (e) at (-0.75, -0.42);
             \vertex (f) at (0.75, -0.42);
             \diagram* {
             (e) -- [fermion] (c);
             (f) -- [fermion] (b);
             (a) -- [scalar] (d);
             };
          \end{feynman}
          \filldraw[fill=gray!50, draw=black] (0,0.28) circle (0.56cm);
\end{tikzpicture}
\end{minipage}}

\newcommand{\glyukconj}{\begin{minipage}{2cm}
\begin{tikzpicture}
\draw (0, 1.45) node{${\scriptstyle I}$};
\draw (-0.6, -0.4) node{${\scriptstyle A}$};
\draw (0.6, -0.4) node{${\scriptstyle B}$};
        \begin{feynman}
             \vertex (b) at (0.5,0);
             \vertex (c) at (-0.5,0);
             \vertex (a) at (0,0.85);
             \vertex (d) at (0, 1.25);
             \vertex (e) at (-0.75, -0.42);
             \vertex (f) at (0.75, -0.42);
             \diagram* {
             (c) -- [fermion] (e);
             (b) -- [fermion] (f);
             (a) -- [scalar] (d);
             };
          \end{feynman}
          \filldraw[fill=gray!50, draw=black] (0,0.28) circle (0.56cm);
\end{tikzpicture}
\end{minipage}}

\newcommand{\glfour}{\begin{minipage}{2cm}
\begin{tikzpicture}
\draw (-0.8, 0.5) node{${\scriptstyle I}$};
\draw (0.8, 0.5) node{${\scriptstyle J}$};
\draw (0.8, -1.1) node{${\scriptstyle K}$};
\draw (-0.8, -1.1) node{${\scriptstyle L}$};
        \begin{feynman}
             \vertex (a) at (0.35,0.02);
             \vertex (d) at (-0.35,0.02);
             \vertex (b) at (0.35,-0.68);
             \vertex (c) at (-0.35,-0.68);
             \vertex (g) at (0.75, 0.42);
             \vertex (h) at (-0.75, 0.42);
             \vertex (e) at (-0.75, -1.08);
             \vertex (f) at (0.75, -1.08);
             \diagram* {
             (g) -- [scalar] (a);
             (h) -- [scalar] (d);
             (e) -- [scalar] (c);
             (f) -- [scalar] (b);
             };
          \end{feynman}
          \filldraw[fill=gray!50, draw=black] (0,-0.33) circle (0.5cm);
\end{tikzpicture}
\end{minipage}}

\newcommand{\intpsio}{
\begin{minipage}{1.8cm}
\begin{tikzpicture}[scale=1.5]
    \begin{feynman}
             \vertex (vg1) at (-0.4,0);
             \vertex (vg2) at (0.4,0);
             \vertex (vq2) at (0,0.4);
             \vertex (vs) at (0,-0.4);
             \vertex (d) at (0.6,0);
             \vertex (e) at (-0.6,0);
             \diagram* {
             (e) --   (vg1);
             (d) --   (vg2);
             (vq2) --   (vg1);
             (vs) --   (vg2);
             (vs) -- (vq2);
             (vg1) -- [dashed, dash pattern=on 1.5pt off 1.5pt] (vs);
            (vq2) -- [dashed, dash pattern=on 1.5pt off 1.5pt] (vg2);
             };
    \end{feynman}
\end{tikzpicture}
    \end{minipage}
}

\newcommand{\intphio}{
\begin{minipage}{1.8cm}
\begin{tikzpicture}[scale=1.5]
    \begin{feynman}
             \vertex (vg1) at (-0.4,0);
             \vertex (vg2) at (0.4,0);
             \vertex (vq2) at (0,0.4);
             \vertex (vs) at (0,-0.4);
             \vertex (d) at (0.6,0);
             \vertex (e) at (-0.6,0);
             \diagram* {
             (vs) --   (vg1);
             (vq2) --   (vg2);
             (vq2) --   (vg1);
             (vs) --   (vg2);
             (vs) -- [dashed, dash pattern=on 1.5pt off 1.5pt] (vq2);
             (vg1) -- [dashed, dash pattern=on 1.5pt off 1.5pt] (e);
             (d) -- [dashed, dash pattern=on 1.5pt off 1.5pt] (vg2);
             };
    \end{feynman}
\end{tikzpicture}
    \end{minipage}
}

\newcommand{\intpsitw}{
\begin{minipage}{1.8cm}
\begin{tikzpicture}[scale=1.5]
\draw (0,0.5) node{};
\draw (0.4,0) arc(0:180:0.4);
\draw (0.4,0) [dashed, dash pattern=on 1.5pt off 1.5pt] arc(360:300:0.4);
\draw (-0.4,0) [dashed, dash pattern=on 1.5pt off 1.5pt] arc(180:240:0.4);
    \begin{feynman}
             \vertex (vg1) at (-0.4,0);
             \vertex (vg2) at (0.4,0);
             \vertex (vq2) at (-0.2,-0.34);
             \vertex (vq1) at (0.2,-0.34);
             \vertex (d) at (0.6,0);
             \vertex (e) at (-0.6,0);
             \diagram* {
             (e) --   (vg1);
             (d) --   (vg2);
             (vq2) -- [half left] (vq1);
             (vq2) -- [half right] (vq1);
             };
    \end{feynman}
\end{tikzpicture}
    \end{minipage}
}

\newcommand{\intphitw}{
\begin{minipage}{1.8cm}
\begin{tikzpicture}[scale=1.5]
\draw (0,0.5) node{};
\draw (0.4,0) arc(0:180:0.4);
\draw (0.4,0) arc(360:300:0.4);
\draw (-0.4,0)  arc(180:240:0.4);
\draw (0.4,0)  arc(360:300:0.4);
    \begin{feynman}
             \vertex (vg1) at (-0.4,0);
             \vertex (vg2) at (0.4,0);
             \vertex (vq2) at (-0.2,-0.34);
             \vertex (vq1) at (0.2,-0.34);
             \vertex (d) at (0.6,0);
             \vertex (e) at (-0.6,0);
             \diagram* {
             (e) --  [dashed, dash pattern=on 1.5pt off 1.5pt]  (vg1);
             (d) --   [dashed, dash pattern=on 1.5pt off 1.5pt] (vg2);
             (vq2) -- [half left] (vq1);
             (vq2) -- [dashed, dash pattern=on 1.5pt off 1.5pt, half right] (vq1);
             };
    \end{feynman}
\end{tikzpicture}
    \end{minipage}
}

\newcommand{\intphithr}{
\begin{minipage}{1.8cm}
\begin{tikzpicture}[scale=1.5]
\draw (0.4,0) [dashed, dash pattern=on 1.5pt off 1.5pt] arc(0:360:0.4);
    \begin{feynman}
             \vertex (vg1) at (-0.4,0);
             \vertex (vg2) at (0.4,0);
             \vertex (vq2) at (-0.2,-0.34);
             \vertex (vq1) at (0.2,-0.34);
             \vertex (d) at (0.6,0);
             \vertex (e) at (-0.6,0);
             \diagram* {
             (vg1) --   [dashed, dash pattern=on 1.5pt off 1.5pt] (vg2);
             (e) --  [dashed, dash pattern=on 1.5pt off 1.5pt]  (vg1);
             (d) --   [dashed, dash pattern=on 1.5pt off 1.5pt] (vg2);
             };
    \end{feynman}
\end{tikzpicture}
    \end{minipage}
}

\newcommand{\nonpphi}{
\begin{minipage}{3.5cm}
\begin{tikzpicture}[scale=2]
    \begin{feynman}
             \vertex (vg1) at (-0.6,0);
             \vertex (vg2) at (0.6,0);
             \vertex (vq1) at (-0.3,0.3);
             \vertex (vq2) at (-0.3,-0.3);
             \vertex (vq3) at (0.3,-0.3);
             \vertex (vq4) at (0.3,0.3);
             \vertex (v1) at (0.05,-0.05);
             \vertex (v2) at (-0.05,0.05);
             \vertex (d) at (0.8,0);
             \vertex (e) at (-0.8,0);
             \diagram* {
             (vq1) --   (vg1);
             (vq3) --   (vg2);
             (vq2) --   (vg1);
             (vq4) --   (vg2);
             (vq3) --   (vq2);
             (vq4) --   (vq1);
             (vq1) -- [dashed, dash pattern=on 1.5pt off 1.5pt] (v2);
             (v1) -- [dashed, dash pattern=on 1.5pt off 1.5pt] (vq3);
             (vq4) -- [dashed, dash pattern=on 1.5pt off 1.5pt] (vq2);
             (vg1) -- [dashed, dash pattern=on 1.5pt off 1.5pt] (e);
             (d) -- [dashed, dash pattern=on 1.5pt off 1.5pt] (vg2);
             };
    \end{feynman}
\end{tikzpicture}
    \end{minipage}
}

\newcommand{\nonppsi}{
\begin{minipage}{3.5cm}
\begin{tikzpicture}[scale=2]
    \begin{feynman}
             \vertex (vg1) at (-0.6,0);
             \vertex (vg2) at (0.6,0);
             \vertex (vq1) at (-0.3,0.3);
             \vertex (vq2) at (-0.3,-0.3);
             \vertex (vq3) at (0.3,-0.3);
             \vertex (vq4) at (0.3,0.3);
             \vertex (v1) at (0.05,-0.05);
             \vertex (v2) at (-0.05,0.05);
             \vertex (d) at (0.8,0);
             \vertex (e) at (-0.8,0);
             \diagram* {
             (vq1) --   (vg1);
             (vq3) --   (vg2);
             (vq2) -- [dashed, dash pattern=on 1.5pt off 1.5pt]  (vg1);
             (vq4) --  [dashed, dash pattern=on 1.5pt off 1.5pt] (vg2);
             (vq3) --   (vq2);
             (vq4) --   (vq1);
             (vq1) -- [dashed, dash pattern=on 1.5pt off 1.5pt] (v2);
             (v1) -- [dashed, dash pattern=on 1.5pt off 1.5pt] (vq3);
             (vq4) -- (vq2);
             (vg1) --  (e);
             (d) -- (vg2);
             };
    \end{feynman}
\end{tikzpicture}
    \end{minipage}
}

\newcommand{\nonvqua}{
\begin{tikzpicture}[scale = 2]
\draw (-0.5,0.1) node[scale=0.8]{$a$};
    \draw (0.5,0.1) node[scale=0.8]{$c$};
    \draw (0.05,-0.15) node[scale=0.8]{$b$};
    \draw (-0.2,0.05) node[scale=0.8]{$d$};
    \draw (0.2,0.05) node[scale=0.8]{$e$};
    \draw (0,0.38) node[scale=0.8]{$f$};
    \begin{feynman}
             \vertex (vg1) at (-0.4,-0.1);
             \vertex (vq1) at (0,-0.4);
             \vertex (vQ1) at (-0.4,0.3);
             \vertex (vg2) at (0.4,-0.1);
             \vertex (vq2) at (0,0);
             \vertex (vQ2) at (0.4,0.3);
             \vertex (vs) at (0,0.3);
             \vertex (d) at (0.85,-0.1);
             \vertex (e) at (-0.85,-0.1);
             \vertex (f) at (0,0.6);
             \diagram* {
             (e) -- [fermion] (vg1);
             (vg2) -- [fermion] (d);
             (vg1) -- [fermion] (vq1);
             (vq1) --  [fermion] (vg2);
             (vQ2) --  (vQ1);
             (vQ1) --  (vq2);
             (vq2) --  (vQ2);
             (vg1) -- [scalar] (vQ1);
            (vQ2) -- [scalar] (vg2);
             (vq1) -- [scalar] (vq2);
             };
    \end{feynman}
\end{tikzpicture}
}

\newcommand{\flescapeo}{\begin{minipage}{5cm}
\begin{tikzpicture}[scale = 3]
        \begin{feynman}
             \vertex (vg1) at (-0.45,0);
             \vertex (vq1) at (-0.2,0.25);
             \vertex (vQ1) at (-0.2,-0.25);
             \vertex (vQ3) at (-0.2,0);
             \vertex (vq3) at (0.2,0);
             \vertex (vg2) at (0.45,0);
             \vertex (vq2) at (0.2,0.25);
             \vertex (vQ2) at (0.2,-0.25);
             \vertex (d) at (0.65,0);
             \vertex (e) at (-0.65,0);
             \diagram* {
             (vg1) --[scalar] (e);
             (vg1) -- [thick] (vq1);
             (vq1) -- [thick] (vQ1);
             (vQ1) -- [thick] (vg1);
             (vq1) -- [scalar] (vq2);
             (vQ1) -- [scalar] (vQ2);
             (vg2) -- [thick] (vq2);
             (vq2) -- [thick] (vQ2);
             (vq3) -- [scalar] (vQ3);
             (vQ2) -- [thick] (vg2);
             (vg2) --[scalar] (d);
             };
          \end{feynman}
\end{tikzpicture}
\end{minipage}}

\newcommand{\flescapet}{\begin{minipage}{5cm}
\begin{tikzpicture}[scale=3]
        \begin{feynman}
             \vertex (vg1) at (-0.45,0);
             \vertex (vq1) at (-0.2,0.25);
             \vertex (vQ1) at (-0.2,-0.25);
             \vertex (vQ3) at (-0.2,0);
             \vertex (vq3) at (0.2,0);
             \vertex (vg2) at (0.45,0);
             \vertex (vq2) at (0.2,0.25);
             \vertex (vQ2) at (0.2,-0.25);
             \vertex (d) at (0.65,0);
             \vertex (e) at (-0.65,0);
             \diagram* {
             (e) -- [fermion] (vg1);
             (vq1) -- [fermion] (vg1);
             (vq1) -- [scalar] (vQ3);
             (vQ1) -- [scalar] (vg1);
             (vq1) -- [fermion] (vq2);
             (vQ1) -- [thick] (vQ2);
             (vg2) -- [fermion] (vq2);
             (vq2) -- [scalar] (vq3);
             (vQ3) -- [thick] (vQ1);
             (vq3) -- [thick] (vQ2);
             (vq3) -- [thick] (vQ3);
             (vQ2) -- [scalar] (vg2);
             (vg2) -- [fermion] (d);
             };
          \end{feynman}
\end{tikzpicture}
\end{minipage}}

%% file: main.bbl
\begin{thebibliography}{10}

\bibitem{Martin:1997ns}
S.P.~Martin, \emph{{A Supersymmetry primer}},
  \href{https://doi.org/10.1142/9789812839657_0001}{\emph{Adv. Ser. Direct.
  High Energy Phys.} {\bfseries 18} (1998) 1}
  [\href{https://arxiv.org/abs/hep-ph/9709356}{{\ttfamily hep-ph/9709356}}].

\bibitem{ParticleDataGroup:2024cfk}
{\scshape Particle Data Group} collaboration, \emph{{Review of particle
  physics}}, \href{https://doi.org/10.1103/PhysRevD.110.030001}{\emph{Phys.
  Rev. D} {\bfseries 110} (2024) 030001}.

\bibitem{Gross:1974jv}
D.J.~Gross and A.~Neveu, \emph{{Dynamical Symmetry Breaking in Asymptotically
  Free Field Theories}},
  \href{https://doi.org/10.1103/PhysRevD.10.3235}{\emph{Phys. Rev. D}
  {\bfseries 10} (1974) 3235}.

\bibitem{Zerf:2017zqi}
N.~Zerf, L.N.~Mihaila, P.~Marquard, I.F.~Herbut and M.M.~Scherer,
  \emph{{Four-loop critical exponents for the Gross-Neveu-Yukawa models}},
  \href{https://doi.org/10.1103/PhysRevD.96.096010}{\emph{Phys. Rev. D}
  {\bfseries 96} (2017) 096010}
  [\href{https://arxiv.org/abs/1709.05057}{{\ttfamily 1709.05057}}].

\bibitem{Gracey:2025aoj}
J.A.~Gracey, A.~Maier, P.~Marquard and Y.~Schr{\"o}der, \emph{{Anomalous
  dimensions and critical exponents for the Gross-Neveu-Yukawa model at five
  loops}}, \href{https://doi.org/10.1103/lmpx-n3mj}{\emph{Phys. Rev. D}
  {\bfseries 112} (2025) 085029}
  [\href{https://arxiv.org/abs/2507.22594}{{\ttfamily 2507.22594}}].

\bibitem{Lee:2006if}
S.-S.~Lee, \emph{{Emergence of supersymmetry at a critical point of a lattice
  model}}, \href{https://doi.org/10.1103/PhysRevB.76.075103}{\emph{Phys. Rev.
  B} {\bfseries 76} (2007) 075103}
  [\href{https://arxiv.org/abs/cond-mat/0611658}{{\ttfamily
  cond-mat/0611658}}].

\bibitem{Grover:2013rc}
T.~Grover, D.N.~Sheng and A.~Vishwanath, \emph{{Emergent Space-Time
  Supersymmetry at the Boundary of a Topological Phase}},
  \href{https://doi.org/10.1126/science.1248253}{\emph{Science} {\bfseries 344}
  (2014) 280} [\href{https://arxiv.org/abs/1301.7449}{{\ttfamily 1301.7449}}].

\bibitem{Zerf:2016fti}
N.~Zerf, C.-H.~Lin and J.~Maciejko, \emph{{Superconducting quantum criticality
  of topological surface states at three loops}},
  \href{https://doi.org/10.1103/PhysRevB.94.205106}{\emph{Phys. Rev. B}
  {\bfseries 94} (2016) 205106}
  [\href{https://arxiv.org/abs/1605.09423}{{\ttfamily 1605.09423}}].

\bibitem{Grisaru:1979wc}
M.T.~Grisaru, W.~Siegel and M.~Rocek, \emph{{Improved Methods for
  Supergraphs}},
  \href{https://doi.org/10.1016/0550-3213(79)90344-4}{\emph{Nucl. Phys. B}
  {\bfseries 159} (1979) 429}.

\bibitem{Poole:2019kcm}
C.~Poole and A.E.~Thomsen, \emph{{Constraints on 3- and 4-loop
  $\beta$-functions in a general four-dimensional Quantum Field Theory}},
  \href{https://doi.org/10.1007/JHEP09(2019)055}{\emph{JHEP} {\bfseries 09}
  (2019) 055} [\href{https://arxiv.org/abs/1906.04625}{{\ttfamily
  1906.04625}}].

\bibitem{tHooft:1972tcz}
G.~'t~Hooft and M.J.G.~Veltman, \emph{{Regularization and Renormalization of
  Gauge Fields}},
  \href{https://doi.org/10.1016/0550-3213(72)90279-9}{\emph{Nucl. Phys. B}
  {\bfseries 44} (1972) 189}.

\bibitem{Bollini:1972ui}
C.G.~Bollini and J.J.~Giambiagi, \emph{{Dimensional Renormalization: The Number
  of Dimensions as a Regularizing Parameter}},
  \href{https://doi.org/10.1007/BF02895558}{\emph{Nuovo Cim. B} {\bfseries 12}
  (1972) 20}.

\bibitem{Breitenlohner:1977hr}
P.~Breitenlohner and D.~Maison, \emph{{Dimensional Renormalization and the
  Action Principle}}, \href{https://doi.org/10.1007/BF01609069}{\emph{Commun.
  Math. Phys.} {\bfseries 52} (1977) 11}.

\bibitem{Korner:1991sx}
J.G.~K{\"o}rner, D.~Kreimer and K.~Schilcher, \emph{{A practicable $\gamma_5$
  scheme in dimensional regularization}},
  \href{https://doi.org/10.1007/BF01559471}{\emph{Z. Phys. C} {\bfseries 54}
  (1992) 503}.

\bibitem{Avdeev:1982jx}
L.V.~Avdeev, S.G.~Gorishnii, A.Y.~Kamenshchik and S.A.~Larin, \emph{{Four-loop
  $\beta$ function in the {Wess-Zumino} Model}},
  \href{https://doi.org/10.1016/0370-2693(82)90727-4}{\emph{Phys. Lett. B}
  {\bfseries 117} (1982) 321}.

\bibitem{n2symat3loops}
M.~Chakraborty and S.-O.~Moch, \emph{{Dimensional Reduction is Supersymmetric
  at Three Loops}},  \href{https://arxiv.org/abs/2603.02892}{{\ttfamily
  2603.02892}}.

\bibitem{Passarino:1978jh}
G.~Passarino and M.J.G.~Veltman, \emph{{One Loop Corrections for $e^+ e^-$
  Annihilation Into $\mu^+ \mu^-$ in the Weinberg Model}},
  \href{https://doi.org/10.1016/0550-3213(79)90234-7}{\emph{Nucl. Phys. B}
  {\bfseries 160} (1979) 151}.

\bibitem{Kamenshchik:1983ufs}
A.Y.~Kamenshchik, \emph{{Dimensional regularization of chiral superfields}},
  \href{https://doi.org/10.1007/BF01015801}{\emph{Theor. Math. Phys.}
  {\bfseries 55} (1983) 431}.

\bibitem{Avdeev:1982xy}
L.V.~Avdeev and A.A.~Vladimirov, \emph{{Dimensional Regularization and
  Supersymmetry}},
  \href{https://doi.org/10.1016/0550-3213(83)90437-6}{\emph{Nucl. Phys. B}
  {\bfseries 219} (1983) 262}.

\bibitem{Velizhanin:2008rw}
V.N.~Velizhanin, \emph{{Three-loop renormalization of the N=1, N=2, N=4
  supersymmetric Yang-Mills theories}},
  \href{https://doi.org/10.1016/j.nuclphysb.2009.03.017}{\emph{Nucl. Phys. B}
  {\bfseries 818} (2009) 95} [\href{https://arxiv.org/abs/0809.2509}{{\ttfamily
  0809.2509}}].

\bibitem{Nogueira:1991ex}
P.~Nogueira, \emph{{Automatic Feynman Graph Generation}},
  \href{https://doi.org/10.1006/jcph.1993.1074}{\emph{J. Comput. Phys.}
  {\bfseries 105} (1993) 279}.

\bibitem{Vermaseren:2000nd}
J.A.M.~Vermaseren, \emph{{New features of FORM}},
  \href{https://arxiv.org/abs/math-ph/0010025}{{\ttfamily math-ph/0010025}}.

\bibitem{Kuipers:2012rf}
J.~Kuipers, T.~Ueda, J.A.M.~Vermaseren and J.~Vollinga, \emph{{\texttt{FORM}
  version 4.0}}, \href{https://doi.org/10.1016/j.cpc.2012.12.028}{\emph{Comput.
  Phys. Commun.} {\bfseries 184} (2013) 1453}
  [\href{https://arxiv.org/abs/1203.6543}{{\ttfamily 1203.6543}}].

\bibitem{Ruijl:2017dtg}
B.~Ruijl, T.~Ueda and J.~Vermaseren, \emph{{\texttt{FORM} version 4.2}},
  \href{https://arxiv.org/abs/1707.06453}{{\ttfamily 1707.06453}}.

\bibitem{Davies:2026cci}
J.~Davies, T.~Kaneko, C.~Marinissen, T.~Ueda and J.A.M.~Vermaseren,
  \emph{{\texttt{FORM} Version 5.0}},
  \href{https://arxiv.org/abs/2601.19982}{{\ttfamily 2601.19982}}.

\bibitem{minos}
J.~Vermaseren, \emph{{ \texttt{Minos}, {\rm
  {\url{https://www.nikhef.nl/~form/maindir/others/minos/minos.html}}} }}, .

\bibitem{Ruijl:2017cxj}
B.~Ruijl, T.~Ueda and J.A.M.~Vermaseren, \emph{{Forcer, a FORM program for the
  parametric reduction of four-loop massless propagator diagrams}},
  \href{https://doi.org/10.1016/j.cpc.2020.107198}{\emph{Comput. Phys. Commun.}
  {\bfseries 253} (2020) 107198}
  [\href{https://arxiv.org/abs/1704.06650}{{\ttfamily 1704.06650}}].

\bibitem{Mihaila:2017ble}
L.N.~Mihaila, N.~Zerf, B.~Ihrig, I.F.~Herbut and M.M.~Scherer,
  \emph{{Gross-Neveu-Yukawa model at three loops and Ising critical behavior of
  Dirac systems}},
  \href{https://doi.org/10.1103/PhysRevB.96.165133}{\emph{Phys. Rev. B}
  {\bfseries 96} (2017) 165133}
  [\href{https://arxiv.org/abs/1703.08801}{{\ttfamily 1703.08801}}].

\bibitem{Baikov:2010hf}
P.A.~Baikov and K.G.~Chetyrkin, \emph{{Four Loop Massless Propagators: An
  Algebraic Evaluation of All Master Integrals}},
  \href{https://doi.org/10.1016/j.nuclphysb.2010.05.004}{\emph{Nucl. Phys. B}
  {\bfseries 837} (2010) 186}
  [\href{https://arxiv.org/abs/1004.1153}{{\ttfamily 1004.1153}}].

\bibitem{Falcioni:2025hfz}
G.~Falcioni, F.~Herzog, S.~Moch, A.~Pelloni and A.~Vogt, \emph{{Additional
  results on the four-loop flavour-singlet splitting functions in QCD}},
  \href{https://arxiv.org/abs/2512.10783}{{\ttfamily 2512.10783}}.

\bibitem{Moch:2018wjh}
S.~Moch, B.~Ruijl, T.~Ueda, J.A.M.~Vermaseren and A.~Vogt, \emph{{On quartic
  colour factors in splitting functions and the gluon cusp anomalous
  dimension}},
  \href{https://doi.org/10.1016/j.physletb.2018.06.017}{\emph{Phys. Lett. B}
  {\bfseries 782} (2018) 627}
  [\href{https://arxiv.org/abs/1805.09638}{{\ttfamily 1805.09638}}].

\bibitem{Onishchenko:2003hs}
A.I.~Onishchenko and V.N.~Velizhanin, \emph{{Anomalous dimensions of twist-2
  conformal operators in supersymmetric Wess-Zumino model}},
  \href{https://arxiv.org/abs/hep-ph/0309222}{{\ttfamily hep-ph/0309222}}.

\bibitem{Vogelsang:1996im}
W.~Vogelsang, \emph{{The Spin dependent two loop splitting functions}},
  \href{https://doi.org/10.1016/0550-3213(96)00306-9}{\emph{Nucl. Phys. B}
  {\bfseries 475} (1996) 47}
  [\href{https://arxiv.org/abs/hep-ph/9603366}{{\ttfamily hep-ph/9603366}}].

\bibitem{Manashov:2025kgf}
A.N.~Manashov, S.~Moch and L.A.~Shumilov, \emph{{Anomalous dimensions at small
  spins}}, \href{https://doi.org/10.1007/JHEP09(2025)106}{\emph{JHEP}
  {\bfseries 09} (2025) 106}
  [\href{https://arxiv.org/abs/2506.05132}{{\ttfamily 2506.05132}}].

\bibitem{Vermaseren:1998uu}
J.A.M.~Vermaseren, \emph{{Harmonic sums, Mellin transforms and integrals}},
  \href{https://doi.org/10.1142/S0217751X99001032}{\emph{Int. J. Mod. Phys. A}
  {\bfseries 14} (1999) 2037}
  [\href{https://arxiv.org/abs/hep-ph/9806280}{{\ttfamily hep-ph/9806280}}].

\bibitem{Velizhanin:2010cm}
V.N.~Velizhanin, \emph{{Six-Loop Anomalous Dimension of Twist-Three Operators
  in N=4 SYM}}, \href{https://doi.org/10.1007/JHEP11(2010)129}{\emph{JHEP}
  {\bfseries 11} (2010) 129} [\href{https://arxiv.org/abs/1003.4717}{{\ttfamily
  1003.4717}}].

\bibitem{Siegel:1979wq}
W.~Siegel, \emph{{Supersymmetric Dimensional Regularization via Dimensional
  Reduction}}, \href{https://doi.org/10.1016/0370-2693(79)90282-X}{\emph{Phys.
  Lett. B} {\bfseries 84} (1979) 193}.

\bibitem{Siegel:1980qs}
W.~Siegel, \emph{{Inconsistency of Supersymmetric Dimensional Regularization}},
  \href{https://doi.org/10.1016/0370-2693(80)90819-9}{\emph{Phys. Lett. B}
  {\bfseries 94} (1980) 37}.

\bibitem{Novikov:1983uc}
V.A.~Novikov, M.A.~Shifman, A.I.~Vainshtein and V.I.~Zakharov, \emph{{Exact
  Gell-Mann-Low Function of Supersymmetric Yang-Mills Theories from Instanton
  Calculus}}, \href{https://doi.org/10.1016/0550-3213(83)90338-3}{\emph{Nucl.
  Phys. B} {\bfseries 229} (1983) 381}.

\end{thebibliography}
